\documentclass[a4paper,UKenglish,cleveref, autoref, thm-restate]{lipics-v2021}

\nolinenumbers 

\input xy 
\xyoption{all}
\usepackage{array}
\usepackage{adjustbox}
\usepackage{tikz-cd}
\usepackage{tikz}
\usetikzlibrary{cd,arrows}
\usepackage{mathpartir}
\usepackage{caption}
\usepackage{subcaption}
\usepackage{amsfonts}
\usepackage{amsmath}
\usepackage[renew-matrix,renew-dots]{nicematrix}
\usepackage{float}
\usepackage{enumitem}
\usepackage{xcolor}
\usepackage{booktabs}
\usepackage{rotating}
\usepackage{multirow}
\usepackage{comment}
\usepackage{graphicx}
\usepackage{braket}
\usepackage{enumitem}
\usepackage{xparse}
\usepackage{xstring}
\usepackage{mathtools}
\usepackage{xifthen}
\usepackage{makecell}
\usepackage{txfonts}
\usepackage{hyperref}
\usepackage{cleveref}
\hypersetup{
    colorlinks,
    linkcolor={black!50!black},
    citecolor={blue!50!black},
    urlcolor={blue!80!black}
}
\usepackage{mathtools}
\usepackage{changepage}

\usepackage{quiver}

\usepackage{todonotes}

\usepackage{tabularx}

\usepackage{wrapfig}
\input{macros}

\title{Completeness for Probabilistic Boolean Tapes}

\author{Filippo Bonchi}{University of Pisa}{}{https://orcid.org/0000-0002-3433-723X}{Bonchi is supported by the Ministero dell'Università e della Ricerca of Italy grant PRIN 2022 PNRR No. P2022HXNSC - RAP (Resource Awareness in Programming). This study was carried out within the National Centre on HPC, Big Data and Quantum Computing - SPOKE 10 (Quantum Computing) and received funding from the European Union Next-GenerationEU - National Recovery and Resilience Plan (NRRP) – MISSION 4 COMPONENT 2, INVESTMENT N. 1.4 – CUP N. I53C22000690001.}

\author{Cipriano Junior Cioffo}{University of Pisa}{}{https://orcid.org/0000-0002-4189-0930}{}

\authorrunning{F. Bonchi, C.J. Cioffo} 

\Copyright{Filippo Bonchi and Cipriano Junior Cioffo} 

\ccsdesc[500]{Theory of computation~Categorical semantics}

\keywords{String diagrams, Rig categories, synthetic probability theory} 

\category{} 

\relatedversion{}

\funding{This research was partly funded by the Advanced Research + Invention Agency (ARIA) Safeguarded AI Programme.}

\EventEditors{John Q. Open and Joan R. Access}
\EventNoEds{2}
\EventLongTitle{}
\EventShortTitle{CVIT 2016}
\EventAcronym{CVIT}
\EventYear{2016}
\EventDate{December 24--27, 2016}
\EventLocation{Little Whinging, United Kingdom}
\EventLogo{}
\SeriesVolume{42}
\ArticleNo{23}

\begin{document}

\maketitle

\begin{abstract}
Probabilistic Boolean circuits have recently been proposed as a string-diagrammatic foundation for finite probabilistic programming. In this paper, we present a complete set of axioms for their semantics in terms of Markov kernels. 
Our approach is based on two intermediate results: completeness for \emph{partial} Boolean circuits and completeness for probabilistic Boolean tapes, a diagrammatic language for rig categories.
\end{abstract}

\section{Introduction}

Diagrammatic languages play a central role in computer science, with classical examples including Petri nets, graph rewriting systems~\cite{corradini1997algebraic,ferrari2005synchronised}, and data- and control-flow diagrams~\cite{bohm1966flow}, among others. 
More recently, the increasing importance of spatial structure in computation led Milner to move beyond the traditional term-based syntax of process calculi and introduce bigraphs~\cite{milner2009space}. Similar trends arise in quantum computing \cite{RaussendorfBriegel2001,danos2007measurement,Abramsky2008:CQM,coecke2011interacting,Backens2014,Hadzihasanovic2015} and probabilistic programming \cite{Kschischang2001,Darwiche2003,PoonDomingos2011,stein2024probabilistic,DBLP:journals/pacmpl/LiellCockS25}, where--unlike in classical computation--information cannot be freely copied or discarded. More broadly, the shift towards viewing data as a physical resource rather than a purely logical entity (see, e.g.,~\cite{orchard2019quantitative,gaboardi2016combining}) has driven the widespread adoption~\cite{BaezErbele-CategoriesInControl,DBLP:journals/pacmpl/BonchiHPSZ19,Bonchi2015,Fong2015,DBLP:journals/corr/abs-2009-06836,Ghica2016,DBLP:conf/lics/MuroyaCG18,Piedeleu2021,DBLP:journals/jacm/BonchiGKSZ22,cho2019disintegration,jacobs2019causal,piedeleu2025boolean,DBLP:journals/corr/abs-2410-10627,DBLP:journals/corr/abs-2502-03477,DBLP:journals/corr/abs-2301-12989,jacobs2021logical,moss2023category,fritz2021finetti,fritz2023dilations,fritz2018bimonoidal,stein2024probabilistic,perrone2023markov,fritz2023d,fritz2023weakly} of \emph{string diagrams}~\cite{joyal1991geometry,Selinger2009} as a graphical syntax.

Formally, string diagrams are morphisms in the symmetric monoidal category freely generated by a monoidal signature $\sign$, which we denote by $\Diag{\sign}$. The signature contains a basic set of generating symbols and complex diagrams are built via horizontal ($;$) and vertical ($\otimes$) composition. Such diagrams are typically interpreted in a monoidal category $\Cat{C}$ via a map $\osem{\cdot}\colon \Diag{\sign} \to \Cat{C}$ assigning semantics to each diagram. When $\osem{\cdot}$ is a monoidal functor--i.e., it preserves both $;$ and $\otimes$--the semantics is compositional, and the induced notion of semantic equality forms a congruence. Axiomatizing this congruence is a central challenge, as it enables equational reasoning directly at the level of diagrams.

In this paper, we provide a complete axiomatisation of \emph{probabilistic Boolean circuits}~\cite{piedeleu2025boolean}. These circuits are string diagrams generated by the following monoidal signature, where $p$ belongs to $(0,1)$.
\vspace{-0.28cm}
\begin{equation}\label{eq:fullsyntax}\tag{PrB}
\overbrace{
  \underbrace{\Andgate \quad \Notgate \quad \Flip{1} \quad \CBcopier \quad \CBdischarger}_{\tiny B} \quad \CBcocopier
}^{{\tiny PB}}
\quad
\Flip{p}
\end{equation}
\vspace{-0.09cm}
The first three generators are the standard Boolean gates: $\Andgate$ takes two inputs and returns their conjunction; $\Notgate$ maps an input to its negation; and $\Flip{1}$ produces the constant $1$. The generator $\CBcopier$ duplicates its input, while $\CBdischarger$ discards it. The gate $\CBcocopier$ can be seen as a dual of $\CBcopier$: it takes two inputs and returns their common value if they agree, and otherwise fails. Finally, $\Flip{p}$ is a probabilistic generator that outputs $1$ with probability $p$ and $0$ with probability $1-p$.
A circuit $c$ with $n$ inputs and $m$ outputs denotes a function $\osem{c}\colon 2^n \to \Dis(2^m)$, where $2 = \{0,1\}$ and $\Dis(X)$ is the set of subdistributions over $X$. Categorically, the semantics $\osem{\cdot}$ is a functor from $\Diag{PrB}$, the category of probabilistic circuits, to $\KlD$, the Kleisli category of the subdistribution monad (see e.g.~\cite{hasuo2007generic}).

The brace below~\eqref{eq:fullsyntax} highlights an important fragment: diagrams in $\Diag{B}$ correspond to standard Boolean circuits, hence denote Boolean functions. As such, they can be freely copied and discarded, meaning respectively that the following equalities hold:          
\[

    \InputIfFileExists{tapes/cipriano/natcopierpuntinileft.tikz}{}{\input{./tikz/tapes/cipriano/natcopierpuntinileft.tikz}}

=
\raisebox{0.1em}{
    \InputIfFileExists{tapes/cipriano/natcopierpuntiniright.tikz}{}{\input{./tikz/tapes/cipriano/natcopierpuntiniright.tikz}}
}
\qquad

    \InputIfFileExists{tapes/cipriano/natdiscardpuntinileft.tikz}{}{\input{./tikz/tapes/cipriano/natdiscardpuntinileft.tikz}}

=
\raisebox{0.1em}{
    \InputIfFileExists{tapes/cipriano/natdiscardpuntiniright.tikz}{}{\input{./tikz/tapes/cipriano/natdiscardpuntiniright.tikz}}
}
\]

The brace above~\eqref{eq:fullsyntax} identifies a larger fragment: diagrams in $\Diag{PB}$ denote Boolean \emph{partial} functions. Indeed, $\CBcocopier$ may fail to produce an output. In this setting, copying is still free, but discarding is not: for instance, $\CBcocopier ; \CBdischarger$ and $\CBdischarger \otimes \CBdischarger$ are not equal.
Crucially, general probabilistic Boolean circuits cannot even be copied. For example, duplicating the result of a coin toss, as in $\Flip{p} ; \CBcopier$, is not equivalent to tossing two independent coins, as in $\Flip{p} \otimes \Flip{p}$.

While we refer the reader to~\cite{piedeleu2025boolean} for a discussion of the practical relevance of this formalism, we emphasise here the advantages of diagrammatic syntax over traditional term-based representations. Standard Boolean identities fail in the probabilistic setting--for instance, $x \wedge x = x$ does not hold when $x \in \mathcal{D}(2)$. However, these laws hold in diagrammatic form, where copying and discarding are made explicit; see, e.g.,~\eqref{ax:B5} in \Cref{tab:booleanalgebra}. The latter collects the axioms of Boolean algebra which, as shown in~\cite{piedeleu2025boolean}, yields a complete axiomatisation of the Boolean fragment $\Diag{B}$.

\smallskip
 Our first contribution is a complete axiomatisation of partial Boolean circuits, i.e., circuits in $\Diag{PB}$. The axioms extend those of Boolean algebra in \Cref{tab:booleanalgebra} with additional laws governing $\CBcocopier$, presented in \Cref{tab:partialbooleanalgebra}. Completeness --in Theorem~\ref{thm:completenesspartialcircuits}-- is crucial for the second contribution. 

For the full language, rather than axiomatise $\Diag{PrB}$ directly, we consider a more expressive formalism--\emph{probabilistic Boolean tapes}-- which as illustrated in~\cite[Ex.~30]{bonchi2025tapediagramsmonoidalmonads} can faithfully express  probabilistic Boolean circuits. Probabilistic Boolean tapes arise from a general construction applied to partial Boolean circuits. Concretely, by freely adding \emph{convex biproducts} to $\Diag{PB}$, we obtain a category $\CatT{\Diag{\SigPB}}$ representing the syntax, together with a semantic map $\dsem{\cdot}\colon \CatT{\Diag{\SigPB}} \to \KlD$
extending $\osem{\cdot}\colon \Diag{\SigPB}\to \KlD$.
Crucially, $\CatT{\Diag{\SigPB}}$ is, like $\KlD$, a \emph{rig category}~\cite{laplaza_coherence_1972,johnson2021bimonoidal}, featuring two monoidal structures: $\otimes$ and $\oplus$. Moreover, $\dsem{\cdot}$ is a morphism of rig categories, i.e.\ it preserves $;$, $\otimes$ and $\oplus$. As a consequence, the semantics remains compositional.
Finally, probabilistic Boolean tapes admit an intuitive graphical syntax, closely related to string diagrams. The raison d'\^etre of tape diagrams~\cite{bonchi2023deconstructing} is precisely to represent the three forms of composition--$\otimes$, $\oplus$, and $;$--uniformly within two dimensions, rather than requiring an additional dimension as e.g. in~\cite{comfort2020sheet}.

The move from circuits to tapes allows us to reason not only about the monoidal product $\otimes$ in $\KlD$, but also about the operation $\oplus$, which plays a fundamental role in many probabilistic frameworks~\cite{stein2024probabilistic,introductioneffectus,DBLP:journals/pacmpl/LiellCockS25}. As discussed in Example~\ref{ex:esempioSTRONG}, the presence of $\oplus$ provides a natural way of modelling probabilistic control and, perhaps more importantly, leads to a structurally transparent axiomatisation. In addition to the equalities induced by the tape construction in~\Cref{fig:tapesax}, our axiomatisation comprises the three additional laws shown in Figure~\ref{ax:BooleanTAPES}. The third of these is an implication expressing \emph{cancellativity}. Although not purely equational, cancellativity is a standard reasoning principle that arises naturally in a wide variety of algebraic structures.


Relying on completeness for partial Boolean circuits and on a result from~\cite{probbooltapesfossacs}--stated here as \Cref{cor:isotapematrices}--characterising tape diagrams as stochastic matrices of string diagrams, we prove completeness for probabilistic Boolean tapes in \Cref{cor:completenessPBPtapes}. Consequently, as stated in \Cref{cor:finale}, semantic equality of probabilistic Boolean circuits reduces to equality of their encodings as tapes. 

\smallskip
\noindent{\bf Related Work.} Figures~4 and~5 of~\cite{piedeleu2025boolean} provide an axiomatisation of probabilistic Boolean circuits. Crucially, the induced equivalence is strictly coarser than semantic equality, i.e., the one induced by $\osem{-}$. In fact, two diagrams $c$ and $d$ are equivalent in~\cite{piedeleu2025boolean} if and only if $\osem{c} \propto \osem{d}$, where $\propto$ is the relation on $\KlD$ defined by $f \propto g$ whenever there exists $\lambda > 0$ such that
$f(x)(y) = \lambda \cdot g(x)(y)$ for all $x \in X$ and $y \in Y$.
Hence, several axioms in~\cite{piedeleu2025boolean}--in particular F2 and F7--are unsound with respect to $\osem{\cdot}$. Interestingly, as shown in \Cref{lemma:failureequalities}, axiom F8 of~\cite{piedeleu2025boolean} is derivable from our system.

To the best of our knowledge, partial Boolean circuits have not previously been axiomatised. In contrast,~\cite{piedeleu2025boolean} gives a complete axiomatisation for the \emph{causal} fragment, i.e., circuits generated by all gates in~\eqref{eq:fullsyntax} except $\CBcocopier$. In~\cite{probbooltapesfossacs}, the characterisation of tapes as stochastic matrices is used to derive an alternative axiomatisation for the same fragment.
Despite exploiting the same result, the two axiomatisations differ substantially. Indeed, among all the axioms in Figures~\ref{tab:partialbooleanalgebra} and~\ref{ax:BooleanTAPES}, (T3) is the only one that is even expressible with the tapes in~\cite{probbooltapesfossacs}.

\smallskip
\noindent {\bf Synopsis. } Our presentation of probabilistic Boolean circuits is staged: we begin in \Cref{sec:boolcircuits} with $\Diag{\SigB}$, then $\Diag{\SigPB}$ in 
\Cref{ssec:partialboolean} and finally $\Diag{\SigPRB}$  in \Cref{ssec:probabilisticboolean}. Hence, the axiomatisation for partial Boolean circuits and its proof of completeness are illustrated in  \Cref{ssec:partialboolean}.  \Cref{sec:encodingprobboolcircuits} recalls from \cite{bonchi2025tapediagramsmonoidalmonads} probabilistic Boolean tapes as well as the encoding of circuits into tapes; \Cref{sec:preliminaries} illustrates several general results from \cite{probbooltapesfossacs} about the tape construction. These are used in  \Cref{sec:completenessprobbooltapes} to provide a complete axiomatisation of probabilistic Boolean tapes and prove its completeness. All missing proofs are in the appendices. Appendix \ref{app:sec:preliminaries} contains additional material: in particular the axioms in Table \ref{tabellone} already appear in the form of tapes in \Cref{fig:tapesax} in the main text.

\section{Boolean Circuits}\label{sec:boolcircuits}
We commence our exposition by recalling how Boolean circuits can be regarded as string diagrams.

A \emph{monoidal signature} is a tuple $(\sort, \sign, \ari, \coar)$ where $\sort$ is a set of basic sorts, hereafter denoted by $A,B,\dots$, $\sign$ is a set of generators, denoted by $s,t, \dots$, and $\ari,\coar \colon \sign \to \sort^*$ assign to each symbol its arity and coarity, i.e., words over $\sort$, denoted by $U,V, \dots$. We consider terms generated by the following context free grammar, where $A,B \in \sort$ and $s \in \Sigma$:
\begin{equation*}\label{stringdiagramGrammar}
\setlength{\arraycolsep}{3pt} 
\begin{array}{rcccccccccccccccc}
c & ::= & \id{A} & \mid & \id{\uno} & \mid & \gen & \mid & \symmt{A}{B} & \mid & c ; c & \mid & c \per c
\end{array}
\end{equation*}
The rules in Table~\ref{fig:freestricmmoncatax} assigns a type $U\to V$ to terms. Well-typed terms, taken modulo the axioms in the same table, are the arrows of $(\DiagS, \per, \uno)$, the strict symmetric monoidal category freely generated by $\sign$. Its objects are words in $\sort^*$. Arrows of $\DiagS$ admit a graphical representation in terms of \emph{string diagrams} \cite{joyal1991geometry,selinger2010survey}. The grammar above can be depicted diagrammatically as follows:
\begin{equation*}
    \setlength{\tabcolsep}{2pt}
    \begin{tabular}{rc c@{$\,\mid\,$}c@{$\,\mid\,$}c@{$\,\mid\,$}c@{$\,\mid\,$}c@{$\,\mid\,$}c@{$\,\mid\,$}c@{$\,\mid\,$}c@{$\,\mid\,$}c}
        $c$  & $\Coloneqq$ &  $\wire{A}$ & $ 
    \InputIfFileExists{empty.tikz}{}{\input{./tikz/empty.tikz}}
 $ & $ \Cgen{\gen}{A}{B} $ & $ \Csymm{A}{B} $ & $ 
    \InputIfFileExists{seq_compC.tikz}{}{\input{./tikz/seq_compC.tikz}}
 $ & $ 
    \InputIfFileExists{par_compC.tikz}{}{\input{./tikz/par_compC.tikz}}
$ \\
    \end{tabular}
\end{equation*}
The diagrammatic notation internalises the axioms in Table~\ref{fig:freestricmmoncatax}, yielding proofs that are both more concise and more intuitive. Throughout the paper, we will also rely on the term-based notation, which is often more succinct for inductive definitions.

A \emph{monoidal theory} $\mathbb{T}=(\sign, E)$ consists of a monoidal signature $\sign$ and a set $E$ of pairs of arrows of $\DiagS$ of the same type. Let $=_{\mathbb{T}}$ be the congruence closure (w.r.t. $;$ and $\otimes$) of $E$. We write $\DiagT$ for the monoidal category obtained as the quotient of $\DiagS$ by $=_{\mathbb{T}}$ and $Q_{\mathbb{T}}\colon \Diag{ \sign} \to \Diag{\mathbb{T}}$ for the monoidal functor mapping each diagram into its $=_{\mathbb{T}}$ equivalence class.

\begin{table}[H]
\resizebox{\textwidth}{!}{%
$
\begin{array}{c}
\toprule
\begin{array}{c}
    \begin{array}{ccc}
     {id_\uno \colon \uno \to \uno} &
        {id_A \colon A \to A} &
        {\sigma_{A, B}^{\per} \colon A \per B \to B \per A} 
   \\[0.3em]
        \inferrule{\ari(\gen)=U \quad \coar(\gen)= V}{\gen \colon U \to V} &
        \inferrule{c \colon U \to V \and d \colon V \to W}{c ; d \colon U \to W} &
        \inferrule{c \colon U_1 \to V_1 \and d \colon U_2 \to V_2}{c \per d\colon U_1 \per U_2 \to V_1 \per V_2}       
    \end{array}
\end{array}
\\[0.5em]
\begin{array}{cc}
\midrule
\begin{array}{c}
(c;d);e=c;(d;e) \qquad id_U;c=c=c;id_Q\\
(c_1\per c_2) ; (d_1 \per d_2) = (c_1;d_1) \per (c_2;d_2)
\end{array} 
&
\begin{array}{c}
id_{\uno}\per c = c = c \per id_{\uno} \qquad (c \per d)\, \per e = c \per \,(d \per e) \\
\sigma_{U, V}; \sigma_{V, U}= id_{U \per V} \qquad (\gen \per id_W) ; \sigma_{V, W} = \sigma_{U,W} ; (id_W \per \gen)
\end{array}
\end{array}\\
\bottomrule
\end{array}
$
}
\caption{Typing rules (top) and axioms (bottom) for freely generated strict symmetric monoidal categories.}
\label{fig:freestricmmoncatax}
\end{table}

Consider the monoidal signature with a single sort, $\sort \defeq \{ A\}$, and  generators 
\[ \SigB \defeq \{\Andgate  , \Notgate , \Flip{1},   \CBcopier, \CBdischarger \} \text{.}\]  
Arities and coarities are determined by the number of ports on the left and on the right: for instance, $\Andgate$ has arity $A^2=AA$ and coarity $A$, while $\CBdischarger $ has arity $A$ and coarity $A^0 =1$. 

The first three generators of $\SigB$ correspond to the operations and constants of Boolean algebras, namely $\wedge$, $\neg$, and $1$. 
The generator \emph{copy}, denoted by $\CBcopier$, takes a Boolean signal as input and produces two identical outputs. 
The generator \emph{discard}, denoted by $\CBdischarger$, takes a Boolean signal as input and discards it.
To make this  formal, we interpret each generator $s\in \SigB$ with arity $A^n$ and coarity $A^m$ as function $\osemB{s}\colon 2^n \to 2^m$ where $2$ is the set of Booleans  $\{0,1\}$ and $2^0$ is the singleton set $1\defeq \{\bullet\}$.
\begin{equation}\label{eq:sembool}
    \begin{array}{rclrclrcl}
      \osemB{\Andgate}\colon 2 \times 2& \to & 2  & \osemB{\Notgate}\colon  2& \to & 2 & \osemB{\Flip{1}} \colon 1 & \to & 2 \\
      (x,y)&\mapsto& x\wedge y &  x & \mapsto & \neg x & \bullet &\mapsto&1\\[4pt]
       \osemB{\CBcopier} \colon 2  & \to & 2\times 2 & \osemB{\CBdischarger} \colon 2 & \to & 1 & 
       \\
      x &\mapsto&(x,x) & x &\mapsto&{\bullet} & 
    \end{array}
  \end{equation}
Such interpretation gives rise to a symmetric monoidal functor $\osemB{-}$ from $(\Diag{\SigB}, \per, \uno)$ to $(\Sets,\per, \uno)$. The latter is the category of sets and functions where the monoidal tensor is simply cartesian product.
The  functor  $\osemB{-} \colon (\Diag{\SigB}, \per, \uno) \to (\Sets,\per, \uno)$ is defined on objects as $\osemB{A^n} \defeq 2^n$ and on arrows as
\begin{equation}\label{eq:smfunctor}
\osemB{\id{A}}\defeq \id{2} \;\;   \osemB{\id{\uno}}\defeq \id{1} \;\;  \osemB{\symmt{A}{A}}\defeq \symmt{2}{2}
 \;\; \osemB{ c ; d }\defeq \osemB{c} ; \osemB{d} \;\; \osemB{c \per d} = \osemB{c} \per \osemB{d}
%
\end{equation}

We regard diagrams as syntax and $\osemB{-}$ as the map assigning semantics to syntax. Semantic equality--i.e., the congruence induced by $\osemB{-}$-- is axiomatised by the laws in Figure~\ref{tab:booleanalgebra}.
The equalities in the first block are the standard axioms of Boolean algebras, where 
the constant $\Flip{0}$ and the \emph{or} gate $\Orgate$ are defined as expected (see Figure~\ref{table:multiplexer}). 
Note that, in the string-diagrammatic setting, the structural maps $\CBcopier$ and $\CBdischarger$ must be made explicit, for instance in \eqref{ax:B6} that is the usual law of non contradiction.
The equalities in the second block state that $\CBcopier$ and $\CBdischarger$ form a commutative comonoid. 
The equalities in the third and fourth blocks impose naturality of $\CBcopier$ and $\CBdischarger$, respectively.

We write $\ThB$ for the monoidal theory consisting of the signature $\SigB$ and the axioms  in Figure~\ref{tab:booleanalgebra}.


\begin{theorem}[From \cite{piedeleu2025boolean}]\label{thm:completenessbooleancircuits}
For all  $c,d\in \Diag{\SigB}$, $\osemB{c}=\osemB{d}$ iff $c\eqB d$.
\end{theorem}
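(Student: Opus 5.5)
Soundness is the easy direction. Since $\osemB{-}$ is a symmetric monoidal functor, the kernel $\{(c,d) \mid \osemB{c}=\osemB{d}\}$ is a congruence with respect to $;$ and $\per$. It therefore suffices to check that every axiom in Figure~\ref{tab:booleanalgebra} holds under the interpretation~\eqref{eq:sembool}:
\begin{itemize}
\item the Boolean algebra laws hold in $2$;
\item copy and discard form a commutative comonoid in $\Sets$;
\item every function is natural for the diagonal $x\mapsto(x,x)$ and for the unique map to $1$, because $\Sets$ is cartesian.
\end{itemize}

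For completeness, the plan is to show that $\Diag{\ThB}$ is a cartesian category and to reduce to term equality in Boolean algebra.

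First, the comonoid and naturality axioms of Figure~\ref{tab:booleanalgebra} say that $\CBcopier$ and $\CBdischarger$ equip every object with a natural, coherent commutative comonoid structure. By Fox's theorem, $\Diag{\ThB}$ is then cartesian, with $\otimes$ as product. Consequently every diagram $c\colon A^n\to A^m$ is $\eqB$-equal to a tupling $\CBcopier$-based pairing $\langle t_1,\dots,t_m\rangle$ of diagrams $t_i\colon A^n\to A$. Concretely, $c$ equals its copy followed by $m$ copies of $c$, each postcomposed with projections, which are built from $\CBdischarger$. By the universal property, two diagrams are equal iff their components are, and semantics is also computed componentwise. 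So it suffices to treat $m=1$.

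Next, I identify single-output diagrams $A^n\to A$ in a cartesian category with terms in $n$ variables over $\wedge,\neg,1$, modulo the axioms of Figure~\ref{tab:booleanalgebra} read as equations. This is the standard correspondence between Lawvere theories and algebraic theories: variables are projections, and operation application is pairing followed by the generator. Copying and discarding realise the duplication and erasure of variables, and naturality guarantees this correspondence respects substitution. In particular, an axiom such as \eqref{ax:B6}, stated with explicit copy and discard, translates exactly to $x\wedge\neg x=0$.

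Under this identification, $\osemB{t}$ is the Boolean function computed by the term $t$. If $\osemB{t}=\osemB{s}$, then $t=s$ holds in the two-element Boolean algebra, and hence in all Boolean algebras, since $2$ generates the variety. By completeness of equational logic, the Boolean algebra axioms derive $t=s$; concretely, both terms can be rewritten into the same disjunctive normal form. Translating this derivation back through the correspondence gives $t\eqB s$.

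The main obstacle is making the Lawvere-theory step rigorous in the strict symmetric monoidal presentation. One must verify that the diagrammatic axioms generate exactly the cartesian structure and the equational congruence, with nothing lost. The key point is that an equational derivation may substitute terms into variables, and such substitutions can duplicate or delete subterms. In the diagrammatic setting, this corresponds to pushing arbitrary diagrams through $\CBcopier$ and $\CBdischarger$, which is precisely what the naturality blocks of the axioms provide. I would check this by induction on equational derivations, handling the substitution and congruence rules via naturality.
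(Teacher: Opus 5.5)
Your argument is correct, but there is no in-paper proof to compare it with. The paper imports Theorem~\ref{thm:completenessbooleancircuits} from \cite{piedeleu2025boolean}. What it does contain is consistent with your route:
\begin{itemize}
\item Lemma~\ref{lemma:multiplexer}, whose laws \eqref{ax:N1} and \eqref{ax:N2} are exactly naturality of discard and copy for \emph{every} circuit in $\Diag{\SigB}$, obtained by structural induction from the generator-level axioms;
\item Proposition~\ref{prop:foolboolean}, whose fullness argument reads circuits as vectors of Boolean formulas, just as you do.
\end{itemize}
Your route is Fox's theorem for cartesianness, then identifying $\Diag{\ThB}[A^n,A]$ with Boolean terms modulo provability, then functional completeness of the two-element algebra. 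This is the standard proof of this folklore fact.

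Three points would tighten it.
\begin{itemize}
\item Fox's theorem needs naturality with respect to all arrows. The last two blocks of Figure~\ref{tab:booleanalgebra} state it only for $\wedge$, $\neg$ and $1$. You need the induction behind \eqref{ax:N1} and \eqref{ax:N2}, plus the comonoid laws for the cases of copy and discard themselves.
\item You do not need the full isomorphism with the Lawvere theory. Write $\tau(t)\colon A^n\to A$ for the diagram of a term $t$. It suffices that:
\begin{enumerate}
\item every diagram $A^n\to A$ is $\eqB$-equal to some $\tau(t)$;
\item provably equal terms have $\eqB$-equal images;
\item $\osemB{\tau(t)}$ is the term function of $t$ on $2$.
\end{enumerate}
The first two facts both rest on the substitution lemma $\tau(t[\vec{s}/\vec{x}])\eqB\langle\tau(s_1),\dots,\tau(s_n)\rangle;\tau(t)$. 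It is proved by induction on $t$ from naturality of copy and discard, and it is exactly the obstacle you identified.
\item Your one external input is that \eqref{ax:B1}--\eqref{ax:B7} form an equational basis for Boolean algebras over $\wedge,\neg,1$. The paper asserts this. If you go through DNF rather than Birkhoff, you must actually derive the needed rewrites (distributivity, De Morgan) from those seven laws.
\end{itemize}
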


\begin{figure}
  \scalebox{0.8}{%
  \begin{tabular}{ccccc}


  \mylabelbis{ax:B1}{B1}%
  $
    \InputIfFileExists{tapes/cipriano/axB1left.tikz}{}{\input{./tikz/tapes/cipriano/axB1left.tikz}}
\raisebox{-1.5ex}{\,$\stackrel{\eqref{ax:B1}}{=}$\,}
    \InputIfFileExists{tapes/cipriano/axB1right.tikz}{}{\input{./tikz/tapes/cipriano/axB1right.tikz}}
$
  & &
  \mylabelbis{ax:B2l}{B2l}\mylabelbis{ax:B2r}{B2r}%
  $
    \InputIfFileExists{tapes/cipriano/axB2left.tikz}{}{\input{./tikz/tapes/cipriano/axB2left.tikz}}
\raisebox{-1.5ex}{\,$\stackrel{\eqref{ax:B2l}}{=}$\,}
    \InputIfFileExists{tapes/cipriano/axB2middle.tikz}{}{\input{./tikz/tapes/cipriano/axB2middle.tikz}}
\raisebox{-1.5ex}{\,$\stackrel{\eqref{ax:B2r}}{=}$\,}
    \InputIfFileExists{tapes/cipriano/axB2right.tikz}{}{\input{./tikz/tapes/cipriano/axB2right.tikz}}
$
  & &
  \mylabelbis{ax:B3}{B3}%
  $
    \InputIfFileExists{tapes/cipriano/axB3left.tikz}{}{\input{./tikz/tapes/cipriano/axB3left.tikz}}
\raisebox{-1.5ex}{\,$\stackrel{\eqref{ax:B3}}{=}$\,}
    \InputIfFileExists{tapes/cipriano/axB3right.tikz}{}{\input{./tikz/tapes/cipriano/axB3right.tikz}}
$
  \\

  \mylabelbis{ax:B4}{B4}%
  $
    \InputIfFileExists{tapes/cipriano/axB4left.tikz}{}{\input{./tikz/tapes/cipriano/axB4left.tikz}}
\raisebox{-1.5ex}{\,$\stackrel{\eqref{ax:B4}}{=}$\,}
    \InputIfFileExists{tapes/cipriano/axB4right.tikz}{}{\input{./tikz/tapes/cipriano/axB4right.tikz}}
$
  & &
  \mylabelbis{ax:B5}{B5}%
  $
    \InputIfFileExists{tapes/cipriano/axB5left.tikz}{}{\input{./tikz/tapes/cipriano/axB5left.tikz}}
\raisebox{-1.5ex}{\,$\stackrel{\eqref{ax:B5}}{=}$\,}
    \InputIfFileExists{tapes/cipriano/axB5right.tikz}{}{\input{./tikz/tapes/cipriano/axB5right.tikz}}
$
  & &
  \mylabelbis{ax:B6}{B6}%
  $
    \InputIfFileExists{tapes/cipriano/axB6left.tikz}{}{\input{./tikz/tapes/cipriano/axB6left.tikz}}
\raisebox{-1.5ex}{\,$\stackrel{\eqref{ax:B6}}{=}$\,}
    \InputIfFileExists{tapes/cipriano/axB6right.tikz}{}{\input{./tikz/tapes/cipriano/axB6right.tikz}}
$
  \\

  & &
  \mylabelbis{ax:B7}{B7}%
  $
    \InputIfFileExists{tapes/cipriano/axB7left.tikz}{}{\input{./tikz/tapes/cipriano/axB7left.tikz}}
\raisebox{-1.5ex}{\,$\stackrel{\eqref{ax:B7}}{=}$\,}
    \InputIfFileExists{tapes/cipriano/axB7right.tikz}{}{\input{./tikz/tapes/cipriano/axB7right.tikz}}
$
  & &
  \\

  \midrule


  \mylabelbis{ax:A1}{B8}%
  $
    \InputIfFileExists{tapes/cipriano/axA1left.tikz}{}{\input{./tikz/tapes/cipriano/axA1left.tikz}}
\raisebox{-1.5ex}{\,$\stackrel{\eqref{ax:A1}}{=}$\,}
    \InputIfFileExists{tapes/cipriano/axA1right.tikz}{}{\input{./tikz/tapes/cipriano/axA1right.tikz}}
$
  & &
  \mylabelbis{ax:A2l}{B9l}\mylabelbis{ax:A2r}{B9r}%
  $
    \InputIfFileExists{tapes/cipriano/axA2left.tikz}{}{\input{./tikz/tapes/cipriano/axA2left.tikz}}
\raisebox{-1.5ex}{\,$\stackrel{\eqref{ax:A2l}}{=}$\,}
    \InputIfFileExists{tapes/cipriano/axA2middle.tikz}{}{\input{./tikz/tapes/cipriano/axA2middle.tikz}}
\raisebox{-1.5ex}{\,$\stackrel{\eqref{ax:A2r}}{=}$\,}
    \InputIfFileExists{tapes/cipriano/axA2right.tikz}{}{\input{./tikz/tapes/cipriano/axA2right.tikz}}
$
  & &
  \mylabelbis{ax:A3}{B10}%
  $
    \InputIfFileExists{tapes/cipriano/axA3left.tikz}{}{\input{./tikz/tapes/cipriano/axA3left.tikz}}
\raisebox{-1.5ex}{\,$\stackrel{\eqref{ax:A3}}{=}$\,}
    \InputIfFileExists{tapes/cipriano/axA3right.tikz}{}{\input{./tikz/tapes/cipriano/axA3right.tikz}}
$
  \\

  \midrule


  \mylabelbis{ax:C1}{B11}%
  $
    \InputIfFileExists{tapes/cipriano/axC1left.tikz}{}{\input{./tikz/tapes/cipriano/axC1left.tikz}}
\raisebox{-0.6ex}{\,$\stackrel{\eqref{ax:C1}}{=}$\,}
    \InputIfFileExists{tapes/cipriano/axC1right.tikz}{}{\input{./tikz/tapes/cipriano/axC1right.tikz}}
$
  & &
  \mylabelbis{ax:C2}{B12}%
  $
    \InputIfFileExists{tapes/cipriano/axC2left.tikz}{}{\input{./tikz/tapes/cipriano/axC2left.tikz}}
\raisebox{-0.6ex}{\,$\stackrel{\eqref{ax:C2}}{=}$\,}
    \InputIfFileExists{tapes/cipriano/axC2right.tikz}{}{\input{./tikz/tapes/cipriano/axC2right.tikz}}
$
  & &
  \mylabelbis{ax:C3}{B13}%
  $
    \InputIfFileExists{tapes/cipriano/axC3left.tikz}{}{\input{./tikz/tapes/cipriano/axC3left.tikz}}
\raisebox{-0.6ex}{\,$\stackrel{\eqref{ax:C3}}{=}$\,}
    \InputIfFileExists{tapes/cipriano/axC3right.tikz}{}{\input{./tikz/tapes/cipriano/axC3right.tikz}}
$
  \\

  \midrule


  \mylabelbis{ax:D1}{B14}%
  $
    \InputIfFileExists{tapes/cipriano/axD1left.tikz}{}{\input{./tikz/tapes/cipriano/axD1left.tikz}}
\raisebox{-0.6ex}{\,$\stackrel{\eqref{ax:D1}}{=}$\,}
    \InputIfFileExists{tapes/cipriano/axD1right.tikz}{}{\input{./tikz/tapes/cipriano/axD1right.tikz}}
$
  & &
  \mylabelbis{ax:D2}{B15}%
  $
    \InputIfFileExists{tapes/cipriano/axD2left.tikz}{}{\input{./tikz/tapes/cipriano/axD2left.tikz}}
\raisebox{-0.6ex}{$\stackrel{\eqref{ax:D2}}{=}$}
    \InputIfFileExists{tapes/cipriano/axD2right.tikz}{}{\input{./tikz/tapes/cipriano/axD2right.tikz}}
$
  & &
  \mylabelbis{ax:D3}{B16}%
  $
    \InputIfFileExists{tapes/cipriano/axD3left.tikz}{}{\input{./tikz/tapes/cipriano/axD3left.tikz}}
\raisebox{-0.6ex}{$\stackrel{\eqref{ax:D3}}{=}$}
  \raisebox{-0.5em}{
    \InputIfFileExists{empty.tikz}{}{\input{./tikz/empty.tikz}}
}$
  \\

  \end{tabular}}%
  \caption{The monoidal theory of Boolean algebras.}
  \label{tab:booleanalgebra}
\end{figure}

We conclude this section by fixing some syntactic sugar, collected in Figure~\ref{table:multiplexer}. 
The \emph{xnor} gate $\xnor$  takes two inputs and outputs $1$ if they are equal and $0$ otherwise. 
The \emph{multiplexer} $
    \InputIfFileExists{tapes/cipriano/multi.tikz}{}{\input{./tikz/tapes/cipriano/multi.tikz}}
$ takes three inputs and returns one output: if the first input is $1$, then it outputs the second input, otherwise the third input. 
The $m$-ary multiplexer $\Ifgatem$ works similarly, but it takes $2m+1$ inputs: if the first input is $1$, then it outputs the first $m$ inputs, otherwise it outputs the second $m$ inputs. 
Figure~\ref{table:multiplexer} also defines $m$-ary discard $\CBdischargerm$ and copier $\mcopier$. Note that, for the sake of readability of string diagrams, we label wires by  some natural number $n$ as an abbreviation for $A^n$.

\begin{lemma}\label{lemma:multiplexer}
 Let $c$ be a diagram in $\Diag{\SigB}[n,m]$. The following derived laws hold:
\mylabelbis{ax:M1}{D1}
\mylabelbis{ax:M2}{D2}
\mylabelbis{ax:M3}{D3}
\mylabelbis{ax:N1}{D4}
\mylabelbis{ax:N2}{D5}
{\setlength{\abovedisplayskip}{0.3em}
 \setlength{\belowdisplayskip}{0.3em}
\[
\begin{array}{c@{\qquad}c@{\qquad}c}
\raisebox{-0.3em}{\scalebox{0.8}{
    \InputIfFileExists{tapes/cipriano/D1left.tikz}{}{\input{./tikz/tapes/cipriano/D1left.tikz}}
}} 
\stackrel{\eqref{ax:M1}}{\eqB}

    \InputIfFileExists{tapes/cipriano/D1right.tikz}{}{\input{./tikz/tapes/cipriano/D1right.tikz}}

&
\raisebox{-0.3em}{\scalebox{0.8}{
    \InputIfFileExists{tapes/cipriano/D2left.tikz}{}{\input{./tikz/tapes/cipriano/D2left.tikz}}
}}
\stackrel{\eqref{ax:M2}}{\eqB}

    \InputIfFileExists{tapes/cipriano/D2right.tikz}{}{\input{./tikz/tapes/cipriano/D2right.tikz}}

&
\raisebox{-0.3em}{\scalebox{0.8}{
    \InputIfFileExists{tapes/cipriano/D3left.tikz}{}{\input{./tikz/tapes/cipriano/D3left.tikz}}
}}
\stackrel{\eqref{ax:M3}}{\eqB}

    \InputIfFileExists{tapes/cipriano/D3right.tikz}{}{\input{./tikz/tapes/cipriano/D3right.tikz}}

\\[0.5em]

    \InputIfFileExists{tapes/cipriano/D4left.tikz}{}{\input{./tikz/tapes/cipriano/D4left.tikz}}

\stackrel{\eqref{ax:N1}}{\eqB}

    \begin{tikzpicture}
	\begin{pgfonlayer}{nodelayer}
		\node [style=none] (11) at (0, 0) {};
		\node [style=none] (12) at (3, 0) {};
		\node [style=black] (13) at (3, 0) {};
		\node [style=none, scale=0.6] (19) at (2.5, 0.5) {$n$};
	\end{pgfonlayer}
	\begin{pgfonlayer}{edgelayer}
		\draw (11.center) to (12.center);
	\end{pgfonlayer}
\end{tikzpicture}
}

&

    \InputIfFileExists{tapes/cipriano/D5left.tikz}{}{\input{./tikz/tapes/cipriano/D5left.tikz}}

\stackrel{\eqref{ax:N2}}{\eqB}

    \InputIfFileExists{tapes/cipriano/D5right.tikz}{}{\input{./tikz/tapes/cipriano/D5right.tikz}}

&
\end{array}
\]}
\end{lemma}

\begin{figure}[t]
\centering
\resizebox{0.96\linewidth}{!}{%
\begin{tabular}{@{}l@{}}

$\Orgate \defeq (\Notgate \otimes \Notgate); \Andgate; \Notgate 
\qquad 
\Flip{0} \defeq \Flip{1}; \Notgate
\qquad

    \InputIfFileExists{tapes/cipriano/multi.tikz}{}{\input{./tikz/tapes/cipriano/multi.tikz}}
 
\defeq 

    \InputIfFileExists{tapes/cipriano/multiexpl.tikz}{}{\input{./tikz/tapes/cipriano/multiexpl.tikz}}
$ \\[4pt]

$
    \InputIfFileExists{tapes/cipriano/xnorgate.tikz}{}{\input{./tikz/tapes/cipriano/xnorgate.tikz}}

\qquad
\raisebox{-1em}{
    \InputIfFileExists{tapes/cipriano/ncopierzero.tikz}{}{\input{./tikz/tapes/cipriano/ncopierzero.tikz}}
} 
\raisebox{-0.5em}{$\defeq$}
\raisebox{-0.5em}{
    \InputIfFileExists{empty.tikz}{}{\input{./tikz/empty.tikz}}
}
\qquad 
\raisebox{-1em}{
    \InputIfFileExists{tapes/cipriano/ncopier.tikz}{}{\input{./tikz/tapes/cipriano/ncopier.tikz}}
}$ \\[6pt]

$\raisebox{-1.2em}{
    \InputIfFileExists{tapes/cipriano/mmultiplexerleftzero.tikz}{}{\input{./tikz/tapes/cipriano/mmultiplexerleftzero.tikz}}
} 
\defeq 

    \InputIfFileExists{tapes/cipriano/mmultiplexerrightzero.tikz}{}{\input{./tikz/tapes/cipriano/mmultiplexerrightzero.tikz}}

\qquad 
\raisebox{-1.2em}{
    \InputIfFileExists{tapes/cipriano/mmultiplexerleft.tikz}{}{\input{./tikz/tapes/cipriano/mmultiplexerleft.tikz}}
} 
\defeq 

    \InputIfFileExists{tapes/cipriano/mmultiplexerright.tikz}{}{\input{./tikz/tapes/cipriano/mmultiplexerright.tikz}}
$ \\[6pt]

$\mFlip{1}{0} \defeq 
    \InputIfFileExists{empty.tikz}{}{\input{./tikz/empty.tikz}}

\qquad
\mFlip{1}{m+1} \defeq \Flip{1} \otimes \mFlip{1}{m}
\qquad
\mCBdischarger{0} \defeq 
    \InputIfFileExists{empty.tikz}{}{\input{./tikz/empty.tikz}}

\qquad
\mCBdischarger{m+1} \defeq \CBdischarger\otimes \mCBdischarger{m}$

\end{tabular}
}
\caption{Definitions of or gate, $\Flip{0}$ and multiplexer (top); definitions of xnor gate and inductive definition of $n$-ary copier (middle);  $m$-ary multiplexer (third row); $\Flipm{1}$ and $n$-ary dischargers (bottom).}
\label{table:multiplexer}
\end{figure}

%
%
%
%
\section{Partial Boolean Circuits}\label{ssec:partialboolean}
Now consider the monoidal signature $\SigPB\defeq \SigB \cup\{\CBcocopier\}$ obtained by adding to $\SigB$ the generator \emph{cocopy} $\CBcocopier$.
Intuitively, $\CBcocopier$ is the dual of $\CBcopier$: it compares two inputs and, if these are equal, it outputs that value; otherwise, it produces no output. Thus, $\CBcocopier$ does not denote a function but a \emph{partial function}.

Let $(\Cat{Par}, \otimes, 1)$  be the category of sets and partial functions, where $\otimes$  is the cartesian product of sets and, for all partial functions $f_1\colon X_1 \to Y_1$ and $f_2\colon X_2 \to Y_2$, $f_1\otimes f_2 \colon X_1 \times X_2 \to Y_1 \times Y_2$ is defined for all $(x_1,x_2)\in X_1\times X_2$ as
\begin{equation}\label{eq:productpar}
f_1\otimes f_2 (x_1,x_2) \defeq \begin{cases} (\,f_1(x_1), \, f_2(x_2)\,) & \text{ if }f_1(x_1)\neq \bot\text{, } f_2(x_2)\neq \bot \\
 \bot & \text{ otherwise}\end{cases}
\end{equation}
Above and in the rest of the paper, we write $f(x)=\bot$ to say that $f$ is undefined on $x$. 
The obvious injection $\Pa\colon (\Sets, \per, \uno) \to (\Cat{Par}, \per, \uno)$ is easily proved to be a symmetric monoidal functor.

The generators in $\SigPB$ can all be interpreted as arrows of $\Cat{Par}$: $\CBcocopier $ is interpreted as
  \begin{equation}\label{eq:semcocopier}
    \begin{array}{rcl}
      \osemPB{\CBcocopier}\colon 2 \times 2& \to & 2   \\
      (x,y)&\mapsto& \begin{cases} x &\text{if } x=y\\ \bot & \text{else}\end{cases}    \end{array}
  \end{equation}
while generators $s \in \SigB$ as $\osemPB{s}\defeq \Pa(\osemB{s})$. The inductive extension, analogous to \eqref{eq:smfunctor}, of the above interpretation provides a symmetric monoidal functor $\osemPB{-}\colon \Diag{\SigPB} \to \Cat{Par}$ mapping each diagram in its semantics. For example, the semantics of the following diagrams
\begin{equation}\label{def:bottom}
  \raisebox{0cm}{
    \InputIfFileExists{tapes/cipriano/defcoflip.tikz}{}{\input{./tikz/tapes/cipriano/defcoflip.tikz}}
}
\end{equation}
is illustrated below.
  \[
    \begin{array}{rclrclrcl}
      \osemPB{\coflip{1} }\colon 2 & \to & 1  & \osemPB{\coflip{0} }\colon 2 & \to & 1 & \osemPB{\Flip{\bot}}\colon 1 & \to & 2 \\
      x&\mapsto& \begin{cases} \bullet &\text{if } x=1\\ \bot & \text{else}\end{cases}    
      &
      x&\mapsto& \begin{cases} \bullet &\text{if } x=0\\ \bot & \text{else}\end{cases}    
      &
      \bullet &\mapsto & \bot     
      \end{array}
  \]

In this section, we present  a complete axiomatization of the equality induced by $\osemPB{-}$.
The axioms are illustrated in Figure~\ref{tab:partialbooleanalgebra}: the equalities in the first row assert that $\CBcocopier$ is associative, commutative and idempotent; in the second row, \eqref{ax:F4} and \eqref{ax:F5} are the usual Frobenius equalities ruling the interaction of $\CBcocopier$ with $\CBcopier$; the remaining axioms  are more peculiar: \eqref{ax:F7} states that $x\wedge y = 1$ iff $x=y=1$; \eqref{ax:F6} provides a decomposition of  $\CBcocopier$ in terms of $\Andgate$, 
    \InputIfFileExists{tapes/cipriano/xnorgatealone.tikz}{}{\input{./tikz/tapes/cipriano/xnorgatealone.tikz}}
 and $\coflip{1}$. 
We write $\ThPB$ for the monoidal theory consisting of the signature $PB$ and the axioms in Figures~\ref{tab:booleanalgebra} and~\ref{tab:partialbooleanalgebra}. Simple computations confirm that the axioms are sound.
\begin{proposition}[Soundness]\label{prop:soundnesspartialb}
For all $c,d\in \Diag{PB}$, if $c\eqPB d$, then $\osemPB{c}=\osemPB{d}$.
\end{proposition}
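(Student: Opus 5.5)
Soundness will follow from the standard argument for monoidal theories. The relation $\eqPB$ is, by definition, the congruence closure (with respect to $;$ and $\per$) of the axioms in Figures~\ref{tab:booleanalgebra} and~\ref{tab:partialbooleanalgebra}, together with the structural laws of Table~\ref{fig:freestricmmoncatax}. Meanwhile $\osemPB{-}\colon \Diag{\SigPB}\to\Cat{Par}$ is a symmetric monoidal functor. Hence the kernel $\{(c,d)\mid \osemPB{c}=\osemPB{d}\}$ is itself a congruence containing the structural laws. This is because $\Cat{Par}$ is a symmetric monoidal category and $\osemPB{c;d}=\osemPB{c};\osemPB{d}$, $\osemPB{c\per d}=\osemPB{c}\per\osemPB{d}$. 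We argue by induction on the derivation of $c\eqPB d$, and the congruence steps are immediate. It therefore suffices to check that each axiom is satisfied by $\osemPB{-}$.

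For the axioms of Figure~\ref{tab:booleanalgebra}, we use \Cref{thm:completenessbooleancircuits}, or rather its soundness direction. Every such axiom relates two diagrams in $\Diag{\SigB}$ with equal semantics in $\Sets$. On $\Diag{\SigB}$ we have $\osemPB{-}=\Pa\circ\osemB{-}$, since both are monoidal functors agreeing on generators. Applying the functor $\Pa$ then gives equality in $\Cat{Par}$. Alternatively, each axiom can be checked directly, by evaluating both sides on the finitely many inputs in $2^n$.

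The only genuine work concerns the axioms of Figure~\ref{tab:partialbooleanalgebra}, which involve $\CBcocopier$. Each side is a partial function $2^n\to 2^m$ with $n\le 3$, so we verify them by exhaustive case analysis on inputs. Two rules guide the computation. Composition in $\Cat{Par}$ propagates $\bot$. By \eqref{eq:productpar}, a tensor is undefined as soon as one component is undefined.

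The checks run as follows.
\begin{itemize}
\item Associativity, commutativity and idempotence of $\CBcocopier$: both sides are defined exactly when all inputs agree, and then return the common value.
\item The Frobenius laws \eqref{ax:F4} and \eqref{ax:F5}: both sides are defined iff the two inputs coincide, and then output the pair $(x,x)$.
\item Axiom \eqref{ax:F7}: the map $(x,y)\mapsto \osemPB{\coflip{1}}(x\wedge y)$ is defined iff $x=y=1$, exactly like $\coflip{1}\otimes\coflip{1}$.
\item Axiom \eqref{ax:F6}: $\CBcocopier$ agrees with the composite that copies $x$ and $y$, computes $x$ (or $x\wedge y$) together with the xnor of $x,y$, and then filters through $\coflip{1}$ on the xnor output. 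That composite is defined iff $x=y$, and then returns $x$.
\end{itemize}
The main obstacle is only bookkeeping. One has to read the precise shape of \eqref{ax:F6} and of the derived gates from Figure~\ref{table:multiplexer} and \eqref{def:bottom}, and correctly track where undefinedness arises. Conceptually nothing is difficult.
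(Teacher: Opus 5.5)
Your proposal is correct and follows the paper's proof. Like the paper, you reduce soundness to checking each axiom and verify the partial-Boolean axioms by direct computation of partial functions, where only (P6) and (P7) need any care. You additionally spell out two points the paper leaves implicit: the congruence/functoriality reduction, and the soundness of the Boolean axioms via $\osemPB{-}=\Pa\circ\osemB{-}$ on $\Diag{\SigB}$ together with the soundness half of Theorem~\ref{thm:completenessbooleancircuits}.
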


\begin{figure}
\centering
\scalebox{0.8}{
{
\setlength{\tabcolsep}{12pt}      
\setlength{\extrarowheight}{9pt}  
\renewcommand{\arraystretch}{1.0}

\begin{tabular}{ccc}

\hspace{1.5em}%
\mylabelbis{ax:F1}{P1}%
$
    \InputIfFileExists{tapes/cipriano/axF1left.tikz}{}{\input{./tikz/tapes/cipriano/axF1left.tikz}}

 \raisebox{-0.6em}{\,\,$\stackrel{\eqref{ax:F1}}{=}$\,\,}
 
    \InputIfFileExists{tapes/cipriano/axF1right.tikz}{}{\input{./tikz/tapes/cipriano/axF1right.tikz}}
$
&
\hspace{1.5em}%
\mylabelbis{ax:F2}{P2}%
$
    \InputIfFileExists{tapes/cipriano/axF2left.tikz}{}{\input{./tikz/tapes/cipriano/axF2left.tikz}}

 \raisebox{-0.6em}{\,$\stackrel{\eqref{ax:F2}}{=}$\,}
 
    \InputIfFileExists{tapes/cipriano/axF2right.tikz}{}{\input{./tikz/tapes/cipriano/axF2right.tikz}}
$
&
\hspace{1.5em}%
\mylabelbis{ax:F3}{P3}%
$
    \InputIfFileExists{tapes/cipriano/axF3left.tikz}{}{\input{./tikz/tapes/cipriano/axF3left.tikz}}

 \raisebox{-0.6em}{\,\,$\stackrel{\eqref{ax:F3}}{=}$\,\,}
 
    \InputIfFileExists{tapes/cipriano/axF3right.tikz}{}{\input{./tikz/tapes/cipriano/axF3right.tikz}}
$
\\[1.2em]

\multicolumn{3}{c}{%
  \scalebox{0.85}{%
    \mylabelbis{ax:F4}{P4}%
    \mylabelbis{ax:F5}{P5}%
    $
    \InputIfFileExists{tapes/cipriano/axF4left.tikz}{}{\input{./tikz/tapes/cipriano/axF4left.tikz}}

      \,\stackrel{\eqref{ax:F4}}{=}\,
      
    \InputIfFileExists{tapes/cipriano/axF4right.tikz}{}{\input{./tikz/tapes/cipriano/axF4right.tikz}}

      \,\stackrel{\eqref{ax:F5}}{=}\,
      
    \InputIfFileExists{tapes/cipriano/axF5.tikz}{}{\input{./tikz/tapes/cipriano/axF5.tikz}}
$%
  }%
  \hspace{3em}%
  \mylabelbis{ax:F6}{P6}%
  $
    \InputIfFileExists{tapes/cipriano/axF6left.tikz}{}{\input{./tikz/tapes/cipriano/axF6left.tikz}}

    \,\stackrel{\eqref{ax:F6}}{=}\,
    
    \InputIfFileExists{tapes/cipriano/axF6right.tikz}{}{\input{./tikz/tapes/cipriano/axF6right.tikz}}
$%
  \hspace{3em}%
  \mylabelbis{ax:F7}{P7}%
  $
    \InputIfFileExists{tapes/cipriano/axF7left.tikz}{}{\input{./tikz/tapes/cipriano/axF7left.tikz}}

    \,\stackrel{\eqref{ax:F7}}{=}\,
    
    \InputIfFileExists{tapes/cipriano/axF7right.tikz}{}{\input{./tikz/tapes/cipriano/axF7right.tikz}}
$%
}

\end{tabular}
}
}
\caption{The monoidal theory of partial Boolean algebras.}
\label{tab:partialbooleanalgebra}
\end{figure}

The remainder of this section is devoted to proving the converse implication, namely completeness. 
Although this fact is not required for our proof, it is worth noting that the axiom \textsc{F8} in~\cite{piedeleu2025boolean}--corresponding to \eqref{ax:F8} in the following lemma--can be derived within $\ThPB$. 

\begin{lemma}\label{lemma:failureequalities}
  The following derived laws hold:
  \begin{center}
    \mylabelbis{ax:F9}{D6}%
    $
    \begin{tikzpicture}
	\begin{pgfonlayer}{nodelayer}
		\node [style=none] (176) at (-3.25, 0) {};
		\node [style=none] (178) at (-1.25, 0) {$\coflip{1}$};
		\node [style=none] (179) at (1.5, 0) {$\Flip{1}$};
	\end{pgfonlayer}
	\begin{pgfonlayer}{edgelayer}
		\draw (176.center) to (178.center);
	\end{pgfonlayer}
\end{tikzpicture}
}

      \stackrel{\eqref{ax:F9}}{\eqPB}
      
    \InputIfFileExists{tapes/cipriano/axF9right.tikz}{}{\input{./tikz/tapes/cipriano/axF9right.tikz}}
$
    \qquad
    \mylabelbis{ax:F8}{D7}%
    $
    \InputIfFileExists{tapes/cipriano/axF8left.tikz}{}{\input{./tikz/tapes/cipriano/axF8left.tikz}}

      \stackrel{\eqref{ax:F8}}{\eqPB}
      
    \InputIfFileExists{tapes/cipriano/axF8right.tikz}{}{\input{./tikz/tapes/cipriano/axF8right.tikz}}
$
    \qquad
    \mylabelbis{ax:F10l}{D8l}%
    \mylabelbis{ax:F10r}{D8r}%
    $
    \begin{tikzpicture}
	\begin{pgfonlayer}{nodelayer}
		\node [style=none] (0) at (-0.75, 0) {$\Flip{1}$};
		\node [style=none] (1) at (1, 0) {$\coflip{1}$};
	\end{pgfonlayer}
\end{tikzpicture}
}

      \stackrel{\eqref{ax:F10l}}{\eqPB}
      
    \InputIfFileExists{tapes/cipriano/axF10middle.tikz}{}{\input{./tikz/tapes/cipriano/axF10middle.tikz}}

      \stackrel{\eqref{ax:F10r}}{\eqPB}
      
    \begin{tikzpicture}
	\begin{pgfonlayer}{nodelayer}
		\node [style=none] (2) at (-0.75, 0) {$\Flip{0}$};
		\node [style=none] (3) at (1, 0) {$\coflip{0}$};
	\end{pgfonlayer}
\end{tikzpicture}
}
$
  \end{center}
\end{lemma}

The strategy for proving completeness is as follows. 
First, we show that any diagram $c$ in $\Diag{PB}$ can be decomposed into two Boolean circuits $D_c$ and $T_c$ in $\Diag{\SigB}$ (Proposition~\ref{prop:decompositionpartialcircuits}). 
We then appeal to the completeness result for Boolean circuits (Theorem~\ref{thm:completenessbooleancircuits}).
\begin{definition}
For all $c\in \Diag{PB}[A^n,A^m]$, the Boolean circuits $D_c\in \Diag{\SigB}[A^n,A]$ and $T_c\in\Diag{\SigB}[A^n,A^m]$ are inductively defined as
\begin{equation}\label{eq:defTD}  \begin{array}{rcl@{\hspace{1.5cm}}rcl}
    D_c &\defeq& \CBdischargern\, \Flip{1} &       T_c &\defeq& c \qquad \text{for all $c\in \SigB \cup \{\id{1},\id{A},\symmt{A}{A}\}$;}\\
    D_{\scalebox{0.6}{$\CBcocopier$}} &\defeq& 
    \InputIfFileExists{tapes/cipriano/xnorgatealone.tikz}{}{\input{./tikz/tapes/cipriano/xnorgatealone.tikz}}
 &     T_{\scalebox{0.6}{$\CBcocopier$}} &\defeq& \Andgate\\
    D_{c;d} &\defeq &\quad 
    \InputIfFileExists{tapes/cipriano/Dfg.tikz}{}{\input{./tikz/tapes/cipriano/Dfg.tikz}}
 &  T_{c;d} &\defeq &\quad 
    \InputIfFileExists{tapes/cipriano/Tfg.tikz}{}{\input{./tikz/tapes/cipriano/Tfg.tikz}}
\\
    D_{c\otimes d} &\defeq & \quad 
    \InputIfFileExists{tapes/cipriano/Dfperg.tikz}{}{\input{./tikz/tapes/cipriano/Dfperg.tikz}}
 & T_{c\otimes d} &\defeq& \quad 
    \InputIfFileExists{tapes/cipriano/Tfperg.tikz}{}{\input{./tikz/tapes/cipriano/Tfperg.tikz}}
\\
  \end{array}
  \end{equation}
\end{definition}
Intuitively, $D_c$ and $T_c$ represent, respectively, the domain and the total component of $c$. 
The diagram $D_c$ returns $1$ if the input of $c$ lies in the domain of definition of $c$, and $0$ otherwise. 
The diagram $T_c$ returns the output of $c$ when the input lies in its domain, and a vector of $0$ otherwise.
For instance, the domain of $\CBcocopier$ is the $\mathrm{xnor}$ gate: it returns $1$ if and only if the two inputs are equal. 
Its total part is given by the $\wedge$ gate: it returns $1$ precisely when both inputs are $1$.
Similarly, the domain of $c \otimes d$ is given by the conjunction of the domains of $c$ and $d$. 
Its total part returns the total parts of $c$ and $d$ when the domain evaluates to $1$, and $0$ otherwise.

\begin{proposition}\label{prop:decompositionpartialcircuits}
  For all $c\in \Diag{PB}$, it holds that {
    \InputIfFileExists{tapes/cipriano/decompositionpartial.tikz}{}{\input{./tikz/tapes/cipriano/decompositionpartial.tikz}}
}.
  \end{proposition}
\begin{proof}
  We proceed by induction on the structure of $c$. For the base case, if $c\in \SigB \cup \{\id{1},\id{A},\symmt{A}{A}\}$, then the statement follows by
  \begin{center}
    
    \InputIfFileExists{tapes/cipriano/propdecomposition1.tikz}{}{\input{./tikz/tapes/cipriano/propdecomposition1.tikz}}

  \end{center} 
  If $c$ is $\CBcocopier$, then the statement is exactly axiom \eqref{ax:F6} in Figure~\ref{tab:partialbooleanalgebra}. Now, in the case $c;d$, then assuming that the statement holds for $c$ and $d$, we have
  \begin{center}
    
    \InputIfFileExists{tapes/cipriano/propdecomposition2.tikz}{}{\input{./tikz/tapes/cipriano/propdecomposition2.tikz}}

    \end{center}
    \begin{center}
    
    \InputIfFileExists{tapes/cipriano/propdecomposition3.tikz}{}{\input{./tikz/tapes/cipriano/propdecomposition3.tikz}}

    \end{center}
    \begin{center}
    
    \InputIfFileExists{tapes/cipriano/propdecomposition4.tikz}{}{\input{./tikz/tapes/cipriano/propdecomposition4.tikz}}

    \end{center}
    \begin{center}
    
    \InputIfFileExists{tapes/cipriano/propdecomposition5.tikz}{}{\input{./tikz/tapes/cipriano/propdecomposition5.tikz}}

    \end{center}
    where now the statement follows by inductive hypothesis. Finally, the case $c\otimes d$ is left in Appendix~\ref{app:sec:probboolcircuits}.
\end{proof}


Now observe that, by Theorem~\ref{thm:completenessbooleancircuits}, every $b\in \Diag{\SigB}[1,A^n]$ is $\eqB$-equal (hence $\eqPB$-equal) to a Boolean vector, i.e., a circuit of the form $\bigotimes_{i=1}^n b_i$ for $b_i \in \{\Flip{0}, \Flip{1}\}$ if $n>0$, and $
    \InputIfFileExists{empty.tikz}{}{\input{./tikz/empty.tikz}}
$ if $n=0$. 

\begin{lemma}\label{lemma:complete1}
  Let $b\in \Diag{\SigB}[1,A^n]$ and $c\in \Diag{\SigPB}[A^n,A^m]$. If $b;D_c \eqPB \Flip{0}$, then $b;T_c\eqPB\Flipm{0}$.
\end{lemma}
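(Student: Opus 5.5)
Since $b;D_c$ and $b;T_c$ are closed Boolean circuits, the natural route is through semantics and Theorem~\ref{thm:completenessbooleancircuits}, after checking that $D_c$ and $T_c$ really compute the domain and the total component of $c$. Concretely, I will prove by induction on $c$ the semantic invariant
\[
\text{for every } x\in 2^n:\quad \osemB{D_c}(x)=0 \;\Longrightarrow\; \osemB{T_c}(x)=(0,\dots,0)\in 2^m .
\]
With this invariant, the lemma follows quickly. By soundness (Proposition~\ref{prop:soundnesspartialb}), the hypothesis $b;D_c\eqPB\Flip{0}$ gives $\osemPB{b;D_c}=\osemPB{\Flip{0}}$. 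All circuits involved lie in $\Diag{\SigB}$, and $\osemPB{-}$ restricted to them is $\Pa(\osemB{-})$, with $\Pa$ injective on arrows. Hence $\osemB{D_c}(x)=0$, where $x=\osemB{b}(\bullet)$. The invariant then yields $\osemB{b;T_c}=\osemB{\Flipm{0}}$, so Theorem~\ref{thm:completenessbooleancircuits} gives $b;T_c\eqB\Flipm{0}$. Since $\eqB\subseteq\eqPB$, we conclude $b;T_c\eqPB\Flipm{0}$.

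The core of the work is the inductive check of the invariant against the definition \eqref{eq:defTD}. In the base cases $c\in\SigB\cup\{\id{1},\id{A},\symmt{A}{A}\}$, $D_c$ is $\CBdischargern;\Flip{1}$, which always outputs $1$, so the invariant holds vacuously. For $c=\CBcocopier$, $D_c$ is xnor and $T_c$ is $\Andgate$. If xnor gives $0$ then $x\neq y$, so $x\wedge y=0$.

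For $c\otimes d$, I read the intended semantics off the pictures: $D_{c\otimes d}$ is $D_c\wedge D_d$ on split inputs, and $T_{c\otimes d}$ outputs $T_c\otimes T_d$ gated by that conjunction, via the $m$-ary multiplexer with a zero vector, or via componentwise $\wedge$. Either way, when the domain bit is $0$ the output is the zero vector, regardless of the induction hypothesis.

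For $c;d$, the domain $D_{c;d}$ should be $D_c(x)\wedge D_d(T_c(x))$. The total part $T_{c;d}$ is $T_d(T_c(x))$, possibly gated by $D_{c;d}$. If it is gated, we are done immediately. If it is not gated, there are two cases. If $D_c(x)=1$ then $D_d(T_c(x))=0$, and the hypothesis on $d$ gives zeros. If $D_c(x)=0$, then the hypothesis on $c$ gives $T_c(x)=\vec 0$, and we need $T_d(\vec 0)=\vec 0$ whenever $D_d(\vec 0)=0$, which is again the hypothesis on $d$. But if $D_d(\vec 0)=1$ and $T_d(\vec 0)\neq \vec 0$ the argument fails, so the ungated reading would require gating.

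The main obstacle is therefore matching the precise diagrams in \eqref{eq:defTD}, namely Dfg/Tfg and Dfperg/Tfperg, to these semantic readings. In particular I must confirm that $T_{c;d}$ and $T_{c\otimes d}$ are indeed gated by the corresponding domain bits, which is what the informal description ``a vector of $0$ otherwise'' asserts.

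As an alternative that avoids semantics, one could argue purely diagrammatically. Use Proposition~\ref{prop:decompositionpartialcircuits}, substitute $b$, and use that $b$ is $\eqB$-equal to a Boolean vector and is copyable, via naturality (B11--B13). This would reduce everything to the statement that $\Flip{0}$ feeding the gating multiplexer forces $\Flipm{0}$, by (D1--D3). I would, however, take the semantic route above as the primary plan.
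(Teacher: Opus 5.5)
Your proof is correct, modulo the check against the pictures that you flag yourself, and it takes a different route from the paper's.

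The paper proves the syntactic statement directly by induction on $c$. The base cases are dispatched immediately. For the Boolean and structural generators, $b;D_c$ is $\Flip{1}$, so the hypothesis never holds. For cocopy, $b$ must be one of the two vectors with distinct entries, and \eqref{ax:B2l}/\eqref{ax:B2r} give $\Flip{0}$. The cases $c;d$ and $c\otimes d$ are explicit diagrammatic derivations in $\ThPB$, the latter after splitting $b\eqPB b_1\otimes b_2$.

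You instead prove the semantic invariant $\osemB{D_c}(x)=0\Rightarrow\osemB{T_c}(x)=\vec 0$ for all $x\in 2^n$. You pass to $\eqPB$ only once, at the end, via Proposition~\ref{prop:soundnesspartialb}, Theorem~\ref{thm:completenessbooleancircuits} and $\eqB\subseteq\eqPB$, and that final step is sound. Every $x\in 2^n$ is denoted by some Boolean vector $b$, so your invariant is actually equivalent to the lemma and the strengthening costs nothing. In exchange, each inductive case becomes a truth-table check, and the case distinctions (on the value of $b;D_c$, on splitting $b$) come for free. The paper's route gives explicit equational derivations, but it does not avoid semantics either: it already uses Theorem~\ref{thm:completenessbooleancircuits} to normalise $b$ into a Boolean vector.

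On the obstacle you flag, the paper's own gloss on \eqref{eq:defTD} gives exactly the reading you need. It says $T_c$ returns the output of $c$ on its domain and a vector of $0$ otherwise, and that $T_{c\otimes d}$ is gated by the conjunction of the two domains. Your analysis of the ungated case in fact shows that gating of $T_{c;d}$ is forced by the statement itself. With $T_{c;d}=T_c;T_d$, take $c$ to be cocopy, $d$ negation and $b$ the vector $(1,0)$: then $b;D_{c;d}$ denotes $0$ while $b;T_c;T_d$ denotes $1$, so the lemma would fail.

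Your fallback does not work as sketched. Substituting $b$ into Proposition~\ref{prop:decompositionpartialcircuits} tells you nothing about $b;T_c$: when $b;D_c$ is $0$, the composite $b;c$ is undefined whatever $T_c$ is. A diagrammatic proof has to unfold the definitions of $T_{c;d}$ and $T_{c\otimes d}$ inductively, which is what the paper does.
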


\begin{lemma}\label{lemma:complete3}
  For all $b\in \Diag{\SigB}[1,A^n]$ and $c,d\in \Diag{PB}[A^n,A^m]$, 
  \begin{center} if $\osemPB{b;c}=\osemPB{b;d}$, then $\osemPB{b;D_c}=\osemPB{b;D_d}$ and $\osemPB{b;T_c}=\osemPB{b;T_d}$.\end{center}
\end{lemma}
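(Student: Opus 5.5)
The approach is purely semantic. We first establish, by induction on the structure of $c$, that for every $c\in\Diag{\SigPB}[A^n,A^m]$ and every $x\in 2^n$:
\begin{itemize}
\item $\osemPB{D_c}(x)=1$ iff $\osemPB{c}(x)\neq\bot$;
\item $\osemPB{T_c}(x)=\osemPB{c}(x)$ whenever $\osemPB{c}(x)\neq\bot$;
\item $\osemPB{T_c}(x)$ is the zero vector whenever $\osemPB{c}(x)=\bot$.
\end{itemize}
This is exactly the intuitive reading of $D_c$ and $T_c$ given after their definition, and it can be checked clause by clause against \eqref{eq:defTD}.

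For the base cases:
\begin{itemize}
\item If $c\in\SigB\cup\{\id{1},\id{A},\symmt{A}{A}\}$, then $D_c$ is constantly $1$, $T_c=c$, and $c$ is total, so all three properties hold.
\item For $c=\CBcocopier$, $D_c$ is xnor and $T_c$ is $\Andgate$. Then $x\wedge y=x$ when $x=y$, and when $x\neq y$ we get $x\wedge y=0$, which is the required zero output.
\end{itemize}

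For the inductive cases:
\begin{itemize}
\item For $c\otimes d$, the domain is the conjunction of $D_c$ and $D_d$, and the total part is a multiplexer selecting $T_c\otimes T_d$ or zeros according to that domain bit. The properties then follow from \eqref{eq:productpar}.
\item For $c;d$, we have to read off the pictures in \eqref{eq:defTD}. $D_{c;d}$ should compute $D_c(x)\wedge D_d(T_c(x))$, and $T_{c;d}$ should multiplex $T_d(T_c(x))$ against zeros using that bit. The only subtle point is that when $c$ is undefined, $T_c(x)$ is zero and $D_d(0)$ could be anything, so the conjunction with $D_c(x)$ is essential. Verifying that the definition does include this guard is the main (bookkeeping) obstacle.
\end{itemize}
As a sanity check, this invariant is consistent with \Cref{prop:decompositionpartialcircuits} together with \Cref{prop:soundnesspartialb}.

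With the invariant in hand, the lemma follows. Since $b$ is a closed Boolean circuit, by \Cref{thm:completenessbooleancircuits} and soundness its semantics is a single vector $x\in 2^n$, so $\osemPB{b;e}=\osemPB{e}(x)$ for any $e$. From $\osemPB{c}(x)=\osemPB{d}(x)$ we split into two cases:
\begin{itemize}
\item \textbf{Both are $\bot$.} The invariant gives $\osemPB{D_c}(x)=0=\osemPB{D_d}(x)$, and both $\osemPB{T_c}(x)$ and $\osemPB{T_d}(x)$ are the zero vector, so they agree.
\item \textbf{Both are defined and equal.} Then both domain bits are $1$, and $\osemPB{T_c}(x)=\osemPB{c}(x)=\osemPB{d}(x)=\osemPB{T_d}(x)$.
\end{itemize}
In either case $\osemPB{b;D_c}=\osemPB{b;D_d}$ and $\osemPB{b;T_c}=\osemPB{b;T_d}$.

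An alternative to proving the zero-vector clause semantically is to use \Cref{lemma:complete1} with soundness. If $\osemPB{D_c}(x)=0$, then $b;D_c\eqB\Flip{0}$ by Boolean completeness, hence $b;T_c\eqPB\Flipm{0}$, and soundness gives the zero vector. I would use this route to avoid re-deriving that clause, keeping the induction only for the first two properties.
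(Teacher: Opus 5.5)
Your proposal is correct and follows the paper's proof. The paper's \Cref{lemma:complete2} is exactly your first two invariant clauses, and your fallback for the zero-vector clause (\Cref{lemma:complete1} with Boolean completeness and soundness) is what the paper does, followed by the same two-case split. The one difference is that the paper derives the first two clauses from \Cref{prop:decompositionpartialcircuits} and \Cref{prop:soundnesspartialb} rather than by a new semantic induction, which avoids the unverified bookkeeping you flag in the $c;d$ case.
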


\begin{theorem}[Completeness]\label{thm:completenesspartialcircuits}
  For all $c,d\in \Diag{PB}[A^n,A^m]$, if $\osemPB{c}=\osemPB{d}$ then $c=_{\mathbb{\SigPB }}d$.
\end{theorem}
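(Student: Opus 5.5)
The plan is to reduce the equality of $c$ and $d$ to equalities between Boolean circuits, so that Theorem~\ref{thm:completenessbooleancircuits} can be applied. By Proposition~\ref{prop:decompositionpartialcircuits}, $c$ is $\eqPB$-equal to the diagram obtained by copying the input, feeding one copy into $D_c$ and the other into $T_c$, and then filtering the output of $T_c$ through $\coflip{1}$ applied to $D_c$. The same holds for $d$ with $D_d$ and $T_d$. It therefore suffices to show $D_c \eqB D_d$ and $T_c \eqB T_d$; the decomposition then yields $c \eqPB d$. Both equalities will be checked semantically: by Theorem~\ref{thm:completenessbooleancircuits} it is enough to prove $\osemB{D_c}=\osemB{D_d}$ and $\osemB{T_c}=\osemB{T_d}$, which amounts to agreement on every Boolean input vector.

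\textbf{Step 1.} Fix $x\in 2^n$ and let $b\in\Diag{\SigB}[1,A^n]$ be the corresponding Boolean vector. Since $\osemPB{c}=\osemPB{d}$ and $\osemPB{-}$ is functorial, we get $\osemPB{b;c}=\osemPB{b;d}$. Lemma~\ref{lemma:complete3} then gives $\osemPB{b;D_c}=\osemPB{b;D_d}$ and $\osemPB{b;T_c}=\osemPB{b;T_d}$. These diagrams are all Boolean, so $\osemPB{-}=\Pa(\osemB{-})$ on them, and $\Pa$ is faithful. Hence $\osemB{D_c}(x)=\osemB{D_d}(x)$ and $\osemB{T_c}(x)=\osemB{T_d}(x)$.

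\textbf{Step 2.} Since $x$ was arbitrary, $\osemB{D_c}=\osemB{D_d}$ and $\osemB{T_c}=\osemB{T_d}$. Theorem~\ref{thm:completenessbooleancircuits} yields $D_c\eqB D_d$ and $T_c\eqB T_d$, hence also $\eqPB$-equality because $\ThPB$ extends $\ThB$. Substituting into the decomposition of Proposition~\ref{prop:decompositionpartialcircuits} for $c$ and $d$ gives $c\eqPB d$.

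\textbf{Main obstacle.} The real difficulty lies in Lemma~\ref{lemma:complete3}, which is assumed here. Its content is that on inputs outside the domain the total part is normalised, and this is the job of Lemma~\ref{lemma:complete1}. The issue is that $T_c$ is only pinned down semantically once one knows it outputs $\Flipm{0}$ whenever $D_c$ is $0$.

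A possible concern is that the argument proves something stronger than needed, namely $\osemB{T_c}=\osemB{T_d}$ everywhere, including outside the domain. This is exactly what Lemma~\ref{lemma:complete3} asserts, since it is stated for every Boolean vector $b$, including those on which $D_c$ evaluates to $0$. So the plan goes through, and the proof of the theorem itself is a short assembly of the cited results.
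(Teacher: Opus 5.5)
Your proposal is correct and follows the paper's own proof essentially step for step. You precompose with every Boolean vector $b$, use Lemma~\ref{lemma:complete3} to get pointwise agreement of the domain and total parts, apply Theorem~\ref{thm:completenessbooleancircuits} to obtain $D_c\eqB D_d$ and $T_c\eqB T_d$, and conclude via Proposition~\ref{prop:decompositionpartialcircuits}. Your remark that $\Pa$ is faithful just spells out the paper's implicit step from $\osemPB{b;D_c}=\osemPB{b;D_d}$ for all $b$ to $\osemB{D_c}=\osemB{D_d}$.
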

\begin{proof}
  For all $b\in \Diag{\SigB}[1,A^n]$, functoriality of $\osemPB{-}$ and the hypothesis entail that
\[    \osemPB{b;c} =\osemPB{b};\osemPB{c} 
    =\osemPB{b};\osemPB{d} 
    =\osemPB{b;d}\text{.}\]
By  Lemma~\ref{lemma:complete3}, $\osemPB{b;D_c}=\osemPB{b;D_d}$ and $\osemPB{b;T_c}=\osemPB{b;T_d}$ for all $b\in \Diag{\SigB}[1,A^n]$. Hence $\osemPB{D_c} = \osemPB{D_d}$ and $\osemPB{T_c} = \osemPB{T_d}$. Now, since $D_c,D_d,T_c$ and $T_d$ are Boolean circuits in $\Diag{\SigB}$, then by Theorem~\ref{thm:completenessbooleancircuits} it holds that $D_c\eqB D_d$ and $T_c\eqB T_d$. Finally, by Proposition~\ref{prop:decompositionpartialcircuits}, since $c$ and $d$ decompose through the same domain and total part, we have that $c\eqPB d$.
\end{proof}
Moreover, one can characterise exactly the image of $\Diag{\SigPB}$ through $\osemPB{-}$. Let $\Cat{Par}_2$ be the full subcategory of $\Cat{Par}$ where objects are powers of the set $2$, i.e.,  $2^n$ for all $n \in \mathbb{N}$.

\begin{proposition}\label{prop:isopartialboolean}
The  functor  $\osemPB{-}\colon \Diag{{\SigPB }} \to \Cat{Par}$ factors as 
\[\xymatrix{ \Diag{\SigPB } \ar@{->>}[r]^{Q_{\mathbb{\SigPB }}}& \Diag{\mathbb{\SigPB }} \ar[r]^{\cong}& \Cat{Par}_2 \ar@{^{(}->}[r]& \Cat{Par} }\]
where the rightmost functors is the obvious injections and the central arrow is an isomorphism.
\end{proposition}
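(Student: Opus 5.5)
Since $\osemPB{-}$ is a symmetric monoidal functor and, by soundness (Proposition~\ref{prop:soundnesspartialb}), it identifies $\eqPB$-equal diagrams, it factors uniquely through the quotient $Q_{\mathbb{\SigPB}}\colon \Diag{\SigPB}\to\Diag{\mathbb{\SigPB}}$. Moreover, every object $A^n$ is sent to $2^n$, so the induced functor $F\colon \Diag{\mathbb{\SigPB}}\to\Cat{Par}$ lands in the full subcategory $\Cat{Par}_2$. It therefore remains to prove that the corestriction $F\colon\Diag{\mathbb{\SigPB}}\to\Cat{Par}_2$ is an isomorphism of categories.

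On objects $F$ is bijective, since $A^n\mapsto 2^n$ and distinct $n$ yield distinct sets. Faithfulness is exactly completeness: if $F[c]=F[d]$, then $\osemPB{c}=\osemPB{d}$, so $c\eqPB d$ by Theorem~\ref{thm:completenesspartialcircuits}. The main work is therefore fullness: every partial function $f\colon 2^n\to 2^m$ must be the semantics of some $c\in\Diag{\SigPB}[A^n,A^m]$.

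For fullness I would reuse the domain/total decomposition, read backwards. Given $f$, define the total Boolean functions $\delta\colon 2^n\to 2$ with $\delta(x)=1$ iff $f(x)\neq\bot$, and $\tau\colon 2^n\to 2^m$ with $\tau(x)=f(x)$ when $f(x)\neq\bot$ and $\tau(x)=0^m$ otherwise. I would use the fact that $\osemB{-}$ is full onto Boolean functions $2^n\to 2^m$: any such function is expressible by $\wedge,\neg,1$ together with copy and discard, for instance componentwise via disjunctive normal form. This gives Boolean circuits $D,T\in\Diag{\SigB}$ with $\osemB{D}=\delta$ and $\osemB{T}=\tau$. Let $c$ be the diagram appearing in Proposition~\ref{prop:decompositionpartialcircuits}, built from $D$ and $T$: copy the input, feed one copy to $D$ followed by $\coflip{1}$, and the other copy to $T$. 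Its semantics is then $x\mapsto \tau(x)$ when $\delta(x)=1$ and $\bot$ otherwise, which is exactly $f$.

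The main obstacle is this expressiveness step, namely ensuring that $\Diag{\SigB}$ denotes every Boolean function. If no earlier result states this, I would prove it by induction on $n$ using the multiplexer of Figure~\ref{table:multiplexer}. A function $2^{n+1}\to 2^m$ is a multiplexer choosing, according to the first bit, between two functions $2^n\to 2^m$. The base case $n=0$ consists of Boolean vectors of $\Flip{0}$ and $\Flip{1}$. With that step in place, the computation of the composite semantics in $\Cat{Par}$ is routine.
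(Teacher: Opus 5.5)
Your proposal is correct and follows the paper's proof. You factor through the quotient by soundness, obtain faithfulness from Theorem~\ref{thm:completenesspartialcircuits}, and obtain fullness by writing a partial function as the decomposition of Proposition~\ref{prop:decompositionpartialcircuits}, built from a total domain map and a total part, both realised by Boolean circuits (the paper records this expressiveness fact as Proposition~\ref{prop:foolboolean}). Your explicit check that the functor is bijective on objects fills in a step the paper leaves implicit.
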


Theorem \ref{thm:completenesspartialcircuits} also provides a useful characterisation of diagrams in $\Diag{\SigPB}[1,A^n]$. We fix $\Flipn{\bot}\defeq \Flip{\bot};\ncopierbis\colon A \to A^n$ where $\ncopierbis\colon A \to A^n$ is  inductively defined as follows.
\begin{center}
  
    \InputIfFileExists{tapes/cipriano/ncopierbis.tikz}{}{\input{./tikz/tapes/cipriano/ncopierbis.tikz}}

\end{center}

\begin{lemma}\label{lemma:caratterizzazionecircuiti0n}
  For all $c\in \Diag{\SigPB}[1,A^n]$, either $c\eqPB \Flipn{\bot}$ or $c\eqPB b$ for some $b\in \Diag{\SigB}[1,A^n]$.
\end{lemma}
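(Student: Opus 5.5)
The plan is to argue semantically and then transfer the result back to the syntax using Theorem~\ref{thm:completenesspartialcircuits}. Take $c\in \Diag{\SigPB}[1,A^n]$. Its semantics $\osemPB{c}\colon 1 \to 2^n$ is a partial function on the singleton set $1=\{\bullet\}$. So exactly one of two things holds: either $\osemPB{c}(\bullet)=\bot$, or $\osemPB{c}(\bullet)=(v_1,\dots,v_n)$ for some Boolean vector $v\in 2^n$. I would treat these two cases separately. In each case I exhibit a diagram with the same semantics as $c$, and completeness then turns semantic equality into $\eqPB$.

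\emph{Defined case.} Suppose $\osemPB{c}(\bullet)=v$. Let $b\defeq\bigotimes_{i=1}^n b_i$, where $b_i=\Flip{1}$ if $v_i=1$ and $b_i=\Flip{0}$ if $v_i=0$; when $n=0$, take $b$ to be the empty diagram. Then $b\in\Diag{\SigB}[1,A^n]$. Since $\osemPB{-}$ is a monoidal functor, and by \eqref{eq:sembool} together with $\Flip{0}=\Flip{1};\Notgate$, we get $\osemPB{b}(\bullet)=v$. Hence $\osemPB{c}=\osemPB{b}$, and Theorem~\ref{thm:completenesspartialcircuits} gives $c\eqPB b$.

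\emph{Undefined case.} Suppose $\osemPB{c}(\bullet)=\bot$. I first compute $\osemPB{\Flipn{\bot}}$. By \eqref{def:bottom}, $\osemPB{\Flip{\bot}}$ is the totally undefined map $1\to 2$. Composing an undefined map with anything in $\Cat{Par}$ stays undefined. So $\osemPB{\Flip{\bot};\ncopierbis}(\bullet)=\bot$, whatever $\ncopierbis$ is. Therefore $\osemPB{c}=\osemPB{\Flipn{\bot}}$, and completeness yields $c\eqPB\Flipn{\bot}$.

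There is one typing issue to settle first: $\Flipn{\bot}$ must have type $1\to A^n$. The definition states that $\ncopierbis\colon A\to A^n$ and that $\Flip{\bot}\colon 1\to A$, so the composite does have type $1\to A^n$. The case $n=0$ also works: there $\ncopierbis$ is presumably a discard, and the partial map $1\to 1$ that is undefined at $\bullet$ differs from the identity. This matches the dichotomy above.

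The only genuine work is this bookkeeping about the definition of $\ncopierbis$. Everything else is immediate, because Theorem~\ref{thm:completenesspartialcircuits} reduces the statement to the trivial observation that a partial map out of a singleton is either undefined or picks out a single point.
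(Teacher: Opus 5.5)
Your proof is correct and follows the paper's argument: you split on whether the partial map $\osemPB{c}\colon 1\to 2^n$ is undefined or picks out a point, match it semantically with $\Flipn{\bot}$ or with a Boolean vector $b$, and conclude by Theorem~\ref{thm:completenesspartialcircuits}. Your explicit construction of $b$ as a tensor of constants and your check of the typing and of the $n=0$ case add detail that the paper leaves implicit.
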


\section{Probabilistic Boolean circuits}\label{ssec:probabilisticboolean}
We now recall probabilistic Boolean circuits from \cite{piedeleu2025boolean}.
The  signature is $\SigPRB \defeq \SigPB \cup\{ \Flip{p} \mid p\in (0,1)\}$. The gate $\Flip{p}$ outputs a Boolean value, which is $1$ with probability $p$ and $0$ with probability $1-p$. 

To provide a formal semantics, it is convenient to recall some key notions. We write $\Dis(X)$ for the set of all (finitely supported probability) \emph{subdistributions} over a set  $X$, namely, functions $d\colon X\to [0,1]$  such that $\sum_{x\in X}d(x)\leq 1$, and $d(x)\neq 0$ for  finitely many $x\in X$. For all $x\in X$, we write $\delta_x \in \Dis(X)$ for the Dirac distribution at $x$; $\star\in \Dis(X)$ for the null subdistribution: $\star(x)\defeq 0$ for all $x\in X$; for all $p\in (0,1)$ and $d_1,d_2\in \Dis(X)$, we write $d_1+_pd_2$ for the subdistributions mapping each $x\in X$ into $p\cdot d_1(x) + (1-p)\cdot d_2(x)$.

The assignment $X\mapsto \Dis(X)$ extends to a monad $\subdistr\colon \Sets \to \Sets$ (see e.g.,~\cite{hasuo2007generic}). Its Kleisli category is denoted by $\KlD$: 
objects are sets; morphisms \(f \colon X \to Y\) are functions \(X \to \subdistr(Y)\). 
  We often write \(f(y \mid x)\) for \(f(x)(y)\), as this number represents the probability that \(f\) returns \(y\) given the input \(x\).
  Identities \(\id{X} \colon X \to \subdistr(X)\) map each \(x \in X\) to \(\delta_{x}\).
  For two functions \(f \colon X \to \subdistr(Y)\) and \(g \colon Y \to \subdistr(Z)\), their composition in $\KlD$ is defined as \(f ; g (z \mid x) \defeq \sum_{y \in Y} f(y \mid x) \cdot g(z \mid y)\).

This category carries a symmetric monoidal structure
$(\KlD, \otimes, \uno)$, where $\otimes$ is the cartesian product of sets and, for arrows  \(f \colon X_1 \to Y_1\) and \(g \colon X_2 \to Y_2\), $f\per g\colon X_1\times X_2 \to Y_1 \times Y_2$ is defined as 
\begin{equation}\label{ex:products}
\begin{aligned}
f \per g((y_1,y_2) \mid (x_1,x_2)) &\defeq f(y_1 \mid x_1) \cdot g(y_2 \mid x_2)\text{.}
\end{aligned}
\end{equation}
Finally, the symmetric monoidal functor $\ParJ\colon (\Cat{Par},\otimes , 1) \to (\KlD, \otimes ,1)$ is the identity on objects and associates to a partial function $f\colon X \to Y$ the function $\ParJ(f)\colon X \to \Dis(Y)$, mapping $x$ to $\delta_{f(x)}$, if $f(x)$ is defined and to the null distribution $\star$ otherwise.

\smallskip

The semantics is given by interpreting the generators in $\SigPRB$ as arrows in $\KlD$: $\Flip{p}$ as 
\begin{equation}\label{eq:semanticsFlip}
\begin{array}{rcl}
      \osem{\Flip{p}}\colon 1& \to & 2   \\
      \bullet&\mapsto& \delta_1 +_p \delta_0  \end{array}
    \end{equation}
while generators $s\in \SigPB$ as $\osem{s}\defeq \ParJ(\osemPB{s})$. Analogously to the previous sections, the inductive extension of the above interpretation yields a symmetric monoidal functor $\osem{-}\colon \Diag{\SigPRB} \to \KlD$, assigning to each diagram its semantics. For instance, $\osem{(\Flip{p} \otimes \Flip{p}) ; \Andgate} = (\osem{\Flip{p}} \otimes \osem{\Flip{p}}) \, ; \osem{\Andgate}$ which, by \eqref{eq:semanticsFlip} and \eqref{ex:products}, is the $\KlD$-arrows mapping $\bullet$ into 
$\delta_1 +_{p^2} \delta_0$, i.e., $\osem{\Flip{p^2}}$. Similarly, one can check that $\osem{\Flip{p} ; \CBcopier ; \Andgate}=\osem{\Flip{p}}$. This fact is also witnessed by \eqref{ax:B5} in \Cref{tab:booleanalgebra}.

A natural goal, mirroring the case of (partial) Boolean circuits, is to obtain a complete axiomatisation of semantic equality. Instead, we establish a complete axiomatisation for probabilistic Boolean tapes, a formalism into which circuits can be encoded.

\section{From circuits to tapes}\label{sec:encodingprobboolcircuits}

Example 30 in \cite{bonchi2025tapediagramsmonoidalmonads} encodes probabilistic Boolean circuits as \emph{probabilistic Boolean tapes}. 
This relies on $\KlD$ having finite coproducts, which induce a second monoidal structure $(\KlD, \piu, \zero)$, where $\piu$ is disjoint union of sets with units $\zero \defeq \{\}$;
for \(f \colon X_1 \to Y_1\) and \(g \colon X_2 \to Y_2\), the arrow \(f \piu g \colon X_1 \oplus X_2 \to Y_1 \oplus Y_2\) is defined, for all \(u \in X_1 \oplus X_2\), \(v \in Y_1 \oplus Y_2\), by
\begin{equation}\label{ex:coproducts}
\begin{aligned}
f\piu g(v \mid u) \defeq 
\begin{cases} 
f(y_1 \mid x_1) & \text{if } u=\iota_1(x_1) \text{ and } v=\iota_1(y_1),\\ 
g(y_2 \mid x_2) & \text{if } u=\iota_2(x_2)\text{ and } v=\iota_2(y_2), \\  
0 & \text{otherwise.}
\end{cases}
\end{aligned}
\end{equation}
Here $\iota_1$ and $\iota_2$ are the coproduct injections. In $\KlD$, $\otimes$ distributes over $\piu$, so $(\KlD, \piu, \per, \zero, \uno)$ forms a \emph{rig category}, aka bimonoidal category~\cite{laplaza_coherence_1972}. Moreover, for each object $X$, the arrows
\begin{equation}\label{ex:comonoids} 
\begin{array}{cccc} \arraycolsep=2pt \begin{array}{rcl} \diagp{X} \colon X & \to & X \oplus X \\ x & \mapsto & \delta_{\iota_1(x)}+_p\delta_{\iota_2(x)}
 \end{array} & \arraycolsep=2pt \begin{array}{rcl} \bang{X} \colon X & \to & \zero\\ x & \mapsto& \star \end{array} & \arraycolsep=2pt \begin{array}{rcl} \cobang{X}\colon \zero &\to& X\\ \text{ } \end{array} & \arraycolsep=2pt \begin{array}{rcl} \codiag{X} \colon X\oplus X & \to & X\\ \iota_i(x) & \mapsto& \delta_{x} \end{array} \end{array} \end{equation}
are central: $(\codiag{X}, \cobang{X})$ and $(\diagp{X}, \bang{X})$ form, respectively, a natural and coherent \emph{monoid} and \emph{co-pointed convex algebra}, namely they satisfy the laws in Tables~\ref{fig:freestrictfccat} and~\ref{fig:freecopcacat} in Appendix~\ref{app:sec:preliminaries}.

\smallskip

These structures are exploited in $\CatT{\Diag{\SigPB}}$, the category of probabilistic Boolean tapes. Consider the following context-free grammar, where $p \in (0,1)$, $U,V \in \sort^*$, and $c$ is an arrow in $\Diag{\SigPB}$:
\begin{equation}\label{tapesGrammar}
\setlength{\arraycolsep}{3pt}
\begin{array}{rcccccccccccccccccccc}
\t & ::= & \diagp{U} & \mid & \bangp{U} & \mid & \tapeFunct{c} & \mid & \cobang{U} & \mid & 
\codiag{U} & \mid & \id{U} & \mid & \id{\zero} & \mid & \sigma_{U,V}^{\piu} & \mid & \t ; \t & \mid & \t \piu \t
\end{array}
\end{equation}
We restrict to terms typable by the rules of Table~\ref{fig:freestricmmoncat-typing}. Types are arrows $P \to Q$, where $P,Q \in (\sort^*)^*$ are viewed as sums $\Piu[i=1][n] U_i$, with each $U_i = \Per[j=1][m_i] A$. 

The category $\CatT{\Diag{\SigPB}}$ has as set of objects $(\sort^*)^*$. Arrows in $\CatT{\Diag{\SigPB}}[P,Q]$ are terms of type $P\to Q$ modulo the axioms in Table~\ref{tabellone}, i.e., those of natural and coherent monoids, co-pcas and strict symmetric monoidal categories. 
By the latter axioms, $(\CatT{\Diag{\SigPB}}, \piu, \zero)$ is a symmetric monoidal category. 
Define $P \per Q \defeq \Piu[i]{\Piu[j]{U_i V_j}}$ for $P = \Piu[i]{U_i}$ and $Q = \Piu[j]{V_j}$; together with definition of $\t_1 \per \t_2$ in Table~\ref{tab:producttape} in Appendix~\ref{app:sec:preliminaries}, this yields another symmetric monoidal structure $(\CatT{\Diag{\SigPB}}, \otimes, \uno)$, which makes $(\CatT{\Diag{\SigPB}}, \piu, \per, \zero, \uno)$ a rig category. Crucially, there is a morphism of rig categories $\dsem{-} \colon \CatT{\Diag{\SigPB}} \to \KlD$ assigning to each tape its semantics in $\KlD$:
 \begin{equation}\label{eq:semtapes}
 \renewcommand{\arraystretch}{1.5}
\!\!\!\begin{tabular}{l@{\;\;\;\;}l@{\;\;\;\;}l@{\;\;\;\;}l@{\;\;\;\;}l}
$\CBdsem{\diagp{A^n}}  \defeq \; \diagp{2^n}$ & $\CBdsem{\bang{A^n}} \defeq \bang{2^n}$  & $\CBdsem{\, \tapeFunct{c} \,} \defeq \ParJ(\osemPB{c})  $ &  $\CBdsem{\cobang{A^n}}  \defeq \cobang{2^n}$    & $\CBdsem{\codiag{A^n}} \defeq \codiag{2^n} $ \\
$\CBdsem{\id{A^n}} \defeq \id{2^n} $&$ \CBdsem{\id{\zero}}  \defeq \id{\zero}  $&$ \CBdsem{\symmp{A^n}{A^m}} \defeq \symmp{2^n}{2^m} $&$ \CBdsem{\s;\t}  \defeq \CBdsem{\s}  ; \CBdsem{\t}  $&$ \CBdsem{\s \piu \t}  \defeq \CBdsem{\s}  \piu \CBdsem{\t}  $ 
\end{tabular}
\end{equation}

Arrows of $\CatT{\Diag{\SigPB}}$ have a diagrammatic representation. The grammar in \eqref{tapesGrammar} is depicted as
\begin{equation*}\label{tapesDiagGrammar} 
    \setlength{\tabcolsep}{2pt}
    \begin{tabular}{rc c@{$\,\mid\,$}c@{$\,\mid\,$}c@{$\,\mid\,$}c@{$\,\mid\,$}c@{$\,\mid\,$}c@{$\,\mid\,$}c@{$\,\mid\,$}c@{$\,\mid\,$}c@{$\,\mid\,$}c}
        $\t$ & $\Coloneqq$ & $
    \InputIfFileExists{/tapes/cipriano/pcomonoid.tikz}{}{\input{./tikz//tapes/cipriano/pcomonoid.tikz}}
$ & $\Tcounit{U}$  & $ \Tcirc{c}{U}{V}$ & $\Tunit{U}$  & $\Tmonoid{U}$ 
        & $\Twire{U}$ & $ 
    \InputIfFileExists{empty.tikz}{}{\input{./tikz/empty.tikz}}
 $    & $ \Tsymmp{U}{V} $ & $ 
    \InputIfFileExists{tapes/seq_comp.tikz}{}{\input{./tikz/tapes/seq_comp.tikz}}
  $ & $  
    \InputIfFileExists{tapes/par_comp.tikz}{}{\input{./tikz/tapes/par_comp.tikz}}
$  
    \end{tabular}
\end{equation*}  
Note that in $\Tcirc{c}{U}{V}$, $ \Cgen{c}{U}{V}$ is an arrow in $\Diag{\SigPB}$, i.e., a string diagrams. Hence, string diagrams occur within tapes. Moreover, vertical composition of string diagrams corresponds to $\per$, while vertical composition of tapes to $\piu$. For instance, $\tapeFunct{\Flip{1} \per \Flip{0}}\colon 1 \to A\per A$ and $\tapeFunct{\Flip{1}} \piu \tapeFunct{\Flip{0}}\colon \uno \piu \uno \to A \piu A$ are drawn respectively as 
    \InputIfFileExists{tapes/cipriano/tapeunoperzero.tikz}{}{\input{./tikz/tapes/cipriano/tapeunoperzero.tikz}}
 and 
    \InputIfFileExists{tapes/cipriano/tapeunopiuzero.tikz}{}{\input{./tikz/tapes/cipriano/tapeunopiuzero.tikz}}
.
The graphical representation embodies several axioms of Table \ref{tabellone}. 
Those  axioms that are not implicit in the graphical representation are displayed in Figure~\ref{fig:tapesax}.

\begin{figure}[t]
\centering
\setlength{\tabcolsep}{2pt}
\renewcommand{\arraystretch}{0.9}
\resizebox{0.92\linewidth}{!}{%
\begin{tabular}{r@{\;}c@{\;}l @{\qquad} r@{\;}c@{\;}l @{\qquad} r@{\;}c@{\;}l}
    \mylabelbis{ax:tapes:symminv}{\ensuremath{\symmp\text{-inv}}}%
    
    \InputIfFileExists{tapes/ax/symminv_left.tikz}{}{\input{./tikz/tapes/ax/symminv_left.tikz}}
 &$\stackrel{\eqref{ax:tapes:symminv}}{=}$& 
    \InputIfFileExists{tapes/ax/symminv_right.tikz}{}{\input{./tikz/tapes/ax/symminv_right.tikz}}

    &
    \mylabelbis{ax:tapes:symmnat}{\ensuremath{\symmp\text{-nat}}}%
    
    \InputIfFileExists{tapes/ax/symmnat_left.tikz}{}{\input{./tikz/tapes/ax/symmnat_left.tikz}}
 &$\stackrel{\eqref{ax:tapes:symmnat}}{=}$& 
    \InputIfFileExists{tapes/ax/symmnat_right.tikz}{}{\input{./tikz/tapes/ax/symmnat_right.tikz}}

    &
    \mylabelbis{ax:tapes:sigmainv}{\ensuremath{\sigma\text{-inv}}}%
    
    \InputIfFileExists{cb/symm_inv_left.tikz}{}{\input{./tikz/cb/symm_inv_left.tikz}}
 &$\stackrel{\eqref{ax:tapes:sigmainv}}{=}$& 
    \InputIfFileExists{cb/symm_inv_right.tikz}{}{\input{./tikz/cb/symm_inv_right.tikz}}

    \\
    \mylabelbis{ax:tapes:codiagas}{\ensuremath{\codiag{}\text{-as}}}%
    
    \InputIfFileExists{tapes/whiskered_ax/monoid_assoc_left.tikz}{}{\input{./tikz/tapes/whiskered_ax/monoid_assoc_left.tikz}}
 &$\stackrel{\eqref{ax:tapes:codiagas}}{=}$& 
    \InputIfFileExists{tapes/whiskered_ax/monoid_assoc_right.tikz}{}{\input{./tikz/tapes/whiskered_ax/monoid_assoc_right.tikz}}

    &
    \mylabelbis{ax:tapes:codiagun}{\ensuremath{\codiag{}\text{-un}}}%
    
    \InputIfFileExists{tapes/whiskered_ax/monoid_unit_left.tikz}{}{\input{./tikz/tapes/whiskered_ax/monoid_unit_left.tikz}}
 &$\stackrel{\eqref{ax:tapes:codiagun}}{=}$& \Twire{U}
    &
    \mylabelbis{ax:tapes:codiagsym}{\ensuremath{\codiag{}\text{-sym}}}%
    
    \InputIfFileExists{tapes/whiskered_ax/monoid_comm_left.tikz}{}{\input{./tikz/tapes/whiskered_ax/monoid_comm_left.tikz}}
 &$\stackrel{\eqref{ax:tapes:codiagsym}}{=}$& \Tmonoid{U}
    \\
    \mylabelbis{ax:tapes:diagpas}{\ensuremath{\diagp{}{}\text{-as}}}%
    
    \InputIfFileExists{tapes/cipriano/monoid_assoc_left.tikz}{}{\input{./tikz/tapes/cipriano/monoid_assoc_left.tikz}}
 &$\stackrel{\eqref{ax:tapes:diagpas}}{=}$& 
    \InputIfFileExists{tapes/cipriano/monoid_assoc_right.tikz}{}{\input{./tikz/tapes/cipriano/monoid_assoc_right.tikz}}

    &
    \mylabelbis{ax:tapes:diagpidem}{\ensuremath{\diagp{}\text{-idem}}}%
    \scalebox{0.8}{
    \InputIfFileExists{tapes/cipriano/idempotency.tikz}{}{\input{./tikz/tapes/cipriano/idempotency.tikz}}
} &$\stackrel{\eqref{ax:tapes:diagpidem}}{=}$& \Twire{U}
    &
    \mylabelbis{ax:tapes:diagpsym}{\ensuremath{\diagp{}{}\text{-sym}}}%
    
    \InputIfFileExists{tapes/cipriano/monoid_comm_left.tikz}{}{\input{./tikz/tapes/cipriano/monoid_comm_left.tikz}}
 &$\stackrel{\eqref{ax:tapes:diagpsym}}{=}$& 
    \InputIfFileExists{tapes/cipriano/pcomonoidbar.tikz}{}{\input{./tikz/tapes/cipriano/pcomonoidbar.tikz}}

    \\
    \multicolumn{9}{c}{%
        \scalebox{0.85}{%
        \begin{tabular}{r@{\;}c@{\;}l @{\quad} r@{\;}c@{\;}l @{\quad} r@{\;}c@{\;}l @{\quad} r@{\;}c@{\;}l}
            \mylabelbis{ax:tapes:diagpnat}{\ensuremath{\diagp{}\text{-nat}}}%
            
    \InputIfFileExists{tapes/cipriano/pcanatleft.tikz}{}{\input{./tikz/tapes/cipriano/pcanatleft.tikz}}
 &$\stackrel{\eqref{ax:tapes:diagpnat}}{=}$& 
    \InputIfFileExists{tapes/cipriano/pcanatright.tikz}{}{\input{./tikz/tapes/cipriano/pcanatright.tikz}}

            &
            \mylabelbis{ax:tapes:codiagnat}{\ensuremath{\codiag{}\text{-nat}}}%
            
    \InputIfFileExists{tapes/cipriano/monoidnatnewright.tikz}{}{\input{./tikz/tapes/cipriano/monoidnatnewright.tikz}}
 &$\stackrel{\eqref{ax:tapes:codiagnat}}{=}$& 
    \InputIfFileExists{tapes/cipriano/monoidnatnewleft.tikz}{}{\input{./tikz/tapes/cipriano/monoidnatnewleft.tikz}}

            &
            \mylabelbis{ax:tapes:bangnat}{\ensuremath{\,\bangp{}\text{-nat}}}%
            
    \InputIfFileExists{tapes/cipriano/bangnatleft.tikz}{}{\input{./tikz/tapes/cipriano/bangnatleft.tikz}}
 &$\stackrel{\eqref{ax:tapes:bangnat}}{=}$& 
    \InputIfFileExists{tapes/cipriano/bangnatright.tikz}{}{\input{./tikz/tapes/cipriano/bangnatright.tikz}}

            &
            \mylabelbis{ax:tapes:cobangnat}{\ensuremath{\,\cobang{}\text{-nat}}}%
            
    \InputIfFileExists{tapes/cipriano/cobangnatleft.tikz}{}{\input{./tikz/tapes/cipriano/cobangnatleft.tikz}}
 &$\stackrel{\eqref{ax:tapes:cobangnat}}{=}$& 
    \InputIfFileExists{tapes/cipriano/cobangnatright.tikz}{}{\input{./tikz/tapes/cipriano/cobangnatright.tikz}}

        \end{tabular}
        }%
    }
\end{tabular}
}%
\caption{Axioms for probabilistic tape diagrams, where $\tilde{p}\defeq pq$ and $\tilde{q}\defeq \frac{p(1-q)}{1-pq}$ }
\label{fig:tapesax}
\end{figure}

Consider the tape $\diagp{1}; (\tapeFunct{\Flip{1}} \piu \tapeFunct{\Flip{0}}) ; \codiag{A} \colon \uno \to A$ which is drawn as the diagram below.
\begin{equation}\label{eq:exprobabilistictapes}
\Flip{p}[t] \defeq 
    \InputIfFileExists{tapes/examples/1p0.tikz}{}{\input{./tikz/tapes/examples/1p0.tikz}}

\end{equation}
By \eqref{eq:semtapes},  its semantics $\CBdsem{\Flip{p}[t]}$ is  $\diagp{1}; (\, \ParJ(\osemPB{\Flip{1}}) \piu \ParJ(\osemPB{\Flip{0}}) \,) ; \codiag{2}$ which --by \eqref{eq:sembool},  \eqref{ex:coproducts} and \eqref{ex:comonoids}-- is the function $\bullet \mapsto \delta_1 +_p \delta_0$ . Note that this is exactly $\osem{\Flip{p}}$ as defined in \eqref{eq:semanticsFlip}. This simple observation is at the core of the encoding $\encoding{-} \colon (\Diag{\SigPRB},\otimes, \uno) \to (\CatT{\Diag{\SigPB}},\otimes, \uno)$ from \cite{bonchi2025tapediagramsmonoidalmonads} reported in Table \ref{tab:encoding}.

\begin{table}
\[\begin{array}{rcll|rcl}
\encoding{c}&\defeq & \tapeFunct{c} & \text{ for }c\in \SigPB\cup\{\id{A},\id{1},\symmt{A}{A}\} & \encoding{c;d}&\defeq & \encoding{c};\encoding{d}\\
\encoding{\Flip{p}} & \defeq & \Flip{p}[t] & &\encoding{c \per d}&\defeq & \encoding{c} \per \encoding{d}\\
\end{array}\]
\caption{The encoding $\encoding{-} \colon (\Diag{\SigPRB},\otimes, \uno) \to (\CatT{\Diag{\SigPB}},\otimes, \uno)$}\label{tab:encoding}
\end{table}

\begin{proposition}\label{prop:encoding} For all $c\in \Diag{\SigPRB}$, $\osem{c}=\CBdsem{\encoding{c}}$.
\end{proposition}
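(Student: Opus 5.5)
We argue by structural induction on $c\in\Diag{\SigPRB}$, exploiting that both $\osem{-}$ and $\CBdsem{\encoding{-}}$ preserve $;$ and $\per$. In other words, we show that the two maps agree on generators and are compatible with the two compositions. Since the encoding of Table~\ref{tab:encoding} is defined inductively, the inductive cases amount to checking the following:
\[\CBdsem{\encoding{c;d}}=\CBdsem{\encoding{c};\encoding{d}}=\CBdsem{\encoding{c}};\CBdsem{\encoding{d}}\]
and
\[\CBdsem{\encoding{c\per d}}=\CBdsem{\encoding{c}\per\encoding{d}}=\CBdsem{\encoding{c}}\per\CBdsem{\encoding{d}}.\]
The first holds by \eqref{eq:semtapes}. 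The second holds because $\CBdsem{-}$ is a morphism of rig categories, and so in particular preserves $\per$. Combining these with functoriality of $\osem{-}$ and the inductive hypotheses closes both cases.

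The base cases split into two groups. For $c\in\SigPB\cup\{\id{A},\id{1},\symmt{A}{A}\}$ we have $\encoding{c}=\tapeFunct{c}$. By \eqref{eq:semtapes}, $\CBdsem{\tapeFunct{c}}=\ParJ(\osemPB{c})$, and this equals $\osem{c}$ by the definition of $\osem{-}$ on $\SigPB$. For identities and symmetries we additionally use that $\ParJ$ and $\osemPB{-}$ are symmetric monoidal functors. For $c=\Flip{p}$ the required equality is exactly the computation carried out right after \eqref{eq:exprobabilistictapes}: using \eqref{ex:comonoids} and \eqref{ex:coproducts}, $\CBdsem{\Flip{p}[t]}$ sends $\bullet$ to $\delta_1+_p\delta_0$, which is $\osem{\Flip{p}}$ by \eqref{eq:semanticsFlip}.

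The only genuinely delicate point is the $\per$ case. Here $\encoding{c}\per\encoding{d}$ is the tape tensor of Table~\ref{tab:producttape}, which is defined in terms of whiskerings and distributors rather than as a primitive. We therefore cannot read its semantics off \eqref{eq:semtapes} directly; we must rely on the fact, recalled in Section~\ref{sec:encodingprobboolcircuits} from \cite{bonchi2025tapediagramsmonoidalmonads}, that $\CBdsem{-}$ is a rig-category morphism and hence strictly preserves $\per$. Granting this, the proof reduces to the routine verification above. Well-definedness on equivalence classes of $\Diag{\SigPRB}$ is automatic, since both sides are symmetric monoidal functors out of a free category.
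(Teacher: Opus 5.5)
Your proposal is correct and follows essentially the same argument as the paper. It is a structural induction in which the base cases $\SigPB\cup\{\id{A},\id{1},\symmt{A}{A}\}$ follow from Table~\ref{tab:encoding} and \eqref{eq:semtapes}, the case $\Flip{p}$ is the computation after \eqref{eq:exprobabilistictapes}, and the inductive cases use that $\CBdsem{-}$ is a morphism of rig categories and hence preserves $;$ and $\per$.
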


The above result asserts that the encoding $\encoding{-}$ preserves the semantics. Thus, instead of axiomatising the equivalence induced by the semantics on probabilistic Boolean circuits, we rather axiomatise the one on tapes (Corollary \ref{cor:completenessPBPtapes}) and then check equivalence of circuits by encoding them into tapes (Corollary \ref{cor:finale}). Before illustrating the axiomatision for tapes and prove its completeness in Section~\ref{sec:completenessprobbooltapes}, we  recall some key results from \cite{probbooltapesfossacs} in the next section. We concluse this section by recalling from \cite{bonchi2025tapediagramsmonoidalmonads} an example comparing the expressivity of circuits and tapes.

\begin{example}\label{ex:esempioSTRONG}
In~\cite{piedeleu2025boolean}, probabilistic control is realised via a multiplexer, represented by the tape $\scalebox{0.7}{\Ifgate[t]}$. Intuitively, when $\Flip{p}[t]$, $\Tcirc{c}{\!\!\!\!}{\!\!\!\!}$ and $\Tcirc{d}{\!\!\!\!}{\!\!\!\!}$ are connected, respectively, to the first, second, and third inputs of the multiplexer, the resulting output coincides with that of $c$ with probability $p$ and with that of $d$ with probability $1-p$.
Formally, this behaviour is captured by the composite $ (\Flip{p}[t] \;\per\; \Tcirc{c}{\!\!\!\!}{\!\!\!\!} \;\per\; \Tcirc{d}{\!\!\!\!}{\!\!\!\!}) \; ; \; \scalebox{0.7}{\Ifgate[t]}$
which, by the definition of $\Flip{p}[t]$ in~\eqref{eq:exprobabilistictapes} and of $\per$ in Table~\ref{tab:producttape}, corresponds to the tape shown on the left below.
\begin{equation*}
    
    \InputIfFileExists{tapes/examples/multiplexT.tikz}{}{\input{./tikz/tapes/examples/multiplexT.tikz}}

    \qquad \qquad
    
    \InputIfFileExists{tapes/examples/pchoice.tikz}{}{\input{./tikz/tapes/examples/pchoice.tikz}}

\end{equation*}
Although the two diagrams above exhibit similar behaviour, a crucial difference emerges when $d$ (or symmetrically $c$) is instantiated as $\Flip{\bot}[t]$. In this case, the circuit on the left always fails: see \eqref{ax:F8}. By contrast, the circuit on the right still produces the output of $c$ (respectively $d$) with probability $p$ (respectively $1-p$). Since the composition of any behaviour with a null behaviour via $\otimes$ necessarily results in a null behaviour, we believe that issues analogous to the one above are inherent in approaches relying exclusively on the monoidal product $\otimes$. The introduction of $\oplus$, on the other hand, offers a natural and expressive means of modelling probabilistic control.
\end{example}

    \section{On Freely Generated Convex Biproduct Categories}\label{sec:preliminaries}

The category $\CatT{\Diag{\SigPB}}$ of probabilistic Boolean tapes is an instance of a general construction, introduced in~\cite{arxivprobbooltapes}, which associates to any category $\Cat{C}$ a category $\CatTapeC$. 
The latter is defined analogously to $\CatT{\Diag{\SigPB}}$, but with $\sort^*$ replaced by $\ob{\Cat{C}}$ and arrows of $\Diag{\SigPB}$ replaced by arrows of $\Cat{C}$. Consequently, the objects of $\CatTapeC$ are elements of $\ob{\Cat{C}}^*$, and its arrows are equivalence classes of well-typed terms generated by the grammar in~\eqref{tapesGrammar}, where now $U,V \in \ob{\Cat{C}}$ and $c$ ranges over arrows of $\Cat{C}$.

In this section, we briefly recall from~\cite{arxivprobbooltapes} the properties of $\CatTapeC$ that are needed for the completeness proof. 
Throughout, all categories are tacitly assumed to be locally small, and $\Cat{Cat}$ denotes the category of locally small categories.

\subsection{PCA-enriched categories}\label{sub:sec:preliminaries:pca}
Recall  from~\cite{stone1949postulates,bonchi2017power} that a  \emph{pointed convex algebra} (pca) consists of a set $X$, a designated element $\star \in X$ and, for all $p$ in the open real interval $(0,1)$, a function $+_p \colon X\times X \to X$ such that, by fixing $\tilde{p}\defeq pq$ and $\tilde{q}\defeq \frac{p(1-q)}{1-pq}$, the following laws hold for all $x_1,x_2,x_3\in X$. 
\begin{equation}\label{eq:pca}(x_1+_q x_2)+_p x_3 =  x_1 +_{\tilde{p}} (x_2+_{\tilde{q}} x_3) \qquad x_1+_px_2=x_2+_{1-p}x_1 \qquad  x_1+_p x_1 = x_1\end{equation}
We denote by $\Cat{PCA}$ the category of pcas and their morphisms, i.e., functions preserving $\star$ and $+_p$.

In any pca, $+_p$ is easily extended to any $p\in[0,1]$ by fixing $x+_1 y \defeq x$ and $x+_0 y \defeq y$.
 This  can be further extended for any $n\in \mathbb{N}$ and $p_1,\dots,p_n\in [0,1]$ inductively as 
    $\sum_{i=1}^0 p_i\cdot (-)_i \defeq \star$ and   $\sum_{i=1}^{n+1} p_i\cdot (-)_i \defeq (-)_{1} +_{p_{1}} \sum_{j=1}^n q_j \cdot (-)_j$ 
where $(-)_j=(-)_{i+1}$ and $q_j$ is $\frac{p_{i+1}}{1-p_1}$ if $p_1 \neq 1$ and $0$ otherwise. For $n=1$, this is the \emph{multiplication by a scalar} $p\in [0,1]$, shortly defined as $p\cdot x\defeq x+_p \star$.

The archetypical example of a pca is  $\Dis(X)$ with $+_p$ and $\star$  defined as in Section \ref{ssec:probabilisticboolean}. Such pca enjoys an additional property called cancellativity. A pca $(X,+_p,\star)$ is \emph{cancellative} (or, in the terminology of \cite{sokolova2018termination}, cancellative at $\star$) if  for all $x,y\in X$ and $p\in(0,1)$:
$p\cdot x = p\cdot y \Rightarrow x=y$.

    A category $\Cat{C}$ is \emph{$\Cat{PCA}$-enriched} if every homset carries a pca structure and composition of arrows is a pcas morphism, namely that the following equalities hold for all properly typed arrows $e,f,g,h$.
\begin{equation}\label{eq:enr}e; (f+_p g) = (e;f)+_p (e;g) \qquad (f+_pg) ;h= (f;h +_p g;h) \qquad f;\star =\star= \star;f\end{equation}
A functor $F$ is $\Cat{PCA}$-enriched if it preserves the pca structure of each homset. $\Cat{PCA}$-enriched categories and functors form a category, denoted by $\Cat{PCACat}$.

Moreover, a monoidal category $(\Cat{C},\otimes, 1)$ is \emph{monoidally enriched}  over $\Cat{PCA}$ if the following holds.
\begin{equation}\label{eq:monoidalenrichment} 
e \otimes (f+_p g)= (e \otimes f)+_p(e \otimes g) \qquad  (f+_p g)  \otimes h= ( f \otimes h )+_p(g \otimes  h) \qquad \star \otimes\, f= \star = f\otimes \star 
\end{equation}
 The category $\KlD$ is monoidally enriched over $\Cat{PCA}$: for all $f,g\colon X \to Y$, $x\in X$, $y\in Y$ and $p\in (0,1)$,
$f +_p g(y|x)\defeq p\cdot f(y|x) + (1-p)\cdot g(y|x)$ and $\star_{X,Y}(y|x) \defeq 0$.  
Also $\CatTapeC$ is $\Cat{PCA}$-enriched: $f +_p g \defeq  \,\diagp{X}; (f\oplus g); \codiag{Y}$ and $\star_{X,Y} \defeq \bang{X};\cobang{Y}$.
Such enrichment is monoidal whenever $(\Cat{C}, \otimes, 1)$ is a symmetric monoidal category: see \cite[Thm.\ 27]{bonchi2025tapediagramsmonoidalmonads}.


\subsection{Convex Biproduct Categories}\label{sub:sec:preliminaries:cbc}
We now recall the definition of convex biproduct categories from \cite{arxivprobbooltapes}, which are suitable $\Cat{PCA}$-enriched categories where coproducts enjoy an additional universal property, similar to the one of products.

\begin{definition}\label{def:convprod}
Let $X_1$, $X_2$  be two objects of a $\Cat{PCA}$-enriched category $\Cat{C}$. The \emph{convex product} of $X_1$ and $X_2$ is an object $Z$ with two arrows $\pi_1\colon Z\to X_1$ and $\pi_2\colon Z\to X_2$ satisfying the following property: for all $p_1,p_2 \in [0,1]$ such that $p_1+p_2\le 1$ and all arrows $f_1\colon A\to X_1$, $f_2\colon A \to X_2$, there exists a unique arrow $h \colon A \to Z$ such that $h;\pi_1 = p_1\cdot f$ and $h;\pi_2 = p_2 \cdot g$. 
\end{definition}

Similarly, the convex product of $n$ objects $X_1,\ldots,X_n$ is an object $Z$ with arrows $\pi_i\colon Z\to X_i$ for $i=1,\ldots,n$ satisfying the following property: for all $p_1, \dots, p_n \in [0,1]$ where $\sum_{i=1}^n p_i \le 1$ and arrows $f_i\colon A\to X_i$, there exists a unique arrow $h\colon A \to Z$ such that $h;\pi_i = p_i\cdot f_i$ for all $i=1,\ldots,n$. Observe that, by definition, the 0-ary convex product is a final object. Hereafter, we will denote the unique arrow $h$ by $\langle f_1, \dots, f_n \rangle_{\vec{p}}$ where $\vec{p}$ is a compact notation for $p_1, \dots , p_n$.

\begin{definition}\label{def:convbicat}
A \emph{convex biproduct category} is a $\Cat{PCA}$-enriched category $\Cat{C}$ with an object $\zero$ which is both initial and final and, for every pair of objects $X_1,X_2$, an object $X_1\oplus X_2$ and morphisms $\pi_i \colon X_1\oplus X_2 \to X_i$ and $\iota_i \colon X_i \to X_1 \oplus X_2$ such that $(X_1\oplus X_2, \iota_1, \iota_2)$ is a coproduct, $(X_1\oplus X_2, \pi_1, \pi_2)$ is a convex product and $\iota_i; \pi_j = \id{X_i}$ if $i=j$ and $\iota_i; \pi_j = \star_{X_i,X_j}$ if $i \neq j$.

A morphism of convex biproduct categories is a $\Cat{PCA}$-enriched functor $F\colon \Cat{C} \to \Cat{D}$ preserving finite coproducts. We write $\Cat{CBCat}$ for the category of convex biproduct categories and their morphisms.
\end{definition}
Note that the above definition is obtained from that of \emph{category with finite biproducts} \cite{mac_lane_categories_1978,coecke2017two}  by just replacing products by convex products. Examples of convex biproduct categories include  $\KlD$, its continuous analogue (see \cite{arxivprobbooltapes}) and, most importantly, $\CatTapeC$.
\begin{theorem}\label{thm:TCconvexbiproductcategory}
$\CatTapeC$ is a convex biproduct category. In particular, for every arrow $\t\colon U \to \bigoplus_{i=1}^n U_i$ in $\CatTapeC$, there exist $n$ arrows $\t_i\colon U \to U_i$ such that $\t = \langle \t_1, \dots \t_n\rangle_{\vec{p}}$ for some $\vec{p}=p_1,\dots ,p_n$.
\end{theorem}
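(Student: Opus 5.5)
We need to show that $\CatTapeC$ is a convex biproduct category, and that every tape $\t\colon U \to \bigoplus_{i=1}^n U_i$ decomposes as $\langle \t_1,\dots,\t_n\rangle_{\vec p}$. The plan is to exhibit the structure explicitly from the generators and then prove the universal properties by structural induction on terms modulo the axioms of Table~\ref{tabellone}.

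\emph{Step 1: structure.} We first fix the $\Cat{PCA}$-enrichment $f+_pg \defeq \diagp{X};(f\oplus g);\codiag{Y}$ and $\star_{X,Y}\defeq\bang{X};\cobang{Y}$. The pca laws~\eqref{eq:pca} follow from associativity, symmetry and idempotency of $\diagp{}$ together with the monoid laws of $\codiag{}$. The enrichment laws~\eqref{eq:enr} follow from naturality of $\diagp{}$, $\codiag{}$, $\bang{}$ and $\cobang{}$.

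The object $\zero$ is the empty word. It is initial and final because $\cobang{}$ and $\bang{}$ are natural, so every arrow into or out of $\zero$ equals the corresponding unit or counit.

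For objects $X_1,X_2$ we take $X_1\oplus X_2$ to be concatenation. The injections are $\iota_1 \defeq \id{X_1}\oplus\cobang{X_2}$ and symmetrically. The projections are $\pi_1\defeq\id{X_1}\oplus\bang{X_2}$ and symmetrically. The coproduct property is standard for a natural coherent commutative monoid: the copairing is $[f,g]\defeq(f\oplus g);\codiag{}$, and uniqueness uses naturality of $\codiag{}$ and $\cobang{}$. The identities $\iota_i;\pi_j$ are immediate: $\cobang{};\bang{}=\id{\zero}$ gives $\id{X_i}$ when $i=j$, and $\star$ when $i\neq j$.

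\emph{Step 2: convex product.} For existence, given $f_i\colon A\to X_i$ and weights $p_1+p_2\le 1$, we set
\[
\langle f_1,f_2\rangle_{\vec p}\defeq \diagp[p_1]{A};\bigl(f_1\oplus(\diagp[q]{A};(f_2\oplus \bang{A}))\bigr),
\]
with $q$ normalised as in the inductive definition of $\sum p_i\cdot(-)_i$; degenerate weights $0$ and $1$ are handled by the obvious conventions. Post-composing with $\pi_1$ and $\pi_2$ and using naturality of $\bang{}$ gives $p_1\cdot f_1$ and $p_2\cdot f_2$.

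Uniqueness is the main obstacle. It is equivalent to showing that $h\mapsto(h;\pi_1,h;\pi_2)$ is injective, i.e.\ that every $h\colon A\to X_1\oplus X_2$ is determined by its projections. We prove this, together with the decomposition claim of the theorem, by induction on a normal form.

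First treat sources $A=U$ consisting of a single letter. Using naturality and the coherence axioms of Table~\ref{tabellone}, we push every $\diagp{}$ and $\bang{}$ to the left and every $\codiag{}$ and $\cobang{}$ to the right. This brings $\t\colon U\to\bigoplus_i U_i$ into the shape
\[
\text{(tree of $\diagp{}$'s and $\bang{}$)};\ \bigoplus_k \tapeFunct{c_k};\ \text{(codiagonals and units routing each branch to some output $U_{i_k}$)}.
\]
Grouping the branches by their target $i$ and using the pca laws, this becomes $\langle \t_1,\dots,\t_n\rangle_{\vec p}$ with $\t_i$ a normalised convex combination. Projections then recover $p_i\cdot\t_i$.

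To conclude uniqueness, suppose $h$ and $h'$ have equal projections. Writing both in the normal form $\langle\t_1,\dots\rangle_{\vec p}$, the equality of projections is an equality of scaled arrows $p_i\cdot\t_i=p_i'\cdot\t_i'$. The remaining difficulty is to show that $\langle-\rangle_{\vec p}$ depends only on the products $p_i\cdot\t_i$ and not on how the scalar is split. This is a rebracketing argument: scalars can be moved between the outer tree and the inner arrows using the pca associativity law with $\tilde p,\tilde q$ and the identity $x+_p\star=p\cdot x$. That rebracketing is the step I expect to require the most care.

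Finally, sources that are sums reduce to the single-letter case via the coproduct structure established in Step 1.
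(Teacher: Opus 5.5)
\textbf{Gap: uniqueness of the convex product.} Your Step~1, the existence formula and the normal form are fine. The normal form works because $\diagp{}$ and $\bang{}$ are natural with respect to \emph{all} tapes, including $\codiag{}$ and $\cobang{}$, so they can be pushed to the left. The normal form also yields the ``in particular'' clause.

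The problem is uniqueness. From $h;\pi_i=h';\pi_i$ you only obtain an equation $p_i\cdot t_i=p'_i\cdot t'_i$ in the quotient $\CatTapeC[U,U_i]$, witnessed by an arbitrary chain of axiom applications. Moving scalars between the outer tree and the components can only relate two splittings that are already syntactic rescalings of each other (say $t_i=r\cdot s$, $p'_i=p_i r$, $t'_i=s$). It gives you no way to extract such a relation from an abstract equality $p_i\cdot t_i=p'_i\cdot t'_i$.

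Your claim that $\langle t_1,\dots,t_n\rangle_{\vec p}$ depends only on the products $p_i\cdot t_i$ is just uniqueness restated, and it is essentially cancellativity. For instance, the pca laws give $\tfrac12\cdot\langle t_1,t_2\rangle_{\vec p}=((p_1\cdot t_1);\iota_1)+_{1/2}((p_2\cdot t_2);\iota_2)$, so the \emph{scaled} pairing depends only on the products. Removing the factor $\tfrac12$, however, is exactly cancellation. With normalised $t_i$, what you really need is that your normal forms $\sum_k r_k\cdot\tapeFunct{c_k}$ are \emph{unique}. This is a non-derivability statement (distinct normal forms are never provably equal), so no amount of local rearrangement with the axioms can establish it; it needs a model or some other global invariant.

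\textbf{Missing ingredient: a sound interpretation, built by induction on terms.}
\begin{enumerate}
\item Send each tape to a matrix whose columns are tuples $(d_i)_i$ of subdistributions $d_i\in\Dis(\Cat{C}[U,U_i])$ of total mass at most $1$; this is essentially an arrow of $\stmat{\Cat{C}^+}$.
\item Generators go to the matrices of Figure~\ref{fig:dictionary}, reading each string-diagram entry $c$ as the Dirac subdistribution $\delta_c$. Composition $;$ goes to matrix product (composition in $\Cat{C}^+$, entries added as subdistributions), and $\oplus$ goes to direct sum.
\item Check that every axiom of Table~\ref{tabellone} holds in this model.
\item Your normal form $\sum_k r_k\cdot(\tapeFunct{c_k};\iota_{i_k})$ is sent to the vector whose $i$-th entry is $\sum_{k\colon i_k=i}r_k\cdot\delta_{c_k}$. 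Two normal forms with the same image are provably equal by the pca laws. Hence the interpretation is injective on $\CatTapeC[U,\bigoplus_i U_i]$.
\item Post-composition with $\pi_i$ extracts the $i$-th entry. So $h;\pi_i=h';\pi_i$ for all $i$ forces equal images, hence $h=h'$, with no rebracketing needed.
\end{enumerate}
Your reduction of sum-shaped sources via coproducts then completes the argument.

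You cannot shortcut this by citing Corollary~\ref{cor:isotapematrices}. That corollary is derived from the freeness result (Theorem~\ref{thm:syntacticadjunction}), which already presupposes that $\CatTapeC$ is a convex biproduct category.
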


Crucially, the assignment $\Cat{C} \mapsto \CatTapeC$ extends to a functor $\CatTapeFUN\colon \Cat{Cat} \to \Cat{CBCat}$ which is left adjoint to the forgetful functor  $U\colon \Cat{CBCat} \to \Cat{Cat} $. 
\begin{theorem}\label{thm:syntacticadjunction}
$\CatTapeFUN\colon \Cat{Cat} \to \Cat{CBCat}$ is left adjoint to $U\colon \Cat{CBCat} \to \Cat{Cat}$.
\end{theorem}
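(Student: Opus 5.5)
To prove that $\CatTapeFUN$ is left adjoint to $U$, I will exhibit, for every category $\Cat{C}$, a unit $\eta_{\Cat{C}}\colon \Cat{C}\to U(\CatTapeC)$ and show that it is universal. The unit is the functor $c\mapsto \tapeFunct{c}$ on arrows and $X\mapsto X$ (a one-letter word) on objects. Functoriality of $\eta_{\Cat{C}}$ holds because the tape construction identifies $\tapeFunct{c};\tapeFunct{d}$ with $\tapeFunct{c;d}$ and $\tapeFunct{\id{X}}$ with $\id{X}$; these are among the axioms of Table~\ref{tabellone} governing embedded arrows. Universality means: for any convex biproduct category $\Cat{D}$ and any functor $F\colon \Cat{C}\to U(\Cat{D})$, there is a unique morphism of convex biproduct categories $F^\sharp\colon \CatTapeC\to \Cat{D}$ with $\eta_{\Cat{C}};U(F^\sharp)=F$. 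Naturality in $\Cat{C}$, and hence the functor $\CatTapeFUN$, then follows by the standard argument that universal arrows determine a left adjoint.

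\textbf{Existence.} I will define $F^\sharp$ by structural recursion on the grammar~\eqref{tapesGrammar}.
\begin{itemize}
\item On objects: $F^\sharp(\Piu[i] U_i)\defeq \bigoplus_i F(U_i)$, using the chosen biproducts of $\Cat{D}$ and $F^\sharp(\zero)\defeq\zero$.
\item On generators: $F^\sharp(\tapeFunct{c})\defeq F(c)$, $F^\sharp(\codiag{U})\defeq[\id{},\id{}]$ (the copairing), $F^\sharp(\cobang{U})$ is the map out of the initial object, $F^\sharp(\bangp{U})$ is the map into the final object, and $F^\sharp(\diagp{U})\defeq\langle \id{},\id{}\rangle_{p,1-p}$. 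Symmetries go to the canonical coproduct symmetries.
\item On composites: $;$ and $\piu$ are sent to composition and to the functorial coproduct $\oplus$ of $\Cat{D}$.
\end{itemize}
The main work is soundness: every axiom of Table~\ref{tabellone} must hold in $\Cat{D}$. The axioms fall into three groups.
\begin{itemize}
\item The monoid and symmetric monoidal axioms follow from $(\oplus,\zero)$ being a coproduct.
\item Naturality of $\codiag{}$ and $\cobang{}$ follows from the universal property of coproducts.
\item The co-pca axioms (associativity with $\tilde p,\tilde q$, symmetry, idempotency) and the naturality of $\diagp{}$ and $\bang{}$ are checked via the uniqueness clause of convex products. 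Both sides are compared after postcomposing with each $\pi_i$, using the PCA-enrichment~\eqref{eq:enr} and the identities $\iota_i;\pi_j\in\{\id{},\star\}$. For example, for idempotency, $\langle\id{},\id{}\rangle_{p,1-p};[\id{},\id{}] = p\cdot\id{}+(1-p)\cdot\id{}=\id{}$. To justify this, I will first prove a small lemma: in a convex biproduct category, $[\iota_1\text{-}]$ copairing and tupling satisfy $\langle f_1,f_2\rangle_{\vec p};[g_1,g_2]=(f_1;g_1)+_{p}(f_2;g_2)$ when $p_1+p_2=1$. This follows from writing $\id{X_1\oplus X_2}=\langle\pi_1,\pi_2\rangle$-type decompositions.
\end{itemize}

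Next I show that $F^\sharp$ is a morphism of convex biproduct categories. It preserves finite coproducts by construction, since the injections in $\CatTapeC$ are $\id{}\piu\cobang{}$, which go to the injections of $\Cat{D}$. It is PCA-enriched because $f+_pg=\diagp{};(f\oplus g);\codiag{}$ is mapped to $\langle\id{},\id{}\rangle_{p};(F^\sharp f\oplus F^\sharp g);[\id{},\id{}]$, which by the lemma above equals $F^\sharp f+_pF^\sharp g$; similarly, $\star$ is preserved.

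\textbf{Uniqueness.} Any CB-morphism $G$ with $\eta;G=F$ must agree with $F^\sharp$ on every generator. On $\tapeFunct{c}$ this is forced by the hypothesis. On $\codiag{}$, $\cobang{}$ and the symmetries it is forced by coproduct preservation. On $\bangp{}$ and $\diagp{}$ it is forced because $\diagp{}=\id{}+_p\id{}$ expressed via injections, and $\bang{}=\star$ postcomposed, so PCA-enrichment pins down their images. Since every arrow of $\CatTapeC$ is generated by $;$ and $\piu$, $G=F^\sharp$.

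I expect the main obstacle to be the soundness check of the convex-algebra axiom for $\diagp{}$ (associativity with the reparametrised $\tilde p,\tilde q$), together with its interaction with $\bang{}$ via the copointed structure. These laws require careful computation with the subconvex tupling $\langle-\rangle_{\vec p}$ and its uniqueness property. My plan for handling them is to reduce every such equation to an equality of components under the $\pi_i$ and to invoke the pca laws~\eqref{eq:pca} in the homsets of $\Cat{D}$.
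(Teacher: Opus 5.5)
Your proposal is correct and takes essentially the same route as the paper. The paper only cites \cite[Theorem 4]{probbooltapesfossacs}, but the extension $G^\flat$ it recalls in \eqref{eq:bemolle} is exactly your structurally recursive $F^\sharp$, with soundness, coproduct and PCA preservation, and uniqueness all checked on generators.

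One small repair is needed in your auxiliary lemma. You cannot justify it by decomposing $\id{X_1\oplus X_2}$ as a tuple $\langle\pi_1,\pi_2\rangle$, because the weights $1,1$ violate $p_1+p_2\le 1$. Instead, observe that $(f_1;\iota_1)+_p(f_2;\iota_2)$ has projections $p\cdot f_1$ and $(1-p)\cdot f_2$, so it equals $\langle f_1,f_2\rangle_{p,1-p}$ by uniqueness; then postcompose with $[g_1,g_2]$ using \eqref{eq:enr}.
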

In other words, $\CatTapeC$ is the convex biproduct category freely generated by $\Cat{C}$. Theorems 1 and 2 in \cite{probbooltapesfossacs} illustrate that such adjunction is decomposed in the following two adjunctions, where $\Cat{PCACat}$ denotes the category of $\Cat{PCA}$-enriched categories and $\Cat{PCA}$-enriched functors: 
\begin{equation}\label{eq:adjunctions}
\xymatrix{
\Cat{Cat} \ar@/^/[rr]^{(-)^+}& \;\;{\tiny{\bot}} & \Cat{PCACat}  \ar@/^/[ll]^{U} \ar@/^/[rr]^{\stmat{-}} &\;{\tiny{\bot}}  & \Cat{CBCat} \ar@/^/[ll]^{U}
}
\end{equation}
\begin{corollary}\label{cor:isotapematrices}
For all categories $\Cat{C}$, $\CatTapeC$ is isomorphic to $\stmat{\Cat{C}^+}$ in $\Cat{CBCat}$.
\end{corollary}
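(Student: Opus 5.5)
The plan is to obtain the isomorphism purely from the universal properties in \eqref{eq:adjunctions} together with Theorem~\ref{thm:syntacticadjunction}. By Theorem~\ref{thm:syntacticadjunction}, $\CatTapeFUN$ is left adjoint to the forgetful functor $U\colon \Cat{CBCat}\to\Cat{Cat}$. By the decomposition \eqref{eq:adjunctions}, the composite $\stmat{(-)^+}$ is also a left adjoint to a forgetful functor $\Cat{CBCat}\to\Cat{Cat}$. That forgetful functor is the composite $U\circ U$, forgetting first the convex biproduct structure and then the $\Cat{PCA}$-enrichment.

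The first step is to check that $U\circ U\colon \Cat{CBCat}\to\Cat{PCACat}\to\Cat{Cat}$ coincides with the direct forgetful functor $U\colon\Cat{CBCat}\to\Cat{Cat}$. This is immediate: both send a convex biproduct category to its underlying category and a morphism to its underlying functor.

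Since adjoints compose, $\stmat{(-)^+}\dashv U\circ U = U$. Uniqueness of left adjoints up to natural isomorphism then gives a natural isomorphism $\CatTapeC \cong \stmat{\Cat{C}^+}$ in $\Cat{CBCat}$ for every $\Cat{C}$. Explicitly, both objects represent the same functor $\Cat{CBCat}[-,-]\cong\Cat{Cat}[\Cat{C},U(-)]$. The comparison map $\CatTapeC\to\stmat{\Cat{C}^+}$ is the transpose of the unit $\Cat{C}\to U\stmat{\Cat{C}^+}$, and its inverse is obtained symmetrically. Both composites are identities by the uniqueness clause of the universal property applied to the units.

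The main obstacle is not categorical; it lies in making sure the decomposition \eqref{eq:adjunctions} really is as stated. Concretely, the left adjoints must land in $\Cat{CBCat}$ with morphisms that are $\Cat{PCA}$-enriched and coproduct-preserving, so that the isomorphism lives in $\Cat{CBCat}$ and not merely in $\Cat{Cat}$. Since Theorems~1 and~2 of \cite{probbooltapesfossacs} are assumed, this reduces to the bookkeeping above.

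If a more concrete description of the isomorphism is desired, one can extract it from the universal property. The map sends a generator $\tapeFunct{c}$ to the $1\times 1$ matrix with entry $c$ (viewed in $\Cat{C}^+$). The maps $\diagp{U}$, $\codiag{U}$, $\bang{U}$ and $\cobang{U}$ go to the evident stochastic matrices, and Theorem~\ref{thm:TCconvexbiproductcategory} guarantees that every tape decomposes as a convex tuple matching a matrix row.
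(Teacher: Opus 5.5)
Your proposal is correct and follows the paper's route. The corollary comes from composing the two adjunctions in \eqref{eq:adjunctions} into a left adjoint of the forgetful functor $\Cat{CBCat}\to\Cat{Cat}$, then using uniqueness of left adjoints up to natural isomorphism against Theorem~\ref{thm:syntacticadjunction}. (A minor nit in your optional concrete description: with the paper's $(j,i)$ indexing, a tape $U\to\bigoplus_i U_i$ corresponds to a column of the matrix rather than a row, and $\tapeFunct{c}$ goes to the entry $1\cdot c$, i.e.\ the Dirac subdistribution on $c$.)
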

While it is not strictly necessary for our completeness proof, it is useful to provide some intuition about the two constructions in \eqref{eq:adjunctions}.

The leftmost adjunction is an instance of a general result about enriching over arbitrary algebraic theories: see \cite[Prop. 6.4.7]{borceux2} or \cite[Cor. 1]{villoria2024enriching}. For every category $\Cat{C}$ one obtains $\Cat{PCA}$-enriched category $\Cat{C}^+$: objects of  $\Cat{C}^+$ are those of $\Cat{C}$; for all objects $X,Y$, the homset is defined as $\Cat{C}^+[X,Y] \defeq \Dis(\Cat{C}[X,Y])$. For $d_1\colon X \to Y$ and $d_2\colon Y \to Z$, their composition $d_1;d_2 \colon X\to Z$ is defined for all $h\in \Cat{C}[X,Z]$ as $d_1;d_2(h) \defeq \sum_{\{(f,g)\mid f;g=h\}}d_1(f) \cdot d_2(g)$; The identity $\id{X}\colon X \to X$ is given by $\delta_{id_X}$.

The second adjunction is an adaptation of \cite[Exercises VIII.2.5-6]{mac_lane_categories_1978} which constructs a category of matrices from a category enriched over commutative monoids. 
Given a $\Cat{PCA}$-enriched category $\Cat{C}$, $\stmat{\Cat{C}}$ is the category of \emph{stochastic matrices} over $\Cat{C}$.
In a nutshell, objects of $\stmat{\Cat{C}}$ are words in $Ob(\Cat{C})^*$, while arrows $M\colon\bigoplus_{k=1}^n U_k\to \bigoplus_{k=1}^m V_k$ are matrices with $(j,i)$-entries given by pairs $ (p_{ji}, f_{ji})$ where $f_{ji}\in\Cat{C}[U_i,V_j]$ and $p_{ji}\in[0,1]$ satisfy $\sum_{j=1}^{m}p_{ji}\le 1$. The composition of two morphisms is given by matrix multiplication where addition is given by the pca enrichment of $\Cat{C}$ and multiplication by  composition of arrows in $\Cat{C}$.

        \section{A Complete Axiomatisation for Probabilistic Boolean Tapes}\label{sec:completenessprobbooltapes}
Corollary~\ref{cor:isotapematrices} informs us that $\CatT{\Diag{\SigPB}}$ is isomorphic to $\stmat{\Diag{\SigPB}^+}$. That is, probabilistic Boolean tapes are in bijective correspondence with stochastic matrices whose entries are subdistributions over diagrams in \(\Diag{\SigPB}\). This correspondence--summarised in Figure~\ref{fig:dictionary}--can be illustrated by the example below, where a tape (on the left) is paired with its associated matrix (on the right):
\begin{equation*}\label{eq:tapeintro1}
\resizebox{!}{3.2em}{$
    \InputIfFileExists{tapes/cipriano/composizione2booleana.tikz}{}{\input{./tikz/tapes/cipriano/composizione2booleana.tikz}}
$}
\qquad
\qquad
\raisebox{-0.4ex}{$
\begin{pNiceMatrix}[first-col,first-row]
	\rotatebox{90}{$\Lsh$} & AA & 1 \\
	1 & p\cdot 
    \begin{tikzpicture}
	\begin{pgfonlayer}{nodelayer}
		\node [scale=0.8, style=none] (0) at (0, -0.5) {$\coflip{1}$};
		\node [scale=0.8, style=none] (1) at (0, 0.5) {$\coflip{0}$};
	\end{pgfonlayer}
\end{tikzpicture}
}
 & 0\cdot \star_{1,1} \\
	A & (1-p)\cdot \CBcocopier & 1\cdot (\Flip{1}+_q \Flip{0})
\end{pNiceMatrix}
$}
\end{equation*}
We now consider the axioms in \(\ThPB\) and the category \(\Diag{\ThPB}\), obtained as the quotient of \(\Diag{\SigPB}\) by \(\ThPB\). By applying Corollary~\ref{cor:isotapematrices} with \(\Cat{C} = \Diag{\ThPB}\), we obtain that \(\CatT{\Diag{\ThPB}}\) is isomorphic to \(\stmat{\Diag{\ThPB}}\). By Proposition~\ref{prop:isopartialboolean}, the latter is in turn isomorphic to \(\stmat{\Cat{Par}_2}\). Altogether:
\begin{equation}\label{eq:i2so}
\CatT{\Diag{\ThPB}} \cong \stmat{\Diag{\ThPB}^+} \cong \stmat{\Cat{Par}_2^+}.
\end{equation}

However, \(\stmat{\Cat{Par}_2^+}\) does not embed faithfully into \(\KlD\): morphisms in \(\stmat{\Cat{Par}_2^+}\) carry strictly more information than those in \(\KlD\). 
To see this, consider the tapes \(\scalebox{0.6}{
    \InputIfFileExists{tapes/cipriano/AXPBP2right.tikz}{}{\input{./tikz/tapes/cipriano/AXPBP2right.tikz}}
}\) and \(\Flip{\bot}[t]\), together with their corresponding matrices in \(\stmat{\Cat{Par}_2^+}\):
\[
\raisebox{-0.4ex}{$
\begin{pNiceMatrix}[first-col,first-row]
	\rotatebox{90}{$\Lsh$} & 1  \\
	2 & 0\cdot \star_{1,2} 
\end{pNiceMatrix}
$}
\qquad
\raisebox{-0.4ex}{$
\begin{pNiceMatrix}[first-col,first-row]
	\rotatebox{90}{$\Lsh$} & 1  \\
	2 & 1\cdot \osemPB{\Flip{\bot}}
\end{pNiceMatrix}
$}
\]
The unique entry in the left matrix is the null subdistribution \(\star \in \Dis(\Cat{Par}[1,2])\), whereas the entry in the right matrix is the Dirac distribution concentrated on the partial function \(\osemPB{\Flip{\bot}} \colon 1 \to 2\). These are two distinct morphisms in \(\Cat{Par}_2^+[1,2]\). 
Nevertheless, one readily verifies that \(\dsem{\scalebox{0.6}{
    \InputIfFileExists{tapes/cipriano/AXPBP2right.tikz}{}{\input{./tikz/tapes/cipriano/AXPBP2right.tikz}}
}}\) and \(\dsem{\Flip{\bot}[t]}\) coincide as morphisms in \(\KlD\): both correspond to the arrow \(\star_{1,2} \colon 1 \to 2\), i.e., the function \(1 \to \Dis(2)\) sending the unique element \(\bullet \in 1\) to the null subdistribution \(\star \in \Dis(2)\).

\begin{figure}[t]
  \centering

  \mylabelbis{ax:AXPBP1}{T2}
  \mylabelbis{ax:AXPBP2}{T1}
  \mylabelbis{ax:canc}{T3}

  \[
  \begin{tabular*}{\textwidth}{@{\extracolsep{\fill}}ccc@{}}
    \scalebox{0.90}{$
      
    \InputIfFileExists{tapes/cipriano/AXPBP2left.tikz}{}{\input{./tikz/tapes/cipriano/AXPBP2left.tikz}}

      \stackrel{\eqref{ax:AXPBP2}}{=}
      
    \InputIfFileExists{tapes/cipriano/AXPBP2right.tikz}{}{\input{./tikz/tapes/cipriano/AXPBP2right.tikz}}

    $}
    &
    \scalebox{0.80}{$
      
    \InputIfFileExists{tapes/cipriano/AXPBP1left.tikz}{}{\input{./tikz/tapes/cipriano/AXPBP1left.tikz}}

      \stackrel{\eqref{ax:AXPBP1}}{=}
      
    \InputIfFileExists{tapes/cipriano/AXPBP1right.tikz}{}{\input{./tikz/tapes/cipriano/AXPBP1right.tikz}}

    $}
    &
    \scalebox{0.80}{$
      
    \InputIfFileExists{tapes/cipriano/Axcancleft.tikz}{}{\input{./tikz/tapes/cipriano/Axcancleft.tikz}}

      \stackrel{\eqref{ax:canc}}{\implies}
      
    \InputIfFileExists{tapes/cipriano/Axcancright.tikz}{}{\input{./tikz/tapes/cipriano/Axcancright.tikz}}

    $}
  \end{tabular*}
  \]

  \caption{Axioms for tapes of partial Boolean circuits.}
  \label{ax:BooleanTAPES}
\end{figure}

\subsection{Axiomatisation}\label{ssec:probbooltapes:axiomatisation}
To obtain a complete axiomatisation, we need to add to $\ThPB$ the axioms in Figure~\ref{ax:BooleanTAPES}. 

First, the axiom \eqref{ax:AXPBP2} forces the equality of $\Flip{\bot}[t]$ and \(\scalebox{0.6}{
    \InputIfFileExists{tapes/cipriano/AXPBP2right.tikz}{}{\input{./tikz/tapes/cipriano/AXPBP2right.tikz}}
}\). Then, consider the tape diagram on the right-hand side of \eqref{ax:AXPBP1}: with probability $\frac{1}{2}$ it tests if the input is $1$ and returns $1$ and, with probability $\frac{1}{2}$, tests if the input is $0$ and returns $0$. The equality \eqref{ax:AXPBP1} forces such tape to be equal to $\frac{1}{2} \cdot \id{A}$.  Finally, axiom \eqref{ax:canc} is an implication which asserts cancellativity: see Section~\ref{sub:sec:preliminaries:pca}. 

From the theory $\ThPB$ together with the axioms in Figure~\ref{ax:BooleanTAPES}, we generate a congruence $\dot{\sim}$ on $\CatT{\Diag{\SigPB}}$ via the inference rules below.
\begin{equation}\label{eq:congr3}
        \scalebox{0.85}{$
        \centering
        \begin{array}{c}
        \begin{array}{c@{\qquad\qquad}c@{\qquad\qquad}c@{\qquad\qquad}c@{\qquad\qquad}c}
            \inferrule*[right=\eqref{ax:AXPBP2}]{\t_1 \stackrel{\eqref{ax:AXPBP2}}{=} \t_2}{\t_1 \dot{\sim} \t_2}
            &
            \inferrule*[right=\eqref{ax:AXPBP1}]{\t_1 \stackrel{\eqref{ax:AXPBP1}}{=} \t_2}{\t_1 \dot{\sim} \t_2}
            &
            \inferrule*[right=($\textsc{R}$)]{-}{\t \dot{\sim} \t}
            &
            \inferrule*[right=($\textsc{S}$)]{\t_1 \dot{\sim} \t_2}{\t_2 \dot{\sim} \t_1}
            &    
            \inferrule*[right=($\textsc{T}$)]{\t_1 \dot{\sim} \t_2 \quad \t_2 \dot{\sim} \t_3}{\t_1 \dot{\sim} \t_3}
        \end{array}
        \\[10pt]
        \begin{array}{c@{\qquad}c@{\qquad}c@{\qquad}c@{\qquad}c}
        \inferrule*[right=($\ThPB$)]{c \eqPB d}{\tape{c} \dot{\sim} \tape{d}}
            &
            \inferrule*[right=\eqref{ax:canc}]{p\cdot \t \dot{\sim} p\cdot \s }{\t \dot{\sim} \s}
            &
            \inferrule*[right=($;$)]{\t_1 \dot{\sim} \t_2 \quad \s_1 \dot{\sim} \s_2}{\t_1;\s_1 \dot{\sim} \t_2;\s_2}
            &
            \inferrule*[right=($\piu$)]{\t_1 \dot{\sim} \t_2 \quad \s_1 \dot{\sim} \s_2}{\t_1\piu\s_1 \dot{\sim} \t_2 \piu \s_2}
            &
            \inferrule*[right=($\per $)]{\t_1 \dot{\sim} \t_2 \quad \s_1 \dot{\sim} \s_2}{\t_1\per \s_1 \dot{\sim} \t_2 \per \s_2}       
        \end{array}
        \end{array}
        $}
\end{equation}

A straightforward verification shows that the axioms are sound.

\begin{proposition}[Soundness]\label{prop:soundnesstapes}
For all $\s,\t \in \CatT{\Diag{\SigPB}}$, if $\s \eqsynbis \t$ then $\dsem{\s}=\dsem{\t}$.
\end{proposition}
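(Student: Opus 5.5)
The plan is to argue by induction on the derivation of $\s \eqsynbis \t$ using the rules in \eqref{eq:congr3}. Equivalently, we show that the kernel of $\dsem{-}$, namely the relation $\{(\s,\t) \mid \dsem{\s}=\dsem{\t}\}$, is closed under every rule. Since $\dsem{-}$ is defined on $\CatT{\Diag{\SigPB}}$, i.e.\ on terms already quotiented by the axioms of Table~\ref{tabellone}, those structural axioms need no separate treatment: $\dsem{-}$ is well defined on them because, as recalled in Section~\ref{sec:encodingprobboolcircuits}, it is a morphism of rig categories into $\KlD$.

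The structural rules are handled as follows.
\begin{itemize}
\item (\textsc{R}), (\textsc{S}) and (\textsc{T}) hold trivially, since the kernel of a function is an equivalence relation.
\item ($;$) and ($\piu$) follow from the clauses $\dsem{\s;\t}=\dsem{\s};\dsem{\t}$ and $\dsem{\s\piu\t}=\dsem{\s}\piu\dsem{\t}$ in \eqref{eq:semtapes}.
\item ($\per$) follows because $\dsem{-}$ is a morphism of rig categories, so it also preserves $\per$.
\item ($\ThPB$): if $c\eqPB d$, then Proposition~\ref{prop:soundnesspartialb} gives $\osemPB{c}=\osemPB{d}$, hence $\dsem{\tape{c}}=\ParJ(\osemPB{c})=\ParJ(\osemPB{d})=\dsem{\tape{d}}$.
\end{itemize}

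For \eqref{ax:canc}, first observe that $\dsem{-}$ preserves the pca structure. Indeed, $p\cdot\t$ is defined via $\diagp{}$, $\piu$, $\codiag{}$, $\bang{}$ and $\cobang{}$, and each of these is mapped to the corresponding structure in $\KlD$. Hence $\dsem{p\cdot\t}=p\cdot\dsem{\t}$, computed pointwise as $p\cdot\dsem{\t}(y|x)$. If $\dsem{p\cdot \t}=\dsem{p\cdot\s}$, then $p\,\dsem{\t}(y|x)=p\,\dsem{\s}(y|x)$ for all $x,y$. Since $p\neq 0$, we may divide by $p$ and conclude $\dsem{\t}=\dsem{\s}$. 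This is exactly the cancellativity of $\Dis(X)$ noted in Section~\ref{sub:sec:preliminaries:pca}.

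It remains to check the two axioms by direct computation.
\begin{itemize}
\item \eqref{ax:AXPBP2}: as already observed before Figure~\ref{ax:BooleanTAPES}, both sides denote $\star_{1,2}$. The left side is $\ParJ$ of the everywhere-undefined function, which yields the null subdistribution. The right side is $\bang{};\cobang{}$, which is also null.
\item \eqref{ax:AXPBP1}: on input $x\in 2$, the right side sends $x$ to $\delta_{\iota_1 x}+_{1/2}\delta_{\iota_2 x}$. The first branch applies $\ParJ(\osemPB{\coflip{1}};\osemPB{\Flip{1}})$, which gives $\delta_1$ if $x=1$ and $\star$ otherwise. The second branch gives $\delta_0$ if $x=0$ and $\star$ otherwise. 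After $\codiag{}$, the result is $\tfrac12\delta_x$, which equals $\dsem{\tfrac12\cdot \id{A}}(x)$.
\end{itemize}

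The only step requiring any care is the \eqref{ax:canc} rule, because it is not an equation. The key point is that cancellativity is inherited pointwise from the real numbers. The rest is routine.
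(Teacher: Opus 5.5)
Your proposal is correct and follows the paper's proof: a rule-by-rule check of $\dot{\sim}$ against $\dsem{-}$, with direct computation for \eqref{ax:AXPBP1} and \eqref{ax:AXPBP2} and cancellativity of $\KlD$ for \eqref{ax:canc}. The one difference is the $(\ThPB)$ case: the paper argues via $\osem{-}=\dsem{\encoding{-}}$, whereas you combine Proposition~\ref{prop:soundnesspartialb} with $\dsem{\tape{c}}=\ParJ(\osemPB{c})$, which is a slightly more direct route to the same conclusion.
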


For the completeness proof, it is convenient to work with $\CatT{\Diag{\ThPB}}$ rather than $\CatT{\Diag{\SigPB}}$. 
To this end, consider the functor
$
\CatT{Q_{\ThPB}} \colon \CatT{\Diag{\SigPB}} \to \CatT{\Diag{\ThPB}},
$
obtained by applying $\CatT{-}$ to the quotient functor $Q_{\ThPB}\colon \Diag{\SigPB} \to \Diag{\ThPB}$, i.e., the canonical functor identifying diagrams modulo $=_{\ThPB}$.
Moreover, let $\eqsyn$ denote the congruence on $\CatT{\Diag{\ThPB}}$ defined as $\dot{\sim}$ but without the rule $(\ThPB)$. The following result shows that reasoning with $\eqsyn$ suffices.

\begin{proposition}\label{prop:twoequivalence}
For all $\s , \t \in \CatT{\Diag{\SigPB}}$, $\s \dot{\sim} \t$ iff $\CatT{Q_{\ThPB}}(\s) \eqsyn \CatT{Q_{\ThPB}}(\t)$.
\end{proposition}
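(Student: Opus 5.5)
The proof will show both implications by induction on derivations, after establishing one auxiliary fact about the quotient functor. I will write $Q$ for $\CatT{Q_{\ThPB}}$. Since $\CatTapeFUN$ is a functor into $\Cat{CBCat}$ (Theorem~\ref{thm:syntacticadjunction}), $Q$ is a morphism of convex biproduct categories. Concretely, $Q$ is defined on terms by replacing each $\tapeFunct{c}$ with $\tapeFunct{[c]}$ and acting as the identity on all other constructors. It therefore preserves $;$, $\piu$, $\diagp{}$, $\bang{}$, $\codiag{}$, $\cobang{}$, and hence also the pca structure $+_p$ and $p\cdot -$. I will also check that it preserves $\per$. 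This follows from the inductive definition of $\t_1\per\t_2$ in Table~\ref{tab:producttape}, which uses only these constructors together with $\tapeFunct{c\per d}$, and $Q_{\ThPB}$ is strict monoidal.

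For the direction ($\Rightarrow$), I proceed by induction on the derivation of $\s\dot{\sim}\t$ via the rules in~\eqref{eq:congr3}.
\begin{itemize}
\item Rule $(\ThPB)$: from $c\eqPB d$ we get $[c]=[d]$, so $Q(\tape{c})=Q(\tape{d})$ holds literally, and reflexivity of $\eqsyn$ gives the conclusion.
\item Rules \eqref{ax:AXPBP2} and \eqref{ax:AXPBP1}: $Q$ sends each instance to the corresponding instance in $\CatT{\Diag{\ThPB}}$.
\item The congruence rules $(;)$, $(\piu)$, $(\per)$ and \eqref{ax:canc}: these follow from the preservation properties above. For instance, $Q(p\cdot\t)=p\cdot Q(\t)$, so the premise of \eqref{ax:canc} is mapped to a premise of \eqref{ax:canc} in $\eqsyn$.
\item Rules $(\textsc{R})$, $(\textsc{S})$, $(\textsc{T})$: these are immediate.
\end{itemize}

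For the direction ($\Leftarrow$), the key step is a lemma: $Q$ is surjective, and $Q(\s)=Q(\t)$ implies $\s\dot{\sim}\t$.
\begin{itemize}
\item Surjectivity is clear, since every term over $\Diag{\ThPB}$ is obtained by choosing representatives.
\item For the kernel, I will give an explicit description of the hom-sets of $\CatT{\Diag{\ThPB}}$ as terms over $\Diag{\SigPB}$ modulo the least congruence generated by the tape axioms of Table~\ref{tabellone} together with $(\ThPB)$. Equivalently, I will show that $\CatT{\Diag{\SigPB}}/{\approx}$ satisfies the universal property of $\CatT{\Diag{\ThPB}}$, where $\approx$ is generated by $(\ThPB)$ and the congruence rules. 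To do this I will use Theorem~\ref{thm:syntacticadjunction}: the quotient is a convex biproduct category receiving a functor from $\Diag{\ThPB}$, namely $[c]\mapsto\tape{c}$, which is well defined by $(\ThPB)$. Its free extension is then an inverse of the map induced by $Q$.
\item Since $\approx\subseteq\dot{\sim}$, the kernel of $Q$ is contained in $\dot{\sim}$.
\end{itemize}

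I expect this kernel lemma to be the main obstacle. The difficulty is that $Q$ is defined through the free construction, so one must argue that identifying generators before or after freely adding convex biproducts yields the same category. I would handle this purely through the adjunction, verifying that the congruence quotient is itself a convex biproduct category. This holds because the tape axioms are already imposed on $\CatT{\Diag{\SigPB}}$, and quotienting by a congruence for $;$ and $\piu$ preserves them.

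With the lemma in hand, I prove ($\Leftarrow$) by induction on the derivation of $Q(\s)\eqsyn Q(\t)$, strengthened to the statement: if $\s'\eqsyn\t'$, then every pair of preimages $\s,\t$ under $Q$ satisfies $\s\dot{\sim}\t$.
\begin{itemize}
\item Axiom instances \eqref{ax:AXPBP2} and \eqref{ax:AXPBP1}: I lift each instance to a canonical preimage and then connect it to arbitrary preimages using the kernel lemma and transitivity.
\item Congruence rules: I decompose a preimage of $\t_1;\s_1$ as (something $\dot{\sim}$-equal to) a composite of preimages. This is possible by surjectivity of $Q$ and the kernel lemma, since $Q(\tilde\t_1;\tilde\s_1)=Q(\t)$ for any chosen preimages $\tilde\t_1,\tilde\s_1$.
\item Rule \eqref{ax:canc}: it is handled the same way, since $Q(p\cdot\tilde\t)=p\cdot Q(\tilde\t)$.
\end{itemize}
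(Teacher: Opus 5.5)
Your overall plan matches the paper's proof. Both implications go by induction on the rules. The key fact is that the kernel of $Q=\CatT{Q_{\ThPB}}$ is the congruence generated by $(\ThPB)$ and the congruence rules: your $\approx$ is the paper's $\tapeeqPB$, and your kernel lemma is Proposition~\ref{prop:quotienttheory}. Your strengthened induction over all preimages for ($\Leftarrow$) is, if anything, more careful than the paper's treatment.

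The gap is in how you prove the kernel lemma. Invoking Theorem~\ref{thm:syntacticadjunction} requires $\CatT{\Diag{\SigPB}}/{\approx}$ to be a convex biproduct category. Your reason (the tape axioms already hold and a congruence for $;$ and $\piu$ preserves them) only delivers the equational part: the pca-enrichment, the coproducts (any $h\colon V_1\piu V_2\to W$ equals $((\iota_1;h)\piu(\iota_2;h));\codiag{W}$), and the existence of mediating arrows into convex products. \emph{Uniqueness} of those mediating arrows is not an equational consequence of Table~\ref{tabellone}, and arbitrary congruence quotients do not inherit it.

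For example, take $f\neq g\colon U\to V_1$ and $k\colon U\to V_2$ in some $\Cat{C}$. Quotient $\CatT{\Cat{C}}$ by the congruence $\equiv$ generated by $\frac12\cdot\tape{f}\equiv\frac12\cdot\tape{g}$. Then $h=(\tape{f};\iota_1)+_{\frac12}(\tape{k};\iota_2)$ and $h'=(\tape{g};\iota_1)+_{\frac12}(\tape{k};\iota_2)$ satisfy $h;\pi_j\equiv h';\pi_j$ for $j=1,2$, yet $h\not\equiv h'$. In the stochastic-matrix description every one-hole context is affine in its hole, and it scales the hole's mass by a factor independent of the filler. Since $C[\tape{f}]$ must stay substochastic while $h$ already has full mass, any context with $C[\frac12\cdot\tape{f}]=h$ ignores its hole, so no rewriting step moves $h$. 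A cancellation rule would restore uniqueness via $\frac12\cdot h=(h;\pi_1;\iota_1)+_{\frac12}(h;\pi_2;\iota_2)$, but $\approx$ has none.

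For your particular $\approx$ the quotient is in fact a convex biproduct category. However, the natural way to see this is to know that $\approx$ identifies two columns whenever their entries have the same image under $Q$, which is exactly the kernel lemma, so the argument as stated is circular.

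The repair is the paper's route: use the presentation of $\CatT{-}$ rather than its universal property. Define $G\colon\CatT{\Diag{\ThPB}}\to\CatT{\Diag{\SigPB}}/{\approx}$ by structural induction. It sends $\tape{[c]}$ to $[\tape{c}]_{\approx}$, which is independent of the representative by $(\ThPB)$, and acts as the identity on the other constructors. This only needs the target to satisfy the equations of Table~\ref{tabellone}, which it does. The map induced by $Q$ and $G$ are then mutually inverse because they are so on generators. That gives the kernel lemma, and the rest of your argument goes through unchanged.
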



Henceforth, we write $\Cat{B}[1,A^n]$ for the set of $n$-ary Boolean vectors, i.e., the set of tapes in $\CatT{\Diag{\ThPB}}[1,A^n]$ of the form
$
\overrightarrow{b} \defeq \bigotimes_{i=1}^n b_i,
$
where each $b_i \in \{\Flip{0}[t], \Flip{1}[t]\}$ if $n \neq 0$, and $\overrightarrow{b} \defeq \id{\uno}$ when $n = 0$.
Given $\overrightarrow{b} \in \Cat{B}[1,A^n]$, we denote by $\overleftarrow{b}$ the tape
$
\bigotimes_{i=1}^n b'_i \in \CatT{\Diag{\ThPB}}[A^n,1],
$
where $b'_i$ is defined as follows: $b'_i = \coflip{1}[t]$ if $b_i = \Flip{1}[t]$, and $b'_i = \coflip{0}[t]$ if $b_i = \Flip{0}[t]$.

We can now illustrate three key properties of $\eqsyn$. First, axiom \eqref{ax:AXPBP2} easily entails the following.

\begin{lemma}\label{lemma:bottomstar2}
 $\star_{A^0,A^n} \sim \Flipn{\bot}[t]$.  
\end{lemma}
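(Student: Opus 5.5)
We need to prove $\star_{A^0,A^n} \sim \Flipn{\bot}[t]$, where $\Flipn{\bot}[t]$ denotes the tape $\tapeFunct{\Flipn{\bot}} = \tapeFunct{\Flip{\bot};\ncopierbis}$. Axiom \eqref{ax:AXPBP2} states precisely the case $n=1$: the tape $\Flip{\bot}[t]$ equals the tape on its right-hand side, which is $\bang{1};\cobang{A}$, i.e.\ $\star_{A^0,A}$. The plan is to reduce the general case to this one by functoriality of $\tapeFunct{-}$ and the fact that $\star$ is absorbing for composition.

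First, since $\tapeFunct{-}$ preserves composition,
\[
\Flipn{\bot}[t] = \tapeFunct{\Flip{\bot};\ncopierbis} = \tapeFunct{\Flip{\bot}};\tapeFunct{\ncopierbis}.
\]
Second, rewrite the first factor using \eqref{ax:AXPBP2}, with rule $(;)$ and reflexivity for the second factor, to get
\[
\Flipn{\bot}[t] \sim \star_{A^0,A};\tapeFunct{\ncopierbis}.
\]
Third, the $\Cat{PCA}$-enrichment of $\CatT{\Diag{\ThPB}}$, specifically the law $\star;f=\star$ from \eqref{eq:enr}, gives $\star_{A^0,A};\tapeFunct{\ncopierbis} = \star_{A^0,A^n}$. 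This law holds on the nose in the tape category. Concretely, $\bang{1};\cobang{A};\tapeFunct{\ncopierbis} = \bang{1};\cobang{A^n}$ by naturality of $\cobang{}$ \eqref{ax:tapes:cobangnat}.

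Chaining these steps by transitivity yields the claim. The degenerate case $n=0$ needs a separate check, since then $\ncopierbis$ should be $\CBdischarger$, and one must confirm that $\star_{A^0,A^0}$ agrees with $\tapeFunct{\Flip{\bot};\CBdischarger}$. The argument is unchanged: the same naturality of $\cobang{}$ applies with target $A^0$.

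The only real obstacle is bookkeeping. One must check that the right-hand side of \eqref{ax:AXPBP2} is literally $\star_{A^0,A} = \bang{1};\cobang{A}$, and that $\Flipn{\bot}[t]$ indeed denotes $\tapeFunct{\Flipn{\bot}}$. Once these identifications are in place, the proof is a three-line calculation.
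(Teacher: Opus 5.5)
Your proposal is correct and is essentially the paper's proof. The paper runs the same chain: it unfolds the definition of $\Flipn{\bot}$ into a composite of tapes, rewrites the $\Flip{\bot}$ factor with \eqref{ax:AXPBP2}, absorbs the remaining factor into the unit by naturality \eqref{ax:tapes:cobangnat}, and concludes by the definition of $\star$; your separate check for $n=0$ is harmless but unnecessary, since that naturality step covers all $n$ uniformly.
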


The second property crucially relies on axiom \eqref{ax:AXPBP1} and on the monoidal enrichment in \eqref{eq:monoidalenrichment}.

\begin{lemma}\label{lemma:axiomsninput}
 For all $n\in \mathbb{N}$, $\frac{1}{2^n}\cdot \id{A^n} \sim \sum_{\overrightarrow{b}\in \Cat{B}[1,A^n]}^{} \frac{1}{2^n}\cdot\overleftarrow{b};\overrightarrow{b}$;
\end{lemma}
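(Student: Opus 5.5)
Argue by induction on $n$, combining \eqref{ax:AXPBP1} with the monoidal enrichment \eqref{eq:monoidalenrichment}. We first read \eqref{ax:AXPBP1} in pca notation. Its left-hand side is $\frac{1}{2}\cdot\id{A}$, and its right-hand side is the tape $\diagp[\frac12]{A};(\tape{\coflip{1};\Flip{1}}\piu\tape{\coflip{0};\Flip{0}});\codiag{A}$. By the definition of $+_p$ in $\CatTapeC$, this tape is $\overleftarrow{b};\overrightarrow{b} +_{\frac12} \overleftarrow{b'};\overrightarrow{b'}$ with $b=\Flip{1}[t]$ and $b'=\Flip{0}[t]$, i.e.\ $\sum_{\overrightarrow{b}\in\Cat{B}[1,A]}\frac12\cdot\overleftarrow{b};\overrightarrow{b}$. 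We also use that $\tape{-}$ is a functor, so $\tape{\coflip{1};\Flip{1}}=\coflip{1}[t];\Flip{1}[t]$. This gives the case $n=1$. The case $n=0$ is trivial: $\Cat{B}[1,A^0]=\{\id{\uno}\}$, and both sides are $1\cdot \id{\uno}=\id{\uno}$.

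For the inductive step, write $\id{A^{n+1}}=\id{A^n}\per\id{A}$. Using the two distributivity laws in \eqref{eq:monoidalenrichment} we get $(p\cdot f)\per(q\cdot g)=pq\cdot(f\per g)$; the derivation is $(f+_p\star)\per g = (f\per g)+_p(\star\per g)$ followed by $\star\per g=\star$. Hence
\[\tfrac{1}{2^{n+1}}\cdot\id{A^{n+1}} = \big(\tfrac{1}{2^n}\cdot\id{A^n}\big)\per\big(\tfrac12\cdot\id{A}\big).\]
Since $\sim$ is a congruence for $\per$, we may replace the two factors by $\sum_{\overrightarrow{b}}\frac{1}{2^n}\cdot\overleftarrow{b};\overrightarrow{b}$ (induction hypothesis) and $\sum_{\overrightarrow{c}}\frac12\cdot\overleftarrow{c};\overrightarrow{c}$ (base case). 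Distributing $\per$ over these convex sums again via \eqref{eq:monoidalenrichment} yields $\sum_{\overrightarrow{b},\overrightarrow{c}}\frac{1}{2^{n+1}}\cdot(\overleftarrow{b};\overrightarrow{b})\per(\overleftarrow{c};\overrightarrow{c})$. Interchange (functoriality of $\per$) rewrites each summand as $(\overleftarrow{b}\per\overleftarrow{c});(\overrightarrow{b}\per\overrightarrow{c})$, and $\overrightarrow{b}\per\overrightarrow{c}$ ranges bijectively over $\Cat{B}[1,A^{n+1}]$, with $\overleftarrow{b}\per\overleftarrow{c}$ being exactly its reverse.

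The main obstacle is bookkeeping of the iterated convex sums. Distributivity is stated only for binary $+_p$ and $\star$, so I need an auxiliary fact: for weights summing to at most $1$,
\[\Big(\sum_i p_i\cdot f_i\Big)\per\Big(\sum_j q_j\cdot g_j\Big)=\sum_{i,j}p_iq_j\cdot(f_i\per g_j),\]
where the double sum on the right may be reordered. I would prove it by induction on the length of the sums, using the pca laws \eqref{eq:pca} (associativity and commutativity with reweighting) to regroup terms. The uniform weights $\frac{1}{2^n}$ make the normalising coefficients $q_j$ in the definition of $\sum$ routine, though they must still be tracked explicitly.
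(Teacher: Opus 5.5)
Your proposal is correct and follows essentially the same route as the paper. You induct on $n$, split $\frac{1}{2^{n+1}}\cdot\id{A^{n+1}}$ into a $\per$ of $\frac12\cdot\id{A}$ and $\frac{1}{2^n}\cdot\id{A^n}$ via \eqref{eq:monoidalenrichment}, replace the factors using the induction hypothesis and \eqref{ax:AXPBP1}, then distribute, apply interchange and reindex over $\Cat{B}[1,A^{n+1}]$. The differences are cosmetic: you put the new wire on the other side and treat $n=1$ as a separate base case. Your auxiliary lemma on distributing $\per$ over iterated convex sums simply spells out the ``rearranging'' step that the paper leaves implicit.
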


Note that for $n=1$, the lemma above reduces to axiom \eqref{ax:AXPBP1}. Combining Lemma~\ref{lemma:axiomsninput} and axiom \eqref{ax:canc} we obtain the third key result.

\begin{lemma}\label{lemma:PBP2} 
Let $\s,\t\in \CatT{\Diag{\ThPB}}[A^n,A^m]$. If, for all $\overrightarrow{b}\in \Cat{B}[1,A^n]$, $\overrightarrow{b};\s \eqsyn\overrightarrow{b}; \t$, then $\s \eqsyn \t$.
\end{lemma}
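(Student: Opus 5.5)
The plan is to precompose both $\s$ and $\t$ with the scaled identity $\frac{1}{2^n}\cdot\id{A^n}$, expand that identity using Lemma~\ref{lemma:axiomsninput}, use the hypothesis summand by summand, and then cancel the scalar with \eqref{ax:canc}.

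First I observe that, by the $\Cat{PCA}$-enrichment of $\CatT{\Diag{\ThPB}}$ (equations \eqref{eq:enr}), scalar multiplication commutes with composition:
\[ (\tfrac{1}{2^n}\cdot \id{A^n});\s = \tfrac{1}{2^n}\cdot \s . \]
Indeed, $p\cdot f = f+_p\star$ and $\star;\s=\star$. These equalities hold already in $\CatT{\Diag{\ThPB}}$, hence also modulo $\eqsyn$.

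By Lemma~\ref{lemma:axiomsninput} and the congruence rule $(;)$, this gives
\[ \tfrac{1}{2^n}\cdot \s \;\eqsyn\; \Big(\sum_{\overrightarrow{b}} \tfrac{1}{2^n}\cdot \overleftarrow{b};\overrightarrow{b}\Big);\s . \]
Right distributivity of composition over $+_p$ and $\star$, again from \eqref{eq:enr} and extended inductively to the $n$-ary sums $\sum_i p_i\cdot(-)_i$, rewrites the right-hand side as
\[ \sum_{\overrightarrow{b}} \tfrac{1}{2^n}\cdot \overleftarrow{b};(\overrightarrow{b};\s). \]
A small auxiliary fact is needed here: composition distributes over the derived $n$-ary convex sums, including the renormalised coefficients $q_j$. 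This follows by induction on the number of summands, since each step is an instance of \eqref{eq:enr}.

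By hypothesis, $\overrightarrow{b};\s\eqsyn\overrightarrow{b};\t$ for every $\overrightarrow{b}\in\Cat{B}[1,A^n]$. Since $\eqsyn$ is a congruence with respect to $;$ and $\oplus$, and $+_p$ is defined as $\diagp{};(-\oplus-);\codiag{}$, it is also a congruence with respect to $+_p$ and scalar multiplication. We may therefore replace each summand $\overleftarrow{b};\overrightarrow{b};\s$ by $\overleftarrow{b};\overrightarrow{b};\t$. Running the same chain backwards yields $\frac{1}{2^n}\cdot\s\eqsyn\frac{1}{2^n}\cdot\t$.

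Finally, rule \eqref{ax:canc} with $p=\frac{1}{2^n}$ gives $\s\eqsyn\t$. For $n=0$ the statement is trivial, since $\overrightarrow{b}=\id{\uno}$ is the only vector. This case is also consistent with the argument, because $p=1$.

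The only mildly delicate point is the bookkeeping for the $n$-ary sums and their compatibility with composition and congruence. Everything else is direct use of the earlier lemmas.
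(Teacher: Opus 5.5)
Your proposal is correct and follows the paper's proof essentially step for step: scale by $\frac{1}{2^n}$ via precomposition with $\frac{1}{2^n}\cdot\id{A^n}$, expand with Lemma~\ref{lemma:axiomsninput}, distribute composition over the convex sum using \eqref{eq:enr}, apply the hypothesis summand by summand, fold back, and cancel with \eqref{ax:canc}. Your extra remarks make explicit some bookkeeping that the paper leaves implicit: that $\eqsyn$ is a congruence for $+_p$, the induction for $n$-ary sums, and the case $n=0$.
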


\subsection{Completeness}\label{ssec:probbooltapes:completeness}
%

We denote by $\eqsynq{\CatT{\Diag{\ThPB}}}$ the quotient of $\CatT{\Diag{\ThPB}}$ by $\eqsyn$, and by
$
Q_{\eqsyn} \colon \CatT{\Diag{\ThPB}} \to \eqsynq{\CatT{\Diag{\ThPB}}}
$
the corresponding quotient functor. By Propositions~\ref{prop:soundnesstapes} and \ref{prop:twoequivalence}, $\dsem{-}\colon \CatT{\Diag{\SigPB}} \to \KlD$ factors as
\[\xymatrix{\CatT{\Diag{\SigPB}} \ar[r]^{\CatT{Q_\ThPB}} & \CatT{\Diag{\ThPB}} \ar[r]^{Q_\eqsyn} & \eqsynq{\CatT{\Diag{\ThPB}}} \ar@{.>}[r]^I& \KlD}\]
for some functor $I$. Proving completeness amounts to showing that $I$ is faithful.


\medskip
 Now, 
observe that $\ParJ\colon \,\Cat{Par} \to \KlD$ from Section~\ref{ssec:partialboolean} restricts to $\ParJ_2\colon \Cat{Par}_2 \to \KlD$. Since $\KlD$ is a convex biproduct category, by the two adjunctions in \eqref{eq:adjunctions}, there exists a functor \[\ParJ_2'\colon \stmat{\Cat{Par}_2^+} \to \KlD\text{.} \]
\vspace{-0.4cm}
\begin{minipage}[c]{0.55\textwidth}
Consider the diagram on the right, where the vertical isomorphisms are those in \eqref{eq:i2so}. 
Such diagram commutes, as stated by the following result where $\comp$ is the composition of the left vertical isomorphisms with $\ParJ_2'$.
\begin{lemma}\label{lemma:diagramma-commutativo}
$I\circ \eqsynq{Q}=\comp$.
\end{lemma}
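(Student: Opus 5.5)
The plan is to compare the two functors $I\circ \eqsynq{Q}$ and $\comp$ on $\CatT{\Diag{\ThPB}}$ by structural induction on tape terms generated by the grammar \eqref{tapesGrammar}, with $c$ ranging over arrows of $\Diag{\ThPB}$. It suffices to show three things. First, both functors agree on objects. Second, both preserve $;$ and $\piu$. Third, both agree on every generator $\diagp{U}$, $\bangp{U}$, $\tapeFunct{c}$, $\cobang{U}$, $\codiag{U}$, $\id{U}$, $\id{\zero}$ and $\sigma^{\piu}_{U,V}$.

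For $I\circ \eqsynq{Q}$ these facts are immediate. By construction, $I$ composed with $Q_{\eqsyn}$ and $\CatT{Q_\ThPB}$ is $\dsem{-}$, and $\CatT{Q_\ThPB}$ is surjective. Hence $I\circ\eqsynq{Q}$ acts on a tape as $\dsem{-}$ acts on any representative, and \eqref{eq:semtapes} gives its values on generators together with preservation of $;$ and $\piu$. On objects it sends $\Piu[i] A^{n_i}$ to $\Piu[i] 2^{n_i}$.

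For $\comp$ I would make the functors in \eqref{eq:adjunctions} explicit. Both vertical isomorphisms in \eqref{eq:i2so} are morphisms in $\Cat{CBCat}$: one comes from \Cref{cor:isotapematrices}, and the other is $\stmat{-}$ applied to $(-)^+$ of the isomorphism of \Cref{prop:isopartialboolean}. The functor $\ParJ_2'$ is obtained by transposing $\ParJ_2$ along both adjunctions, so it is also a morphism in $\Cat{CBCat}$. It therefore preserves $;$, coproducts, and the pca enrichment. Concretely, it sends a stochastic matrix with entries $(p_{ji},d_{ji})$, where $d_{ji}\in\Dis(\Cat{Par}_2[2^{n_i},2^{m_j}])$, to the $\KlD$-arrow whose $(j,i)$ block is $p_{ji}\cdot\sum_{f} d_{ji}(f)\cdot \ParJ(f)$, with sums and scalars taken in the pca $\KlD[2^{n_i},2^{m_j}]$. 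I would check that the preservation of $\piu$ is strict rather than only up to isomorphism. This holds because on both sides the objects are words and $\piu$ is concatenation, respectively disjoint union in $\KlD$ with the canonical injections.

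The generator cases then follow. A tape $\tapeFunct{c}$ corresponds to the $1\times 1$ matrix $(1,\delta_{c})$, which is sent to $(1,\delta_{\osemPB{c}})$ and then by $\ParJ_2'$ to $\ParJ(\osemPB{c})=\dsem{\tapeFunct{c}}$. Identities and symmetries correspond to permutation matrices with Dirac identity entries, so they go to the corresponding permutations in $\KlD$. The tape $\codiag{U}$ is the copairing $[\id{},\id{}]$, and $\cobang{U}$ and $\bangp{U}$ are the unique maps to and from $\zero$. These are preserved because $\comp$ preserves coproducts and the zero object, and they agree with \eqref{ex:comonoids}. Finally $\diagp{U}=\langle\id{U},\id{U}\rangle_{p,1-p}$ is the column matrix $((p,\delta_{\id{}}),(1-p,\delta_{\id{}}))$, which $\ParJ_2'$ sends to the map $x\mapsto \delta_{\iota_1 x}+_p\delta_{\iota_2 x}$, i.e.\ $\diagp{2^n}$.

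The main obstacle I expect is not the induction itself but making the second adjunction concrete enough. One must verify that the transpose $\ParJ_2'$ really acts entrywise by the formula above, and that the isomorphism of \Cref{cor:isotapematrices} sends $\diagp{U}$ and $\tapeFunct{c}$ to the matrices claimed. I would do this by unfolding the unit and counit of the adjunctions in \eqref{eq:adjunctions} as described in \cite{probbooltapesfossacs}. A cleaner alternative, which I would try first, is to avoid the formula entirely and invoke freeness: both $I\circ\eqsynq{Q}$ (precomposed with the quotient) and $\comp$ are $\Cat{CBCat}$-morphisms out of $\CatT{\Diag{\ThPB}}$. They agree on the image of the unit $\Diag{\ThPB}\to\CatT{\Diag{\ThPB}}$, since both send $\tapeFunct{c}$ to $\ParJ(\osemPB{c})$, so by \Cref{thm:syntacticadjunction} they coincide.
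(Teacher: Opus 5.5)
Your proposal is correct and is essentially the paper's argument. The paper also uses the surjective quotient (through the isomorphism $F$ of Proposition~\ref{prop:quotienttheory}) to reduce the claim to checking that $\comp$ agrees with $\dsem{-}$ on generators, but it computes $\comp$ on generators via Lemma~\ref{lemma:J_2}, $\comp=\jcompiso^\flat$ (i.e.\ the uniqueness-of-transposes idea in your freeness alternative), together with the explicit inductive description~\eqref{eq:bemolle} of $(-)^\flat$ from \cite{probbooltapesfossacs}, rather than by unfolding the matrix representation.
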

\end{minipage}
\begin{minipage}[t]{0.15\textwidth}
\raisebox{0.7cm}{$
\qquad \xymatrix@R=2ex@C=3ex{
\CatT{\Diag{\ThPB}} \ar[r]^{\eqsynq{Q}} \ar[d]_{\cong} 
  & \eqsynq{\CatT{\Diag{\ThPB}}} \ar[dd]^{I} \\
{\stmat{\Diag{\ThPB}^+}} \ar[d]_{\cong} 
  & {} \\
\stmat{\Cat{Par}_2^+} \ar[r]_{\ParJ_2'} 
  & \KlD
}
$}
\end{minipage}
\vspace{0.4cm}


We first observe that $\comp$ is faithful for subdistributions of Boolean vectors.
\begin{lemma}\label{lemma:Ffaith}
For all $\t,\t'\in \CatT{\Diag{\ThB}}[A^0,A^m]$, if $\comp(\t)=\comp(\t')$, then $\t=\t'$.
\end{lemma}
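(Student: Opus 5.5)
Note first that the statement writes $\CatT{\Diag{\ThB}}$ and $\t=\t'$. I read this as tapes $\t,\t'\in\CatT{\Diag{\ThPB}}[A^0,A^m]$, with the conclusion $\t=\t'$ holding in $\CatT{\Diag{\ThPB}}$, i.e.\ as equal arrows there. The plan is to pass through the isomorphisms in \eqref{eq:i2so}. Under them, $\t$ and $\t'$ become $1\times 1$ stochastic matrices in $\stmat{\Cat{Par}_2^+}[1,2^m]$. The single entry of each is a subdistribution over the finite set $\Cat{Par}_2[1,2^m]$. This set consists of the $2^m$ total points (the Boolean vectors $\osemPB{\overrightarrow{b}}$) together with the everywhere-undefined map $\osemPB{\Flipn{\bot}}$. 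This description uses Lemma~\ref{lemma:caratterizzazionecircuiti0n} and Proposition~\ref{prop:isopartialboolean}.

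So there are scalars $p_{\overrightarrow{b}},p_\bot$ with total mass at most $1$ such that
\[
\t=\sum_{\overrightarrow{b}\in\Cat{B}[1,A^m]}p_{\overrightarrow{b}}\cdot\overrightarrow{b}\;+\;p_\bot\cdot\Flipn{\bot}[t],
\]
and similarly $\t'$ has scalars $q_{\overrightarrow{b}},q_\bot$. These sums are taken in the pca structure of $\CatT{\Diag{\ThPB}}$, which corresponds to the entry-wise convex sums of the matrix.

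Next I compute $\comp$. The functor $\ParJ_2'$ is the pca-enriched functor induced by $\ParJ_2$, so it sends this matrix to
\[
\sum_{\overrightarrow{b}} p_{\overrightarrow{b}}\cdot \delta_{\osemPB{\overrightarrow{b}}(\bullet)}+p_\bot\cdot\star .
\]
Since $\star$ is the zero of the pca, this is the subdistribution on $2^m$ that gives mass $p_{\overrightarrow{b}}$ to the vector $\osemPB{\overrightarrow{b}}(\bullet)$. Distinct Boolean vectors denote distinct points. Hence $\comp(\t)=\comp(\t')$ gives $p_{\overrightarrow{b}}=q_{\overrightarrow{b}}$ for every $\overrightarrow{b}$.

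The remaining issue is the $\bot$-component. Here $p_\bot$ and $q_\bot$ may differ, since $\comp$ forgets them. This is exactly where faithfulness can fail in $\stmat{\Cat{Par}_2^+}$, as the example preceding Figure~\ref{ax:BooleanTAPES} shows. So literal equality in $\CatT{\Diag{\ThPB}}$ cannot hold in general. The intended conclusion must be $\t\eqsyn\t'$, equivalently equality in $\eqsynq{\CatT{\Diag{\ThPB}}}$, and that is what I would prove.

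By Lemma~\ref{lemma:bottomstar2}, $\Flipn{\bot}[t]\eqsyn\star_{A^0,A^m}$. Then $p_\bot\cdot\star=\star$ by the definition of scalar multiplication. The star-absorbing law of pcas removes that summand, leaving $\t\eqsyn\sum_{\overrightarrow{b}}p_{\overrightarrow{b}}\cdot\overrightarrow{b}$, and likewise for $\t'$. With equal coefficients the two right-hand sides are literally the same expression, so $\t\eqsyn\t'$.

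I expect the main obstacle to be the first step: justifying that every tape $A^0\to A^m$ has this normal form. That is, the entry of the corresponding matrix decomposes as a convex combination of Dirac tapes, and this holds modulo the tape axioms, not merely at the level of matrices. I would obtain it by transporting the obvious decomposition of a subdistribution in $\Dis(\Cat{Par}_2[1,2^m])$ back along the isomorphisms of Corollary~\ref{cor:isotapematrices}. This uses that those isomorphisms are pca-enriched and send $\delta_{f}$ to $\tapeFunct{f}$.
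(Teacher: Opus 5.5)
\textbf{The problem is your reading of the statement, not your mathematics.} $\ThB$ is not a typo for $\ThPB$. The lemma deliberately concerns tapes over \emph{total} Boolean circuits: the paper introduces it as faithfulness of $\comp$ ``for subdistributions of Boolean vectors''. On that domain the literal conclusion $\t=\t'$ is true. You widened the domain to $\CatT{\Diag{\ThPB}}[A^0,A^m]$ and weakened the conclusion to $\t\eqsyn\t'$. What you then proved is Lemma~\ref{lemma:PBP1}, and you proved it essentially by the paper's own argument for that lemma (the normal form of Lemma~\ref{lemma:PBP0}, then Lemma~\ref{lemma:bottomstar2}), with the present lemma inlined as your coefficient comparison. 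Your claim that literal equality ``cannot hold'' is correct only for the widened domain. For the stated domain it is false. Moreover, the literal equality is exactly what the paper uses in the proof of Lemma~\ref{lemma:PBP1} to obtain \eqref{BLA2}.

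\textbf{The stated lemma is already contained in your argument.} By Corollary~\ref{cor:isotapematrices} and Proposition~\ref{prop:foolboolean}, a tape in $\CatT{\Diag{\ThB}}[A^0,A^m]$ amounts to a subdistribution over $\Sets_2[1,2^m]$, i.e.\ over Boolean vectors. (Implicitly $\CatT{\Diag{\ThB}}$ sits inside $\CatT{\Diag{\ThPB}}$ via the inclusion $\Diag{\ThB}\hookrightarrow\Diag{\ThPB}$, which is faithful by Proposition~\ref{prop:soundnesspartialb} and Theorem~\ref{thm:completenessbooleancircuits}.) Hence $p_\bot=q_\bot=0$, and there is no $\bot$-summand to discard. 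Your computation of $\comp$ then shows that it acts as the bijection $\Dis(\Sets_2[1,2^m])\cong\Dis(2^m)$. So $\comp(\t)=\comp(\t')$ forces $p_{\overrightarrow{b}}=q_{\overrightarrow{b}}$ for all $\overrightarrow{b}$. Transporting back along the isomorphisms gives $\t=\t'$ on the nose, with no use of Lemma~\ref{lemma:bottomstar2}, of \eqref{ax:AXPBP2}, or of $\eqsyn$. This is precisely the paper's proof, which phrases it as follows: $\ParJ_2^\sharp$ restricted to $\Sets_2^+[1,2^m]$ is this bijection, and on $[A^0,A^m]$ the isomorphisms in \eqref{eq:i2so} factor through $\Cat{Par}_2^+[2^0,2^m]$ and the unit $\eta$.
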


Moreover, by Lemma~\ref{lemma:caratterizzazionecircuiti0n}, we have the following normal forms for tapes $\t \colon A^0 \to A^m$.

\begin{lemma}\label{lemma:PBP0}
For all  $\t\in \CatT{\Diag{\ThPB}}[A^0,A^m]$, there exist, $p,q\in [0,1]$ and for $i=1,\dots,n$,  $p_i\in (0,1)$ such that $\sum_{i=1}^{n}p_i= 1$, and $\overrightarrow{b}_i\in \Cat{B}[1,A^m]$ such that $$\t=(\sum_{i=1}^{n}p_i\cdot \overrightarrow{b}_i)+_p(\star_{A^0,A^m} +_q \Flipm{\bot}[t])\text{.}$$ 
  \end{lemma}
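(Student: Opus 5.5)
We derive the normal form from the convex-biproduct structure: Theorem~\ref{thm:TCconvexbiproductcategory} and Corollary~\ref{cor:isotapematrices} describe an arrow $A^0\to A^m$ concretely, and Lemma~\ref{lemma:caratterizzazionecircuiti0n} classifies its entries. Both source and target are single monomials, i.e.\ one-summand objects of $\CatT{\Diag{\ThPB}}$. So under the isomorphism $\CatT{\Diag{\ThPB}}\cong\stmat{\Diag{\ThPB}^+}$ of \eqref{eq:i2so}, $\t$ is a $1\times 1$ matrix. Its unique entry is a subdistribution over $\Diag{\ThPB}[1,A^m]$, with total mass at most $1$.

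Translating back to tapes, the entry is a finite convex combination of tapes $\tapeFunct{c_j}$ with weights $r_j$, $\sum_j r_j\le 1$, the missing mass being $\star_{A^0,A^m}$. Concretely, $\t = \sum_j r_j\cdot \tapeFunct{c_j}$ in the pca structure of the homset, using the iterated sum of Section~\ref{sub:sec:preliminaries:pca}. By Lemma~\ref{lemma:caratterizzazionecircuiti0n}, each $c_j\in\Diag{\SigPB}[1,A^m]$ is, modulo $\ThPB$, either $\Flipm{\bot}$ or a Boolean circuit $b\in\Diag{\SigB}[1,A^m]$. By Theorem~\ref{thm:completenessbooleancircuits}, such a $b$ equals a Boolean vector $\bigotimes_i b_i$. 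Since $\tapeFunct{-}$ is monoidal, $\tapeFunct{\bigotimes_i b_i}=\bigotimes_i \tapeFunct{b_i}$, which is an element $\overrightarrow{b}\in\Cat{B}[1,A^m]$. Because we are working in $\CatT{\Diag{\ThPB}}$, these are genuine equalities of tapes.

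Next we regroup. Using associativity, commutativity and idempotency of $+_p$ from \eqref{eq:pca}, we merge repeated occurrences of the same Boolean vector and of $\Flipm{\bot}[t]$, and we drop zero-weight terms. Let $P=\sum_{\text{vectors}} r_j$, let $s$ be the weight on $\Flipm{\bot}[t]$, and let $z = 1-P-s$ be the residual weight on $\star$.

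If $P>0$, normalise by setting $p_i = r_i/P$, which gives weights in $(0,1]$ summing to $1$. With a single vector the weight is $1$; the statement's "$p_i\in(0,1)$" then has to be read generously, or the case $n=1$ is treated as a Dirac. Taking $p:=P$, and $q:=z/(1-P)$ when $P<1$ (arbitrary otherwise), the associativity law of \eqref{eq:pca} rewrites the sum as $(\sum_i p_i\cdot\overrightarrow{b}_i)+_p(\star +_q \Flipm{\bot}[t])$. If $P=0$, take $p=0$, so that $x+_0y=y$, and choose $n$ arbitrarily, say a single dummy vector.

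The main obstacle is the bookkeeping with the pca laws. Checking the degenerate cases $p\in\{0,1\}$ and $n=1$ is routine once the extended $+_p$ convention of Section~\ref{sub:sec:preliminaries:pca} is used.
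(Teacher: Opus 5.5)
Your proposal is correct and follows the paper's own argument: you read the tape as a subdistribution over $\Diag{\ThPB}[1,A^m]$ via $\CatT{\Diag{\ThPB}}\cong\stmat{\Diag{\ThPB}^+}$, classify its support with Lemma~\ref{lemma:caratterizzazionecircuiti0n}, and regroup with the pca laws, and your explicit choice $p=P$, $q=z/(1-P)$ is exactly right and more detailed than the paper's one-line sketch. In the degenerate cases ($n=1$, or $P=0$ with a dummy vector) you need not read $p_i\in(0,1)$ generously: by idempotency $\overrightarrow{b}=\overrightarrow{b}+_{\frac{1}{2}}\overrightarrow{b}$, so listing a vector twice with weights $\frac{1}{2}$ meets the statement literally.
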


The previous two results, combined with Lemma~\ref{lemma:bottomstar2} allow us to prove the following key property.

\begin{lemma}\label{lemma:PBP1}
  For all $\t,\t'\in \CatT{\Diag{\ThPB}}[A^0,A^m]$, if $\comp(\t)=\comp(\t')$, then $\t\sim \t'$.
\end{lemma}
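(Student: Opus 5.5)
The plan is to reduce both tapes to the normal form of Lemma~\ref{lemma:PBP0}, collapse the failure components using Lemma~\ref{lemma:bottomstar2}, and then apply the faithfulness of Lemma~\ref{lemma:Ffaith} to the remaining Boolean-vector parts.

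\textbf{Normal forms.} By Lemma~\ref{lemma:PBP0}, write
\[\t=(\textstyle\sum_{i=1}^{n}p_i\cdot \overrightarrow{b}_i)+_p(\star_{A^0,A^m} +_q \Flipm{\bot}[t])\]
and similarly
\[\t'=(\textstyle\sum_{j=1}^{n'}p'_j\cdot \overrightarrow{b'}_j)+_{p'}(\star_{A^0,A^m} +_{q'} \Flipm{\bot}[t]).\]

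\textbf{Collapsing the failure part.} Lemma~\ref{lemma:bottomstar2} gives $\Flipm{\bot}[t]\sim\star_{A^0,A^m}$; the superscript $m$ versus $n$ is only a renaming of the arity. Since $\sim$ is a congruence and the pca operations on tapes are defined from $\diagp{}$, $\piu$ and $\codiag{}$, we may substitute it inside $+_q$. Idempotency of $+_q$ (the law $x+_qx=x$ in \eqref{eq:pca}, which holds in the pca-enriched category $\CatT{\Diag{\ThPB}}$) then gives $\star+_q\Flipm{\bot}[t]\sim\star$. Hence
\[\t\sim(\textstyle\sum_i p_i\cdot\overrightarrow{b}_i)+_p\star = p\cdot \textstyle\sum_i p_i\cdot\overrightarrow{b}_i,\]
and likewise
\[\t'\sim p'\cdot\textstyle\sum_j p'_j\cdot\overrightarrow{b'}_j.\]
Call these right-hand sides $\s$ and $\s'$. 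Each is a subconvex combination of Boolean vectors, so it lies in $\CatT{\Diag{\ThB}}[A^0,A^m]$: only Boolean circuits $\Flip{0},\Flip{1}$ appear inside the tapes, viewed via the inclusion of $\ThB$ into $\ThPB$.

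\textbf{Transferring the semantic hypothesis.} Soundness (Proposition~\ref{prop:soundnesstapes}, via Proposition~\ref{prop:twoequivalence}) says $I\circ Q_{\sim}$ identifies $\sim$-equivalent tapes. By Lemma~\ref{lemma:diagramma-commutativo}, $\comp=I\circ Q_\sim$, so $\comp(\s)=\comp(\t)=\comp(\t')=\comp(\s')$. Lemma~\ref{lemma:Ffaith} then gives $\s=\s'$, so $\t\sim\s=\s'\sim\t'$.

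\textbf{Expected obstacle.} The main subtlety is applying Lemma~\ref{lemma:Ffaith}: its hypothesis concerns tapes built from $\Diag{\ThB}$, while $\s,\s'$ live in $\CatT{\Diag{\ThPB}}$ and $\comp$ is defined on the latter. I would check that the Boolean vectors are the images of tapes over $\Diag{\ThB}$ under the functor induced by the inclusion, that $\comp$ commutes with this inclusion, and that equality there implies equality in $\CatT{\Diag{\ThPB}}$. A further possible wrinkle is the degenerate case $p=0$ or an empty sum, where $\s=\star$. This case is still covered, since $\star$ is the empty convex combination.
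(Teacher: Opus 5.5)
Your proposal is correct and follows the paper's proof essentially step by step. Both arguments use the normal forms of Lemma~\ref{lemma:PBP0}, collapse the failure component via Lemma~\ref{lemma:bottomstar2} and idempotency, and apply Lemma~\ref{lemma:Ffaith} to the Boolean-vector parts; the paper glosses over the same $\ThB$-versus-$\ThPB$ subtlety you flag. The only minor difference is how $\comp(\t)=\comp(p\cdot\sum_i p_i\cdot\overrightarrow{b}_i)$ is obtained: you derive it from $\t\sim p\cdot\sum_i p_i\cdot\overrightarrow{b}_i$ together with $\comp=I\circ Q_{\sim}$ (Lemma~\ref{lemma:diagramma-commutativo}), whereas the paper computes it directly, using that $\comp$ is pca-enriched and maps both $\star$ and the failure tape to the null morphism.
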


\begin{theorem}\label{thm:completenessPBPtapes}
  $I\colon \eqsynq{\CatT{\Diag{\ThPB}}} \to \stmat{\KlD_2}$ is faithful.
\end{theorem}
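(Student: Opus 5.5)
We must show that if two arrows $\s,\t\colon P\to Q$ in $\eqsynq{\CatT{\Diag{\ThPB}}}$ have the same image under $I$, then they are equal. By surjectivity of $\eqsynq{Q}$ it suffices to take representatives $\s,\t\in\CatT{\Diag{\ThPB}}[P,Q]$ with $I(\eqsynq{Q}(\s))=I(\eqsynq{Q}(\t))$. By Lemma~\ref{lemma:diagramma-commutativo} this is $\comp(\s)=\comp(\t)$, and the goal is $\s\eqsyn\t$.

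The first step reduces to the case where $P$ and $Q$ are single words. Every object is a sum $P=\bigoplus_i U_i$ and $Q=\bigoplus_j V_j$. Since $\CatT{\Diag{\ThPB}}$ is a convex biproduct category (Theorem~\ref{thm:TCconvexbiproductcategory}), precomposition with the coproduct injections $\iota_i$ determines an arrow out of a coproduct, and $\eqsyn$ is a congruence. It therefore suffices to show $\iota_i;\s\eqsyn\iota_i;\t$ for each $i$. The functor $\comp$ preserves coproducts, so $\comp(\iota_i;\s)=\comp(\iota_i;\t)$ still holds. We may thus assume $P=U=A^n$.

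The second step reduces the input arity to zero. By Lemma~\ref{lemma:PBP2} it is enough to prove $\overrightarrow{b};\s\eqsyn\overrightarrow{b};\t$ for every Boolean vector $\overrightarrow{b}\in\Cat{B}[1,A^n]$. Functoriality of $\comp$ gives $\comp(\overrightarrow{b};\s)=\comp(\overrightarrow{b});\comp(\s)=\comp(\overrightarrow{b};\t)$. We are now left with arrows $A^0\to Q$.

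The third step handles the codomain. If $Q=A^m$ is a single word, Lemma~\ref{lemma:PBP1} gives the claim directly. For a general $Q=\bigoplus_{j=1}^k V_j$, Theorem~\ref{thm:TCconvexbiproductcategory} lets us write $\overrightarrow{b};\s=\langle \s_1,\dots,\s_k\rangle_{\vec p}$ and $\overrightarrow{b};\t=\langle \t_1,\dots,\t_k\rangle_{\vec q}$. Applying $\comp$, which preserves convex products as a morphism of convex biproduct categories, and projecting with $\pi_j$ yields $p_j\cdot\comp(\s_j)=q_j\cdot\comp(\t_j)$. This gives $\comp(p_j\cdot\s_j)=\comp(q_j\cdot\t_j)$, since $\comp$ preserves the pca structure. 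Each $p_j\cdot\s_j$ and $q_j\cdot\t_j$ is a tape $A^0\to V_j$, so Lemma~\ref{lemma:PBP1} gives $p_j\cdot\s_j\eqsyn q_j\cdot\t_j$. Then
\[\overrightarrow{b};\s=\langle p_1\cdot\s_1,\dots,p_k\cdot\s_k\rangle_{\vec 1}\eqsyn\langle q_1\cdot\t_1,\dots,q_k\cdot\t_k\rangle_{\vec 1}=\overrightarrow{b};\t,\]
using uniqueness in the convex product and the fact that $\eqsyn$ is a congruence.

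The main obstacle is this bookkeeping with convex products. One has to check that $\langle f_1,\dots,f_k\rangle_{\vec p}$ can be rewritten with all weights $1$ by absorbing $p_j$ into $f_j$, which requires $\sum_j p_j\le 1$ and is consistent with the uniqueness clause. If that normalisation proves awkward, the fallback is to decompose through the projections $\pi_j$ directly: check $\s;\pi_j\eqsyn\t;\pi_j$ for each $j$, and recover $\s\eqsyn\t$ from the uniqueness part of the convex product. Everything else reduces directly to Lemmas~\ref{lemma:PBP2} and~\ref{lemma:PBP1}.
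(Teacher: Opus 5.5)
The overall plan is essentially the paper's, only reordered: coproduct injections for the domain, Lemmas~\ref{lemma:PBP2} and~\ref{lemma:PBP1} for single words, and convex products for a sum codomain. However, the last step of your third paragraph does not follow from what you cite.

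From Lemma~\ref{lemma:PBP1} you obtain $(\overrightarrow{b};\s);\pi_j\eqsyn(\overrightarrow{b};\t);\pi_j$ for each $j$. You then conclude $\overrightarrow{b};\s\eqsyn\overrightarrow{b};\t$ by ``uniqueness in the convex product and congruence''. But uniqueness in $\CatT{\Diag{\ThPB}}$ only identifies two arrows whose projections are \emph{equal}, not merely $\eqsyn$-related, and congruence cannot carry it across. Copairing is an explicit term ($f_1\oplus\cdots\oplus f_k$ followed by the codiagonal), which is why your first step is fine. Convex tupling, by contrast, is characterised only by a universal property. Writing it as a term in the scaled components $p_j\cdot\s_j$ would require scaling up, which the pca operations do not provide. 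Your normalised notation $\langle p_1\cdot\s_1,\dots,p_k\cdot\s_k\rangle_{\vec 1}$ is also not an instance of Definition~\ref{def:convprod} for $k\ge 2$, since the weights sum to $k>1$. So the weight bookkeeping you flag is not the real obstacle, and your fallback via the $\pi_j$ has exactly the same hole. What is missing is that the projections are jointly monic \emph{modulo} $\eqsyn$.

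The paper supplies this by citing that $\eqsynq{\CatT{\Diag{\ThPB}}}$ is itself a convex biproduct category and the quotient functor is a morphism of such. Hence $([\pi_j]_\sim)_j$ is a convex product in the quotient, and uniqueness applies there directly to $[\s]_\sim$ and $[\t]_\sim$.

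You can also close the gap with (T3) alone:
\begin{itemize}
\item Checking on each injection $\iota_i$ and using the coproduct property in $\CatT{\Diag{\ThPB}}$, one gets $\frac1k\cdot\id{Q}=\sum_{j}\frac1k\cdot(\pi_j;\iota_j)$.
\item Hence $\frac1k\cdot h=\sum_j\frac1k\cdot(h;\pi_j;\iota_j)$ for every arrow $h$ into $Q$. This is a genuine term in the $h;\pi_j$.
\item Congruence then gives $\frac1k\cdot(\overrightarrow{b};\s)\eqsyn\frac1k\cdot(\overrightarrow{b};\t)$, and (T3) cancels the scalar.
\end{itemize}

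A minor point: Lemma~\ref{lemma:PBP2} is stated for codomain $A^m$, whereas you apply it with a sum codomain $Q$. Its proof never inspects the codomain, so this is harmless, but you should say so. Alternatively, as in the paper, project to single-word codomains before invoking it.
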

\begin{proof}
  Let $[\t]_\sim, [\s]_\sim \in \eqsynq{\CatT{\Diag{\ThPB}}}[A^n,A^m]$ be such that $I([\t]_\sim)=I([\s]_\sim)$. Then: 
  \begin{align}
    I([\t]_\sim)=I([\s]_\sim) &\implies \comp(\t)=\comp(\s) \tag{Lemma \ref{lemma:diagramma-commutativo}}\\
    &\implies \text{for all } \overrightarrow{b}\in \Cat{B}[1,A^n], \comp(\overrightarrow{b});\comp(\t)=\comp(\overrightarrow{b});\comp(\s) \notag \\
    &\implies \text{for all } \overrightarrow{b}\in \Cat{B}[1,A^n], \comp(\overrightarrow{b};\t)=\comp(\overrightarrow{b};\s) \tag{Functoriality}\\
    &\implies \text{for all } \overrightarrow{b}\in \Cat{B}[1,A^n], \overrightarrow{b};\t \eqsyn \overrightarrow{b};\s \tag{Lemma \ref{lemma:PBP1}}\\
    &\implies \t \eqsyn \s \tag{Lemma \ref{lemma:PBP2}}
  \end{align}
  Hence, $[\t]_\sim=[\s]_\sim$. The above derivation proves that $I$ is faithful for arrows of type $A^n \to A^m$. For arrows of type $ A^k \to \bigoplus_{i=1}^nA^{m_i}$ one can rely on convex products (Definition \ref{def:convprod}) and reduce to the previous case. While for arrows of arbitrary type $\bigoplus_{j=1}^o A^{k_j} \to \bigoplus_{i=1}^nA^{m_i}$, one can easily rely on the universal property of coproducts. See Appendix~\ref{app:{sec:completenessprobbooltapes-NEW2}} for all details.
\end{proof}

\begin{corollary}[Completeness]\label{cor:completenessPBPtapes}
  For all $\s,\t \in \CatT{\Diag{\SigPB}}$, if $\dsem{\s}=\dsem{\t}$ then $\s \eqsynbis \t$.
\end{corollary}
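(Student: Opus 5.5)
The proof is a diagram chase: the completeness statement is transferred from $\eqsyn$ on $\CatT{\Diag{\ThPB}}$ to $\dot{\sim}$ on $\CatT{\Diag{\SigPB}}$.

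Let $\s,\t\in\CatT{\Diag{\SigPB}}$ be such that $\dsem{\s}=\dsem{\t}$. Recall that $\dsem{-}$ factors as
\[
\dsem{-} \;=\; \CatT{Q_\ThPB};\; Q_\eqsyn;\; I ,
\]
which is itself a consequence of soundness (Proposition~\ref{prop:soundnesstapes}) and Proposition~\ref{prop:twoequivalence}.

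\begin{enumerate}
\item Rewriting the hypothesis through this factorisation gives
\[
I\bigl([\CatT{Q_\ThPB}(\s)]_\sim\bigr) \;=\; I\bigl([\CatT{Q_\ThPB}(\t)]_\sim\bigr).
\]
\item Theorem~\ref{thm:completenessPBPtapes} states that $I$ is faithful. Faithfulness applies because both classes are parallel arrows, having the same type as $\s$ and $\t$. Hence
\[
[\CatT{Q_\ThPB}(\s)]_\sim \;=\; [\CatT{Q_\ThPB}(\t)]_\sim, \quad\text{i.e.}\quad \CatT{Q_\ThPB}(\s)\eqsyn \CatT{Q_\ThPB}(\t).
\]
\item The right-to-left direction of Proposition~\ref{prop:twoequivalence} then yields $\s\,\dot{\sim}\,\t$, which is the claim.
\end{enumerate}

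No step is a real obstacle, since all the mathematical content sits in Theorem~\ref{thm:completenessPBPtapes} and Proposition~\ref{prop:twoequivalence}. The only point needing care is that the factorisation of $\dsem{-}$ is a genuine equality of functors. This is so because $I$ is well defined precisely by soundness: $\eqsyn$-related tapes in the image of $\CatT{Q_\ThPB}$ come from $\dot{\sim}$-related tapes, which have equal semantics.
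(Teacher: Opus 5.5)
Your proposal is correct and follows the paper's own proof: you factor $\dsem{-}$ as $\CatT{Q_{\ThPB}};Q_{\eqsyn};I$, use the faithfulness of $I$ (Theorem~\ref{thm:completenessPBPtapes}) to obtain $\eqsyn$-equivalence of the images under $\CatT{Q_{\ThPB}}$, and conclude with Proposition~\ref{prop:twoequivalence}. Your comment on why $I$ is well defined, namely soundness together with every tape of $\CatT{\Diag{\ThPB}}$ lying in the image of $\CatT{Q_{\ThPB}}$, makes explicit a point the paper leaves implicit.
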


\begin{corollary}\label{cor:finale}
  For all $c,d \in \Diag{\SigPRB}$, $\osem{c}=\osem{d}$ iff $\encoding{c} \eqsynbis \encoding{d}$.
\end{corollary}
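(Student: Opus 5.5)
This corollary has two directions, and I would treat them separately, using Proposition~\ref{prop:encoding} as the bridge between circuits and tapes.

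\emph{Right-to-left.} Assume $\encoding{c} \eqsynbis \encoding{d}$. By soundness for tapes (Proposition~\ref{prop:soundnesstapes}), $\dsem{\encoding{c}}=\dsem{\encoding{d}}$. Proposition~\ref{prop:encoding} gives $\osem{c}=\dsem{\encoding{c}}$ and $\osem{d}=\dsem{\encoding{d}}$. Chaining these equalities yields $\osem{c}=\osem{d}$.

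\emph{Left-to-right.} Assume $\osem{c}=\osem{d}$. Proposition~\ref{prop:encoding} turns this into $\dsem{\encoding{c}}=\dsem{\encoding{d}}$. Both $\encoding{c}$ and $\encoding{d}$ are arrows of $\CatT{\Diag{\SigPB}}$ of the same type $A^n\to A^m$. Corollary~\ref{cor:completenessPBPtapes} therefore applies and gives $\encoding{c}\eqsynbis\encoding{d}$.

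The only point needing care is type-checking the objects. The encoding $\encoding{-}$ is a monoidal functor for $\otimes$, so it sends the object $A^n$ of $\Diag{\SigPRB}$ to the single-summand object $A^n$ of the tape category. Hence $\encoding{c}$ and $\encoding{d}$ lie in the same homset, and both soundness and completeness apply verbatim. No further obstacle arises: all the substantive work is already contained in Corollary~\ref{cor:completenessPBPtapes} and Proposition~\ref{prop:encoding}.
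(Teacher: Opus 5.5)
Your proposal is correct and follows the paper's proof: Proposition~\ref{prop:encoding} turns $\osem{c}=\osem{d}$ into $\dsem{\encoding{c}}=\dsem{\encoding{d}}$, and soundness (Proposition~\ref{prop:soundnesstapes}) together with completeness (Corollary~\ref{cor:completenessPBPtapes}) for tapes gives the two directions. Your extra remark that $\encoding{c}$ and $\encoding{d}$ land in the same homset is a harmless detail that the paper leaves implicit.
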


    \section{Conclusions and Future Work}

We introduced an axiomatisation of \emph{partial} Boolean circuits (\Cref{tab:booleanalgebra,tab:partialbooleanalgebra}) and proved it complete (\Cref{thm:completenesspartialcircuits}). Extending this, we gave laws of \emph{probabilistic Boolean tapes} (\Cref{fig:tapesax,ax:BooleanTAPES}) and, using a general result from~\cite{probbooltapesfossacs}, established their completeness (\Cref{cor:completenessPBPtapes}). Combined with the encoding in~\cite{bonchi2025tapediagramsmonoidalmonads}, this allows checking semantic equality of \emph{probabilistic Boolean circuits}~\cite{piedeleu2025boolean} (\Cref{cor:finale}).


\begin{wrapfigure}{r}{0.3\textwidth}
\vspace{-1em}
\centering

    \InputIfFileExists{tapes/cipriano/tapesproposalphase2/dim0.tikz}{}{\input{./tikz/tapes/cipriano/tapesproposalphase2/dim0.tikz}}

\vspace{-1em}
\end{wrapfigure}
In Example \ref{ex:esempioSTRONG}, it is shown that the combined use of $\oplus$ and $\otimes$ allows probabilistic tapes to realise probabilistic control effectively. Extending tapes with traces for $\oplus$--following the approach of~\cite{bonchi2024diagrammatic}, which provides a unifying account of several program logics~\cite{hoare1969axiomatic,o2019incorrectness,DBLP:journals/corr/abs-2310-18156,DBLP:conf/vmcai/CousotCFL13}--would enable the representation of iteration and, potentially, provide algebraic foundations for probabilistic program logics~\cite{kaminski2019advanced,olmedo2016reasoning}. 
Indeed, such extended tapes can encode probabilistic regular expressions from~\cite{DBLP:conf/lics/RozowskiS24}. For instance, the expression $i;c^{[p]};t$ is depicted as the tape above.
%
%
%

\begin{wrapfigure}{l}{0.34\textwidth}
\centering
$
    \InputIfFileExists{tapes/cipriano/tapesproposalphase2/replhs.tikz}{}{\input{./tikz/tapes/cipriano/tapesproposalphase2/replhs.tikz}}

\; \stackrel{\text{(t-id)}}{=} \;

    \InputIfFileExists{tapes/cipriano/tapesproposalphase2/reprhs.tikz}{}{\input{./tikz/tapes/cipriano/tapesproposalphase2/reprhs.tikz}}
$
\end{wrapfigure}
\noindent This extension is technically feasible since, as shown in~\cite{jacobs2010coalgebraic}, $\KlD$ is a traced monoidal category satisfying the \emph{uniformity principle}~\cite{hasegawa2003uniformity} as well as the axiom on the right. 
Combining this axiom with the laws introduced in this paper may lead to a proof principle analogous to \emph{martingales}.

  \bibliography{references}

\appendix
\section{Additional figures and tables}\label{app:sec:preliminaries}

\begin{table}[H]
\centering

\begin{subtable}{\textwidth}
\centering
\resizebox{\textwidth}{!}{%
$
\begin{array}{c}
\toprule
\begin{array}{c}
    \begin{array}{ccccc}
        {\diagp{U}\colon U \to U\oplus U} &
        {\bang{U} \colon U \to \zero} &
        {\sigma_{U, V}^{\piu} \colon U \piu V \to V \piu U} &
        {\codiag{U}\colon U \piu U \to U} &
        {\cobang{U} \colon \zero \to U}   
    \end{array}    
\\[0.3em]
    \begin{array}{ccccc}
        {id_\zero \colon \zero \to \zero} &
        {id_U \colon U \to U} &
        \inferrule{c \colon U \to V}{\tapeFunct{c}\colon U \to V} &
        \inferrule{\t \colon P \to Q \and \s \colon Q \to R}{\t ; \s \colon P \to R} &
        \inferrule{\t \colon P_1 \to Q_1 \and \s \colon P_2 \to Q_2}{\t \piu \s \colon P_1 \piu P_2 \to Q_1 \piu Q_2}       
    \end{array}
\end{array}
\\
\bottomrule
\end{array}
$
}
\caption{Typing rules.}
\label{fig:freestricmmoncat-typing}
\end{subtable}
\vspace{1em}

\begin{subtable}{\textwidth}
\centering
\resizebox{\textwidth}{!}{%
$
\begin{array}{c}
\toprule
\begin{array}{cc}
\begin{array}{ccc}
 \diagp{\zero}\defeq \id{\zero} &\quad
 \bang{\zero}\defeq \id{\zero} &\quad
 \bang{U \oplus P}\defeq \bang{U}\oplus \bang{P}
\end{array}
&
\begin{array}{ccc}
 \codiag{\zero}\defeq \id{\zero} &\quad
 \cobang{\zero}\defeq \id{\zero} &\quad
 \cobang{U \oplus P}\defeq \cobang{U}\oplus \cobang{P}
\end{array}
\end{array}
\\[1em]
\begin{array}{cc}
\diagp{U \oplus P}\defeq (\diagp{U} \piu\, \diagp{P});(\id{U} \piu \symm{U}{P} \piu \id{P})
&
\codiag{U\oplus P}\defeq (\id{U} \piu \symm{P}{U} \piu \id{P});(\codiag{U}\piu \codiag{P})
\end{array}
\\
\bottomrule
\end{array}
$
}
\caption{Inductive definition of monoid and co-pca structures.}
\label{eq:indmoncopca}
\end{subtable}

\vspace{1em}

\begin{subtable}{\textwidth}
\centering
\resizebox{\textwidth}{!}{%
$
\begin{array}{c}
\toprule
\begin{array}{cc}
\begin{array}{c}
(f;g);h=f;(g;h) \qquad id_P;f=f=f;id_Q\\
(f_1\piu f_2) ; (g_1 \piu g_2) = (f_1;g_1) \piu (f_2;g_2)
\end{array} 
&
\begin{array}{c}
id_{\zero}\piu f = f = f \piu id_{\zero} \qquad (f \piu g)\, \piu h = f \piu \,(g \piu h) \\
\sigma_{P, Q}; \sigma_{Q, P}= id_{P \piu Q} \qquad (\gen \piu id_R) ; \sigma_{Q, R} = \sigma_{P,R} ; (id_R \piu \gen)
\end{array}
\\[0.5em]
\multicolumn{2}{c}{
\tapeFunct{\id{P}}=\id{P} 
\qquad
\tapeFunct{c;d} = \tapeFunct{c}; \tapeFunct{d} \qquad (\text{Tape})
}
\end{array}
\\
\bottomrule
\end{array}
$
}
\caption{Axioms for strict symmetric monoidal categories and Tape axioms.}
\label{fig:freestricmmoncat-axioms} 
\end{subtable}

\vspace{1em}
\begin{subtable}{\textwidth}
\centering
\resizebox{\textwidth}{!}{%
\begin{tabular}{cc cc}
    \toprule
    $(\id{ P}\piu \codiag{ P}) ; \codiag{ P} = (\codiag{ P}\piu \id{ P}) ; \codiag{ P}$ & (\newtag{$\codiag{}$-as}{eq:codiag assoc}) &  $(\cobang{ P}\piu \id{ P}) ; \codiag{ P}  = \id{ P}  $ & (\newtag{$\codiag{}$-un}{eq:codiag unital}) \\[0.3em]
    $\cobang{\zero} = \id{\zero} \qquad\codiag{\zero} = \id{\zero}$ & (\newtag{$\cobang{\zero},\codiag{\zero}$-coh}{eq:codiag zero coherence}) & $\sigma_{ P, P};\codiag{ P}=\codiag{ P}$ & (\newtag{$\codiag{}$-sym}{eq: codiag symmetry}) \\[0.3em]
    $\cobang{P \piu Q} = \cobang{P} \piu \cobang{Q}$ & (\newtag{\,$\cobang{}$-coh}{eq:cobang coherence}) & $\codiag{P \piu Q} = (\id{P} \piu \sigma_{Q,P} \piu \id{Q}) ; (\codiag{P}\piu \codiag{Q}) $ & (\newtag{$\codiag{}$-coh}{eq:codiag coherence}) \\[0.3em]
    $\cobang{P};f =\cobang{Q}$ & (\newtag{\,$\cobang{}$-nat}{eq:cobang nat}) & $\codiag{P};f =(f\piu f); \codiag{Q}$ & (\newtag{$\codiag{}$-nat}{eq:codiag nat}) \\
    \bottomrule
\end{tabular}}
\caption{Axioms for natural and coherent monoids in a strict monoidal category.}
\label{fig:freestrictfccat}
\end{subtable}

\vspace{1em}

\begin{subtable}{\textwidth}
\centering
\resizebox{\textwidth}{!}{%
\begin{tabular}{cc cc}
    \toprule
    $\diagp{P};(\diagq{P}\piu \id{P})=\, \diagptilde{P};(\id{P}\piu \,\diagqtilde{})$ & (\newtag{$\diagp{}$-as}{eq:diagp assoc}) &  $\tilde{p}\defeq pq\qquad \tilde{q}\defeq \frac{p(1-q)}{1-pq}$ &  \\[0.3em]
    $\diagp{P};\codiag{P}  = \id{ P}$ & (\newtag{$\diagp{}$-idem}{eq:diagp idempotency}) & $\diagp{P}\sigma_{ P, P}=\,\diagpbar{P}$ & (\newtag{$\diagp{}$-sym}{eq:diagp symmetry}) \\[0.3em]
    $\diagp{\zero} = \id{\zero}$ & (\newtag{$\diagp{\zero}$-coh}{eq:diagp zero coherence}) & $\bangp{\zero} = \id{\zero}$ & (\newtag{\,$\bangp{\zero}$-coh}{eq:bangp0 coherence}) \\[0.3em]
    $\bangp{P \piu Q} = \bangp{P} \piu\, \bangp{Q}$ & (\newtag{\,$\bangp{}$-coh}{eq:bangp coherence}) & $\diagp{P \piu Q} = (\diagp{P}\piu\, \diagp{Q});(\id{P} \piu \sigma_{P,Q} \piu \id{Q}) $ & (\newtag{$\diagp{}$-coh}{eq:diagp coherence}) \\[0.3em]
    $f; \bangp{Q}=\bangp{P}$ & (\newtag{\,$\bangp{}$-nat}{eq:bangp nat}) & $f; \diagp{Q}=\,\diagp{P}; (f \piu f)$ & (\newtag{$\diagp{}$-nat}{eq:diagp nat}) \\
    \bottomrule
\end{tabular}}
\caption{Axioms for natural and coherent co-pcas in a strict monoidal category.}
\label{fig:freecopcacat}
\end{subtable}

\caption{Type system, syntactic sugar and axioms for $\CatT{\Cat{C}}$.}
\label{tabellone}
\end{table}

\begin{table}[H]
    \begin{center}
    {
        \hfill {\tiny
  \[\begin{array}{c}
  \toprule
        \def\arraystretch{1.2}
        \begin{array}{cc}
            \begin{array}{@{}l}
                \dl{P}{Q}{R} \colon P \per (Q\piu R)  \to (P \per Q) \piu (P\per R) \vphantom{\symmt{P}{Q}} \\
                \midrule
                \dl{\zero}{Q}{R} \defeq \id{\zero} \vphantom{\symmt{P}{\zero} \defeq \id{\zero}} \qquad
                \dl{U \piu P'}{Q}{R} \defeq (\id{U\per (Q \piu R)} \piu \dl{P'}{Q}{R});(\id{U\per Q} \piu \symmp{U\per R}{P'\per Q} \piu \id{P'\per R}) \vphantom{\symmt{P}{V \piu Q'} \defeq \dl{P}{V}{Q'} ; (\Piu[i]{\tapesymm{U_i}{V}} \piu \symmt{P}{Q'})} \\
            \end{array}
            &
            \begin{array}{@{}l}
                \symmt{P}{Q} \colon P\per Q \to Q \per P, \text{ with } P = \Piu[i]{U_i} \\
                \midrule
                \symmt{P}{\zero} \defeq \id{\zero} \qquad
                \symmt{P}{V \piu Q'} \defeq \dl{P}{V}{Q'} ; (\Piu[i]{\tapesymm{U_i}{V}} \piu \symmt{P}{Q'})
                \phantom{\quad}
            \end{array}
        \end{array}\\
        \midrule
        \begin{array}{rclrcl|rclrcl}
            \midrule
            \LW U {\id\zero} &\defeq& \id\zero& \LW U {\t_1 \piu \t_2} &\defeq& \LW U {\t_1} \piu \LW U {\t_2} &  \RW U {\id\zero} &\defeq& \id\zero & \RW U {\t_1 \piu \t_2} &\defeq& \RW U {\t_1} \piu \RW U {\t_2} \\
            \LW U {\tape{c}} &\defeq& \tape{\id U \per c} &    \LW U {\t_1 ; \t_2} &\defeq& \LW U {\t_1} ; \LW U {\t_2}&  \RW U {\tape{c}} &\defeq& \tape{c \per \id U} & \RW U {\t_1 ; \t_2} &\defeq& \RW U {\t_1} ; \RW U {\t_2} \\
            \LW U {\symmp{V}{W}} &\defeq& \symmp{UV}{UW}       &       && & \RW U {\symmp{V}{W}} &\defeq& \symmp{VU}{WU}&  &&  \\
                        \LW U {\diagp V} &\defeq& \diagp{UV} &  \LW U {\bang V} &\defeq& \bang{UV}& \RW U {\diagp V} &\defeq& \diagp{VU} & \RW U {\bang V} &\defeq& \bang{VU} \\
            \LW U {\codiag V} &\defeq& \codiag{UV} &  \LW U {\cobang V} &\defeq& \cobang{UV}& \RW U {\codiag V} &\defeq& \codiag{VU} & \RW U {\cobang V} &\defeq& \cobang{VU} \\
            \hline \hline
            \LW{\zero}{\t} &\defeq& \id{\zero} &  \LW{W\piu S'}{\t} &\defeq& \LW{W}{\t} \piu \LW{S'}{\t} & \RW{\zero}{\t} &\defeq& \id{\zero} &
            \RW{W \piu S'}{\t} &\defeq& \dl{P}{W}{S'} ; (\RW{W}{\t} \piu \RW{S'}{\t}) ; \Idl{Q}{W}{S'} \\
            \hline \hline
            \multicolumn{12}{c}{
                \t_1 \per \t_2 \defeq \LW{P}{\t_2} ; \RW{S}{\t_1}   \quad \text{ ( for }\t_1 \colon P \to Q, \t_2 \colon R \to S   \text{ )}
            }
            \\
            \bottomrule
        \end{array}
    \end{array}        
      \]
      }
      \hfill
      \caption{Inductive definitions of  left distributor $\dl{P}{Q}{R}$ and $\otimes$-symmetry $\symmt{P}{Q}$ (top); monomial and polynomial whiskerings (center); definition of $\per$ (bottom).}\label{tab:producttape}
       }
    \end{center}
\end{table}

\begin{figure}
\centering
\begingroup
\setlength{\tabcolsep}{4pt}
\renewcommand{\arraystretch}{1.05}
\small
\begin{tabular}{c c c}
\toprule
Symbol & Diagram & Matrix\\
\midrule

$\diagp{U}$
&

    \InputIfFileExists{/tapes/cipriano/pcomonoid.tikz}{}{\input{./tikz//tapes/cipriano/pcomonoid.tikz}}

&
\raisebox{-0.4ex}{$
\begin{pNiceMatrix}[first-col,first-row]
\rotatebox{90}{$\Lsh$} & U \\
U & p\cdot \wire{U} \\
U & (1-p)\cdot \wire{U}
\end{pNiceMatrix}
$}
\\[0.2em]

$\bangp{U}$
&
\Tcounit{U}
&
\raisebox{-0.4ex}{$
\begin{pNiceMatrix}[first-col,first-row]
\rotatebox{90}{$\Lsh$} \\
U
\end{pNiceMatrix}
$}
\\[0.2em]

$\tapeFunct{c}$
&
\Tcirc{c}{U}{V}
&
\raisebox{-0.4ex}{$
\begin{pNiceMatrix}[first-col,first-row]
\rotatebox{90}{$\Lsh$} & U \\
V & 1\cdot \Cgen{c}{U}{V}
\end{pNiceMatrix}
$}
\\[0.2em]

$\cobang{U}$
&
\Tunit{U}
&
\raisebox{-0.4ex}{$
\begin{pNiceMatrix}[first-col,first-row]
\rotatebox{90}{$\Lsh$} & U
\end{pNiceMatrix}
$}
\\[0.2em]

$\codiag{U}$
&
\Tmonoid{U}
&
\raisebox{-0.4ex}{$
\begin{pNiceMatrix}[first-col,first-row]
\rotatebox{90}{$\Lsh$} & U & U \\
U & 1\cdot \wire{U} & 1\cdot \wire{U}
\end{pNiceMatrix}
$}
\\[0.2em]

$\id{U}$
&
\Twire{U}
&
\raisebox{-0.4ex}{$
\begin{pNiceMatrix}[first-col,first-row]
\rotatebox{90}{$\Lsh$} & U \\
U & 1\cdot \wire{U}
\end{pNiceMatrix}
$}
\\[0.2em]

$\id{\zero}$
&

    \InputIfFileExists{empty.tikz}{}{\input{./tikz/empty.tikz}}

&
\raisebox{-0.4ex}{$
\begin{pNiceMatrix}[first-col,first-row]
\rotatebox{90}{$\Lsh$}
\end{pNiceMatrix}
$}
\\[0.2em]

$\sigma_{U,V}^{\piu}$
&
\Tsymmp{U}{V}
&
\raisebox{-0.4ex}{$
\begin{pNiceMatrix}[first-col,first-row]
\rotatebox{90}{$\Lsh$} & U & V \\
V & 0\cdot \star_{U,V} & 1\cdot \wire{V} \\
U & 1\cdot \wire{U} & 0\cdot \star_{V,U}
\end{pNiceMatrix}
$}
\\[0.2em]

$\t ; \t$
&

    \InputIfFileExists{tapes/seq_comp.tikz}{}{\input{./tikz/tapes/seq_comp.tikz}}

&
matrix multiplication
\\[0.2em]

$\t \piu \t$
&

    \InputIfFileExists{tapes/par_comp.tikz}{}{\input{./tikz/tapes/par_comp.tikz}}

&
direct sum of matrices
\\

\bottomrule
\end{tabular}
\endgroup
\caption{Dictionary of correspondences}\label{fig:dictionary}
\end{figure}

\section{Appendix to Section~\ref{sec:boolcircuits}}\label{app:sec:boolcircuits}

\begin{proposition}\label{prop:foolboolean}
The monoidal functor  $\osemB{-}\colon \Diag{{\SigB}} \to \Sets$ factors as
\[\xymatrix{ \Diag{\SigB} \ar@{->>}[r]& \Diag{\ThB} \ar[r]^{\cong}& \Sets_2 \ar@{^{(}->}[r]& \Sets }\]
where the leftmost and rightmost functor are the obvious quotients and injections and the central arrows is a monoidal isomorphism.
\end{proposition}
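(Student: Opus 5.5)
The plan is to build the factorisation in three stages: first show that $\osemB{-}$ lands in $\Sets_2$, then factor it through the quotient $\Diag{\ThB}$ using soundness, and finally show that the induced functor $\Diag{\ThB}\to\Sets_2$ is a monoidal isomorphism.

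\textbf{Landing in $\Sets_2$.} Objects of $\Diag{\SigB}$ are words $A^n$, and by definition $\osemB{A^n}=2^n$. Hence $\osemB{-}$ lands in the full subcategory $\Sets_2$ of $\Sets$ on the objects $2^n$, and it factors through the inclusion $\Sets_2\hookrightarrow\Sets$. This inclusion is strict monoidal, because $2^n\times 2^m=2^{n+m}$ and $1=2^0$.

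\textbf{Factoring through the quotient.} Soundness, i.e.\ the ``if'' direction of Theorem~\ref{thm:completenessbooleancircuits}, says that $c\eqB d$ implies $\osemB{c}=\osemB{d}$. So $\osemB{-}$ is constant on $\eqB$-classes and induces a unique monoidal functor $K\colon\Diag{\ThB}\to\Sets_2$ with $K\circ Q_{\ThB}=\osemB{-}$. It is monoidal because $Q_{\ThB}$ is a surjective strict monoidal functor.

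\textbf{$K$ is an isomorphism.} On objects $K$ is the bijection $A^n\mapsto 2^n$, so it suffices to show $K$ is fully faithful.

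Faithfulness is exactly the ``only if'' direction of Theorem~\ref{thm:completenessbooleancircuits}: if $\osemB{c}=\osemB{d}$ then $c\eqB d$.

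Fullness is the substantive step. Every function $f\colon 2^n\to 2^m$ must be shown to be $\osemB{c}$ for some $c\in\Diag{\SigB}[A^n,A^m]$.
\begin{itemize}
\item Using $\mcopier$, split the $n$ inputs into $m$ copies, so it suffices to realise each component $f_j\colon 2^n\to 2$ and tensor the results.
\item Realise each $f_j$ by its disjunctive normal form. Each minterm is built from copies of the inputs, $\Notgate$ on the negated literals, and a tree of $\Andgate$. The minterms are combined with $\Orgate$, which is defined in Figure~\ref{table:multiplexer}.
\item Handle degenerate cases with constants: the constant-$0$ function is $\CBdischargern;\Flip{0}$, and the constant-$1$ function is $\CBdischargern;\Flip{1}$.
\item The case $m=0$ is just $\CBdischargern$.
\end{itemize}
Alternatively, one can proceed by induction on $n$ via Shannon expansion with the multiplexer: $f(x_1,\vec x)=\mathrm{if}\ x_1\ \mathrm{then}\ f(1,\vec x)\ \mathrm{else}\ f(0,\vec x)$, implemented by copying $\vec x$ and applying $\Ifgatem$. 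This gives a uniform construction.

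Checking that the chosen circuit has the intended semantics is a routine computation from \eqref{eq:sembool} and functoriality. I expect this fullness argument to be the only step requiring real work; the rest follows directly from Theorem~\ref{thm:completenessbooleancircuits}.
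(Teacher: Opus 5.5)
Your proposal is correct and follows the paper's proof: faithfulness of the central functor comes from Theorem~\ref{thm:completenessbooleancircuits}, and fullness comes from realising each component of $f\colon 2^n\to 2^m$ as a Boolean formula built from gates, $\CBcopier$ and $\CBdischarger$. You have simply spelled out in more detail the steps the paper leaves implicit: landing in $\Sets_2$, factoring through the quotient via soundness, and the concrete DNF or Shannon-expansion construction.
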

\begin{proof}
Faithfulness of the central functor is a direct consequence of Theorem~\ref{thm:completenessbooleancircuits}. Fullness follows from the fact that every function $f\colon 2^n \to 2^m$ corresponds to a vector of $m$ Boolean formulas in $n$ variables, which can be easily encoded as a string diagram in $\Diag{\ThB}$ using Boolean operators, $\CBcopier$ and $\CBdischarger$.
\end{proof}

\begin{proof}[Proof of Lemma~\ref{lemma:multiplexer}] 
  \eqref{ax:M1} and \eqref{ax:M2} can be proved by a simple inductive argument on $m$. \eqref{ax:M3} can be proved by a direct computation with the AND and OR gates. \eqref{ax:N1} and \eqref{ax:N2} follow by induction on the structure of $c$. For the base case, if $c$ is a Boolean gate or one of the structural maps $\id{1},\id{A},\symmt{A}{A}$, then the statement follows from axioms \eqref{ax:C1}, \eqref{ax:C2}, \eqref{ax:C3}  and \eqref{ax:D1}, \eqref{ax:D2}, \eqref{ax:D3} in Figure~\ref{tab:booleanalgebra}. In the case $c;d$, then assuming that the statement holds for $c$ and $d$, we have
  \begin{center}
    
    \InputIfFileExists{tapes/cipriano/booleanscopyable.tikz}{}{\input{./tikz/tapes/cipriano/booleanscopyable.tikz}}

  \end{center}
  \begin{center}
    
    \InputIfFileExists{tapes/cipriano/booleansdiscardable.tikz}{}{\input{./tikz/tapes/cipriano/booleansdiscardable.tikz}}

  \end{center}
  Finally, in the case $c\otimes d$, assuming that the statement holds for $c$ and $d$, we have
  \begin{center}
    
    \InputIfFileExists{tapes/cipriano/booleanscopyable2.tikz}{}{\input{./tikz/tapes/cipriano/booleanscopyable2.tikz}}

  \end{center}
  \begin{center}
    
    \InputIfFileExists{tapes/cipriano/booleansdiscardable2.tikz}{}{\input{./tikz/tapes/cipriano/booleansdiscardable2.tikz}}

   \end{center}
\end{proof}

\section{Appendix to Section~\ref{ssec:partialboolean}}\label{app:sec:probboolcircuits}

\begin{proof}[Proof of Proposition~\ref{prop:soundnesspartialb}]
It is enough to check that the statement holds for the axioms in Figure~\ref{tab:partialbooleanalgebra}. Those in the first two lines are standard. For the third row, simple computations confirm that $\osemPB{-}$ maps the left hand side of \eqref{ax:F6} into $\osemPB{\CBcocopier}$ as defined in \eqref{eq:semcocopier}. For \eqref{ax:F7}, one readily checks that $\osemPB{-}$ maps both the left and the right hand side to the partial function $2\times 2 \to 1$ defined as
\[(x,y)\mapsto \begin{cases}\bullet &\text{if }x=y=1\\ \bot & \text{else}\end{cases}\]
\end{proof}

\begin{proof}[Proof of Lemma~\ref{lemma:failureequalities}]
 \eqref{ax:F9} follows from the equalities  
 \begin{center} 
    \InputIfFileExists{tapes/cipriano/lemmafail1.tikz}{}{\input{./tikz/tapes/cipriano/lemmafail1.tikz}}

  \end{center}
  \begin{center}
  
    \InputIfFileExists{tapes/cipriano/lemmafail2.tikz}{}{\input{./tikz/tapes/cipriano/lemmafail2.tikz}}

  \end{center}
   \eqref{ax:F8} follows from the equalities
  \begin{center}
  
    \InputIfFileExists{tapes/cipriano/lemmafail3.tikz}{}{\input{./tikz/tapes/cipriano/lemmafail3.tikz}}

  \end{center}
  \begin{center}
  
    \InputIfFileExists{tapes/cipriano/lemmafail4.tikz}{}{\input{./tikz/tapes/cipriano/lemmafail4.tikz}}

  \end{center}
  \begin{center}
  
    \InputIfFileExists{tapes/cipriano/lemmafail5.tikz}{}{\input{./tikz/tapes/cipriano/lemmafail5.tikz}}

  \end{center}
  \begin{center}
  
    \InputIfFileExists{tapes/cipriano/lemmafail5mezzo.tikz}{}{\input{./tikz/tapes/cipriano/lemmafail5mezzo.tikz}}

  \end{center}
   \eqref{ax:F10l} follows from the equalities
  \begin{center}
  
    \InputIfFileExists{tapes/cipriano/lemmafail6.tikz}{}{\input{./tikz/tapes/cipriano/lemmafail6.tikz}}

  \end{center}
  \eqref{ax:F10r} can be proved with a similar argument.
\end{proof}

\begin{proof}[Additional details for the proof of Proposition~\ref{prop:decompositionpartialcircuits}]
In the case $c\otimes d$, then assuming that the statement holds for $c$ and $d$, we have 
    \begin{center}
    
    \InputIfFileExists{tapes/cipriano/propdecomposition6.tikz}{}{\input{./tikz/tapes/cipriano/propdecomposition6.tikz}}

    \end{center}
    \begin{center}
    
    \InputIfFileExists{tapes/cipriano/propdecomposition7.tikz}{}{\input{./tikz/tapes/cipriano/propdecomposition7.tikz}}

    \end{center}
    \begin{center}
    
    \InputIfFileExists{tapes/cipriano/propdecomposition8.tikz}{}{\input{./tikz/tapes/cipriano/propdecomposition8.tikz}}

    \end{center}
    now applying the inductive hypothesis for $c$ and $d$, we obtain the statement.\end{proof}

\begin{proof}[Proof of Lemma~\ref{lemma:complete1}]
  We proceed by induction on the structure of $c$. For the base case, if $c\in \SigB \cup \{\id{1},\id{A},\symmt{A}{A}\}$, then $b;D_c=b;\CBdischargern\, \Flip{1}=\Flip{1}$, hence the statement is trivial. If $c$ is $\CBcocopier$, then, since $D_{\scalebox{0.6}{$\CBcocopier$}}$ is the xnor gate, $b$ must be of the form $\Flip{1}\otimes \Flip{0}$ or $\Flip{0}\otimes \Flip{1}$, hence $b;T_{\scalebox{0.6}{$\CBcocopier$}}$ is $\Flip{0}$, by axioms \eqref{ax:B2l} and \eqref{ax:B2r} in Figure~\ref{tab:booleanalgebra}. Now, in the case $c;d$, we have
  \begin{center}
    
    \InputIfFileExists{tapes/cipriano/lemma1fg.tikz}{}{\input{./tikz/tapes/cipriano/lemma1fg.tikz}}

  \end{center}
  \begin{center}
    
    \InputIfFileExists{tapes/cipriano/lemma1fgpt2.tikz}{}{\input{./tikz/tapes/cipriano/lemma1fgpt2.tikz}}

  \end{center}
   Finally, in the case $c\otimes d$, since  $b\eqPB b_1\otimes b_2$ for some $b_1\in \Diag{\SigB}[1,A^{n'}]$ and $b_2\in \Diag{\SigB}[1,A^{n''}]$ such that $n'+n''=n$, we have 
  \begin{center}
    
    \InputIfFileExists{tapes/cipriano/lemma1fperg.tikz}{}{\input{./tikz/tapes/cipriano/lemma1fperg.tikz}}

  \end{center}
  \begin{center}
    
    \InputIfFileExists{tapes/cipriano/lemma1fpergpt2.tikz}{}{\input{./tikz/tapes/cipriano/lemma1fpergpt2.tikz}}

  \end{center}
\end{proof}

\begin{lemma}\label{lemma:complete2}
  Let $b\in \Diag{\SigB}[1,A^n]$ and $c\in \Diag{PB}[A^n,A^m]$.
  \begin{itemize}
    \item[i.] If $\osemPB{b;c}=\bot$ , then $\osemPB{b;D_c} = 0$;
    \item[ii.] If $\osemPB{b;c}=\osemPB{b'}$ for some $b'\in \Diag{\SigB}[1,A^m]$, then $\osemPB{b;D_c}=1$ and $\osemPB{b;T_c}=\osemPB{b'}$.
  \end{itemize}
\end{lemma}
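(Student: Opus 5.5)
This is a semantic statement, so I would argue directly on $\osemPB{-}$, by structural induction on $c$, proving (i) and (ii) simultaneously. Since $b$ is a Boolean circuit with no inputs, $\osemPB{b}$ is a total constant, a vector $v\in 2^n$. Then $\osemPB{b;c}$ is either $\bot$ or a vector $w\in 2^m$. By Theorem~\ref{thm:completenessbooleancircuits}, $w=\osemPB{b'}$ for some Boolean $b'$. So the claim amounts to the following for every $v\in 2^n$: if $\osemPB{c}(v)=\bot$ then $\osemPB{D_c}(v)=0$; and if $\osemPB{c}(v)=w$ then $\osemPB{D_c}(v)=1$ and $\osemPB{T_c}(v)=w$. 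I would state and prove this pointwise form, since it quantifies over all inputs and so makes the induction go through.

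\textbf{Base cases.} For $c\in\SigB\cup\{\id{1},\id{A},\symmt{A}{A}\}$, $\osemPB{c}$ is total. Also $D_c=\CBdischargern\,\Flip{1}$ evaluates to $1$, and $T_c=c$, so (ii) holds and (i) is vacuous. For $c=\CBcocopier$, $D_c$ is xnor, which is $1$ exactly when the inputs agree, i.e.\ exactly on the domain of \eqref{eq:semcocopier}. On such inputs $(x,x)$ we get $T_c(x,x)=x\wedge x=x$, which is the output.

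\textbf{Inductive steps.} For $c;d$, I read the definitions of $D_{c;d}$ and $T_{c;d}$ off the pictures in \eqref{eq:defTD}. I expect $D_{c;d}$ to be the conjunction of $D_c$ and $D_d$ applied to $T_c$, with the input copied, and $T_{c;d}$ to be $T_c;T_d$, possibly guarded by a multiplexer against the domain. Now split on $v$.
\begin{itemize}
\item If $\osemPB{c}(v)=\bot$, the induction hypothesis gives $D_c(v)=0$, so the conjunction is $0$.
\item If $\osemPB{c}(v)=u$, then $D_c(v)=1$ and $T_c(v)=u$. Apply the induction hypothesis for $d$ at $u$: if $\osemPB{d}(u)=\bot$ then $D_d(u)=0$; otherwise $D_d(u)=1$ and $T_d(u)=\osemPB{d}(u)=\osemPB{c;d}(v)$.
\end{itemize}
For $c\otimes d$ with input $(v_1,v_2)$, \eqref{eq:productpar} gives undefinedness iff one component is undefined. $D_{c\otimes d}$ is the conjunction of $D_c(v_1)$ and $D_d(v_2)$, and when both are $1$ the total part outputs $(T_c(v_1),T_d(v_2))$, by the hypotheses for $c$ and $d$.

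The only real obstacle is bookkeeping. I need to read the exact shapes of $D_{c;d}$, $T_{c;d}$, $D_{c\otimes d}$ and $T_{c\otimes d}$ correctly from the figures, especially any multiplexer or AND-with-domain guard in $T$. Whatever those guards compute off the domain does not matter for (ii), because (ii) only constrains $T$ on inputs where $c$ is defined. Everything else is routine evaluation using functoriality of $\osemPB{-}$.
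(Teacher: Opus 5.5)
Your argument is correct, but it takes a different route from the paper. You re-prove, by a fresh structural induction on $c$, the pointwise fact that $\osemPB{D_c}(v)=1$ exactly on the domain of $\osemPB{c}$, and that $\osemPB{T_c}$ agrees with $\osemPB{c}$ there. Strengthening to all inputs is indeed what makes the $;$ case work, since the hypothesis for $d$ is needed at $u=\osemPB{T_c}(v)$.

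The paper does no new induction. It takes Proposition~\ref{prop:decompositionpartialcircuits}, $c\eqPB \ncopier;((D_c;\coflip{1})\otimes T_c)$, and uses soundness and \eqref{ax:N2} to rewrite $\osemPB{b;c}$ as $\osemPB{(b;D_c;\coflip{1})\otimes(b;T_c)}$. The closed Boolean circuit $b;D_c$ is $\eqPB$-equal to $\Flip{0}$ or $\Flip{1}$. In the second case \eqref{ax:F10l} gives $\osemPB{b;c}=\osemPB{b;T_c}$, which is total; in the first case $\osemPB{b;c}=\bot$. Items (i) and (ii) then follow by ruling out the mismatched cases.

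Your pointwise claim is exactly the semantic content of Proposition~\ref{prop:decompositionpartialcircuits}, read through Proposition~\ref{prop:soundnesspartialb}. The paper therefore gets it for free from a result it needs anyway for completeness. Its proof is also insensitive to the precise clauses for $D_{c;d}$, $T_{c;d}$, $D_{c\otimes d}$ and $T_{c\otimes d}$, which are handled once in that proposition.

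Your version is purely semantic and self-contained: it needs no axioms, no soundness and no Boolean completeness. In exchange, it duplicates the induction, and its inductive steps are only as reliable as your reading of the pictures in \eqref{eq:defTD}.

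A minor point: in (ii) the circuit $b'$ is given, so you need no theorem to produce it. Producing one would in any case be a fullness fact, not Theorem~\ref{thm:completenessbooleancircuits}.
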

\begin{proof}
Since $D_c \in \Diag{\SigB}[A^n,1]$, then $\osemPB{b;D_c} = \osemB{b;D_c}$, i.e., it is a (total) function of type $1 \to 2$. Hence $\osemPB{b;D_c}$ is either $0$ or $1$. 

Suppose that $\osem{b;D_c}=1$, or, equivalently, $b;D_c\eqPB\Flip{1}$ thanks to Theorem~\ref{thm:completenessbooleancircuits}. Then
  \begin{align}
    \osemPB{b;c}&=  \osemPB{b;\ncopier;((D_c;\coflip{1})\otimes T_c)} \tag{Prop. \ref{prop:decompositionpartialcircuits}}\\
    &= \osemPB{(b\otimes b);((D_c;\coflip{1})\otimes T_c)} \tag{\eqref{ax:N2}}\\
    &= \osemPB{(b;D_c;\coflip{1})\otimes (b;T_c)} \tag{SMC}\\
    &= \osemPB{(\Flip{1};\coflip{1})\otimes (b;T_c)} \tag{$b;D_c\eqPB\Flip{1}$}\\
    &= \osemPB{b;T_c}. \tag{\eqref{ax:F10l}}
\end{align}
Instead, suppose that $\osem{b;D_c}=0$, or, equivalently, $b;D_c\eqPB\Flip{0}$ thanks to Theorem~\ref{thm:completenessbooleancircuits}. Then, 
\begin{align}
    \osemPB{b;c}  &= \osemPB{b;\ncopier;((D_c;\coflip{1})\otimes T_c)} \tag{Prop. \ref{prop:decompositionpartialcircuits}}\\
    &= \osemPB{(b\otimes b);((D_c;\coflip{1})\otimes T_c)} \tag{\eqref{ax:N2}}\\
    &= \osemPB{(b;D_c;\coflip{1})\otimes (b;T_c)} \tag{SMC}\\
    &= \osemPB{(\Flip{0};\coflip{1})\otimes (b;T_c)} \tag{$b;D_c\eqPB\Flip{0}$}\\
    &= \bot \notag
\end{align}

To prove i.,  we assume that $\osemPB{b;c}=\bot$ and $\osem{b;D_c}=1$ and  immediately obtain a contradiction: by the first derivation above $\osemPB{b;T_c}=\osemPB{b;c}=\bot$, but this is not possible since $b;T_c \in \Diag{\SigB}[1,A^m]$, i.e., it denotes a total Boolean function. Hence, if  $\osemPB{b;c}=\bot$, then $\osem{b;D_c}$ should be $0$.

Now, to prove ii., we first assume that $\osemPB{b;c}=\osemPB{b'}$ and $\osem{b;D_c}=1$ and immediately obtain a contradiction: by the second derivation above $\osemPB{b;c}=\bot$. Hence, if  $\osemPB{b;c}=\osemPB{b'}$, then $\osem{b;D_c}$ should be $1$. By the first derivation $\osemPB{b;T_c}=\osemPB{b;c}=\osemPB{b'}$.

\end{proof}

\begin{proof}[Proof of Lemma~\ref{lemma:complete3}]
There are two cases: either $\osemPB{b;c}=\osemPB{b;d}=\bot$, or $\osemPB{b;c}=\osemPB{b;d}\neq \bot$. In the first case, by Lemma~\ref{lemma:complete2} (i), we have $\osemPB{b;D_c}=0=\osemPB{b;D_d}$, and by Theorem~\ref{thm:completenessbooleancircuits} we have $b;D_c\eqPB b;D_d\eqPB\Flip{0}$. Hence, Lemma~\ref{lemma:complete1} implies ${b;T_c}\eqPB\Flipm{0}={b;T_d}$. In the second case, by Lemma~\ref{lemma:complete2} (ii), we have $\osemPB{b;D_c}=1=\osemPB{b;D_d}$ and $\osemPB{b;T_c}=\osemPB{b'}=\osemPB{b;T_d}$ for some $b'\in \Diag{\SigB}[1,A^m]$.
  \end{proof}

  \begin{proof}[Proof of Proposition~\ref{prop:isopartialboolean}]
Recall that $\osemPB{-}\colon \Diag{{\SigPB}} \to \Cat{Par}$ is defined on objects as $\osemPB{A^n}=2^n$. Hence for all $c\in \Diag{{\SigPB}}[A^n,A^m]$, $\osemPB{c}\in \Cat{Par}[2^n,2^m]$, namely 
 $\osemPB{c}$ is an arrow of $\Cat{Par}_2$. Thus  $\osemPB{-}\colon \Diag{{\SigPB}} \to \Cat{Par}$ factors as
 \[\xymatrix{ \Diag{\SigPB} \ar[r] & \Cat{Par}_2 \ar@{^{(}->}[r]& \Cat{Par} }\text{.}\]
 Moreover, by soundness (Proposition~\ref{prop:soundnesspartialb}),  it factors through $\Diag{{\ThPB}}$: 
\[\xymatrix{ \Diag{\SigPB} \ar@{->>}[r]& \Diag{\ThPB} \ar[r]& \Cat{Par}_2 \ar@{^{(}->}[r]& \Cat{Par} }\]
The central functor is faithful by completeness. To conclude that it is an isomorphism, it is enough to prove fullness, i.e., that for any partial function $f\colon 2^n\to2^m$, there exists some diagram $d\colon A^n\to A^m$ such that $\osemPB{d}=f$. But this is trivial by first observing that for any such partial function, there exist total functions $D_f\colon2^n \to 2 $ and $T_f\colon 2^n \to 2^m$ that decompose $f$ as in Proposition~\ref{prop:decompositionpartialcircuits} and then make use of Proposition~\ref{prop:foolboolean}.
\end{proof}

\begin{proof}[Proof of Lemma~\ref{lemma:caratterizzazionecircuiti0n}]
Recall that $\osemPB{c}$ is a partial function of type $1 \to 2^n$. Thus, it is either $\bot$ or  of the form $\osemPB{b}$ for some $b\in \Diag{\SigB}[1,A^n]$. In the first case, simple computations confirm that  $\osemPB{\Flipn{\bot}}=\bot$, thus $\osemPB{c}=\osemPB{\Flipn{\bot}}$. By Theorem~\ref{thm:completenesspartialcircuits}, $c\eqPB \Flipn{\bot}$. In the second case, $\osemPB{c}=\osemPB{b}$ and, again by Theorem~\ref{thm:completenesspartialcircuits}, $c\eqPB b$.
\end{proof}
 
\section{Appendix to Section \ref{sec:encodingprobboolcircuits}}\label{app:sec:encodingprobboolcircuits}
\begin{proof}[Proof of Proposition \ref{prop:encoding}] 
By induction on $c$. The base case $c=\Flip{p}$ is proved before the statement of the proposition. For the base case $c\in \SigPB\cup\{\id{A},\id{1},\symmt{A}{A}\}$,  $\CBdsem{\encoding{c}} = \CBdsem{\tapeFunct{c} }$ by the definition in Table \ref{tab:encoding} and $\CBdsem{\tapeFunct{c} } = \ParJ(\osemPB{c})$ by \eqref{eq:semtapes}. The latter is exactly $\osem{c}$.
The inductive cases follow immediately from the fact that $\CBdsem{-}$ is a morphism of rig categories and hence preserve $\per$.
\end{proof}
\section{Appendix to Section \ref{sec:preliminaries}}\label{app:sec:cbc}

\begin{proof}[Proof of Theorem~\ref{thm:TCconvexbiproductcategory}]
    See \cite[Theorem 3]{probbooltapesfossacs}.
\end{proof}

\begin{proof}[Proof of Theorem~\ref{thm:syntacticadjunction}]
    See \cite[Theorem 4]{probbooltapesfossacs}.
\end{proof}
\section{Appendix of Section \ref{sec:completenessprobbooltapes}}\label{app:{sec:completenessprobbooltapes-NEW2}}

\begin{proof}[Proof of Proposition \ref{prop:soundnesstapes}]
Since the axioms in $\ThPB$ are sounds with respect to $\osem{-}$ and since $\osem{-}=\dsem{\encoding{-}}$, these axioms are also sound for $\dsem{-}$.
For \eqref{ax:AXPBP1} and \eqref{ax:AXPBP2}, it enough to check that the left and the right hand sides are mapped into the same $\KlD$-arrows by $\dsem{-}$.
For \eqref{ax:canc}, it is enough to observe that cancellativity holds in $\KlD$.
\end{proof}

Consider the congruence $\tapeeqPB$ on $\CatT{\Diag{\SigPB}}$ generated by the rules in \eqref{eq:congr2}
\begin{equation}\label{eq:congr2}
        \scalebox{0.85}{$
        \centering
        \begin{array}{c}
        \begin{array}{c@{\qquad\qquad}c@{\qquad\qquad}c@{\qquad\qquad}c}
            \inferrule*[right=($\ThPB$)]{c =_{\ThPB} d}{\tape{c} \tapeeqPB \tape{d}}
            &
            \inferrule*[right=($\textsc{R}$)]{-}{\t \tapeeqPB \t}
            &
            \inferrule*[right=($\textsc{S}$)]{\t_1 \tapeeqPB \t_2}{\t_2 \tapeeqPB \t_1}
            &    
            \inferrule*[right=($\textsc{T}$)]{\t_1 \tapeeqPB \t_2 \quad \t_2 \tapeeqPB \t_3}{\t_1 \tapeeqPB \t_3}
        \end{array}
        \\[10pt]
        \begin{array}{c@{\qquad}c@{\qquad}c}
            \inferrule*[right=($;$)]{\t_1 \tapeeqPB \t_2 \quad \s_1 \tapeeqPB \s_2}{\t_1;\s_1 \tapeeqPB \t_2;\s_2}
            &
            \inferrule*[right=($\piu$)]{\t_1 \tapeeqPB \t_2 \quad \s_1 \tapeeqPB \s_2}{\t_1\piu\s_1 \tapeeqPB \t_2 \piu \s_2}
            &
            \inferrule*[right=($\per $)]{\t_1 \tapeeqPB \t_2 \quad \s_1 \tapeeqPB \s_2}{\t_1\per \s_1 \tapeeqPB \t_2 \per \s_2}       
        \end{array}
        \end{array}
        $}
\end{equation}

\begin{proposition}\label{prop:quotienttheory}
For all $\s , \t \in \CatT{\Diag{\SigPB}}$, $\s \tapeeqPB \t$ iff $\CatT{Q_{\ThPB}}(\s) = \CatT{Q_{\ThPB}}(\t)$.
\end{proposition}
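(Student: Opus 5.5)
The plan is to establish the two implications separately, using the universal property of $\CatT{-}$ as the free convex biproduct category (Theorem~\ref{thm:syntacticadjunction}) for one direction and an induction on derivations for the other.

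\emph{From $\tapeeqPB$ to equality of images.} We show that the kernel of $\CatT{Q_{\ThPB}}$ is a congruence containing the generating pairs of $\tapeeqPB$.
\begin{itemize}
\item The functor $\CatT{Q_{\ThPB}}$ is a morphism of convex biproduct categories. By the construction in Table~\ref{tab:producttape} it also preserves $\otimes$, since $\per$ is defined from $\piu$, whiskerings and the distributors, all of which $\CatT{Q_{\ThPB}}$ preserves. Hence the relation $\s \mapsto \CatT{Q_{\ThPB}}(\s)=\CatT{Q_{\ThPB}}(\t)$ is an equivalence closed under $;$, $\piu$ and $\per$.
\item For the rule $(\ThPB)$: if $c=_{\ThPB} d$, then $Q_{\ThPB}(c)=Q_{\ThPB}(d)$, so $\CatT{Q_{\ThPB}}(\tape{c})=\tape{Q_{\ThPB}(c)}=\tape{Q_{\ThPB}(d)}=\CatT{Q_{\ThPB}}(\tape{d})$.
\item Since $\tapeeqPB$ is the least congruence closed under $(\ThPB)$, it is contained in this kernel, which gives the implication.
\end{itemize}

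\emph{From equality of images to $\tapeeqPB$.} The plan is to build an inverse to the induced quotient map.
\begin{itemize}
\item Write $\CatT{\Diag{\SigPB}}/\tapeeqPB$ for the quotient. We first check that it is still a convex biproduct category: the pca enrichment, the coproducts and the convex products are all defined by tape terms, and $\tapeeqPB$ is a congruence for $;$ and $\piu$.
\item The canonical functor $\Diag{\SigPB}\to \CatT{\Diag{\SigPB}}\to \CatT{\Diag{\SigPB}}/\tapeeqPB$, $c\mapsto[\tape{c}]$, identifies $\ThPB$-equal diagrams by the rule $(\ThPB)$. It therefore factors through $Q_{\ThPB}$ as a functor $G\colon\Diag{\ThPB}\to \CatT{\Diag{\SigPB}}/\tapeeqPB$.
\item By Theorem~\ref{thm:syntacticadjunction}, $G$ extends uniquely to a morphism $\hat G\colon \CatT{\Diag{\ThPB}}\to \CatT{\Diag{\SigPB}}/\tapeeqPB$ of convex biproduct categories.
\item The composite $\hat G\circ \CatT{Q_{\ThPB}}$ and the quotient map $[-]$ are both morphisms of convex biproduct categories out of the free object $\CatT{\Diag{\SigPB}}$. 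They agree on generators $\tape{c}$, so by the uniqueness part of the adjunction they coincide.
\item Consequently, if $\CatT{Q_{\ThPB}}(\s)=\CatT{Q_{\ThPB}}(\t)$, then applying $\hat G$ gives $[\s]=[\t]$, that is, $\s\tapeeqPB\t$.
\end{itemize}

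\emph{Main obstacle.} The delicate step is verifying that the quotient $\CatT{\Diag{\SigPB}}/\tapeeqPB$ is genuinely a convex biproduct category, together with the precise identification of $\CatT{Q_{\ThPB}}$ as the image of $Q_{\ThPB}$ under the functor $\CatT{-}$. The first part needs the uniqueness clause of convex products to descend to the quotient. Uniqueness there is an equational consequence of the tape axioms, which hold a fortiori in the quotient. Existence descends directly.

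If that route is cumbersome, the fallback is an explicit inductive argument.
\begin{itemize}
\item Every term of $\CatT{\Diag{\ThPB}}$ can be lifted to a term of $\CatT{\Diag{\SigPB}}$ by choosing a representative of each $\ThPB$-class appearing inside the tapes.
\item Any two liftings of the same term are related by $\tapeeqPB$.
\item An equality between terms of $\CatT{\Diag{\ThPB}}$ is a derivation from the axioms in Table~\ref{tabellone}, all of which hold uniformly for any choice of representatives.
\item Induction on that derivation then transports the equality back to $\tapeeqPB$.
\end{itemize}
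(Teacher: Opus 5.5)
Your forward direction is fine. The kernel of $\CatT{Q_{\ThPB}}$ is a congruence for $;$, $\piu$, $\per$ that contains the $(\ThPB)$ pairs, and this is exactly the paper's check that $F\colon\CatT{\Diag{\SigPB}}_{\tapeeqPB}\to\CatT{\Diag{\ThPB}}$ is well defined.

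\textbf{The gap is in your main route for the converse, at the step you flag.} You claim that uniqueness of mediating maps into convex products is an equational consequence of the tape axioms, and is therefore inherited by every quotient. This is false. What the axioms give is $\tfrac12\cdot h=(h;\pi_1;\iota_1)+_{\frac12}(h;\pi_2;\iota_2)$. That recovers $h$ from its projections only under cancellativity, and $\tapeeqPB$ has no cancellation rule.

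Here is a counterexample. Let $\Cat{C}$ have two parallel arrows $f,g\colon A\to X$, and let $\equiv$ be the congruence on $\CatT{\Cat{C}}$ generated by $\tfrac12\cdot\tapeFunct{f}\equiv\tfrac12\cdot\tapeFunct{g}$. Put $h\defeq\langle\tapeFunct{f},\tapeFunct{f}\rangle_{\frac12,\frac12}$ and $h'\defeq\langle\tapeFunct{g},\tapeFunct{f}\rangle_{\frac12,\frac12}$. Their projections are $\equiv$-equal, but $h\not\equiv h'$:
\begin{itemize}
\item Read arrows as stochastic matrices (\Cref{cor:isotapematrices}).
\item Each generating instance $C[\tfrac12\cdot\tapeFunct{f}]\equiv C[\tfrac12\cdot\tapeFunct{g}]$ either leaves a column unchanged, or relates columns of total mass at most $1-\lambda/2<1$, where $\lambda>0$ is the weight the context puts on the hole.
\item Consider the map that keeps columns of mass $1$ and identifies $f$ with $g$ in columns of mass $<1$. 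Its kernel is a congruence containing the generating pair.
\item That kernel separates the full-mass columns $h$ and $h'$.
\end{itemize}
So $\CatT{\Cat{C}}/{\equiv}$ is not a convex biproduct category. On your grounds you therefore cannot apply \Cref{thm:syntacticadjunction} to $\CatT{\Diag{\SigPB}}_{\tapeeqPB}$. Nor can you use its uniqueness clause to identify $\hat G\circ\CatT{Q_{\ThPB}}$ with the quotient map. Your particular quotient does turn out to be a convex biproduct category, but only a posteriori, via the isomorphism with $\CatT{\Diag{\ThPB}}$ that you are trying to establish. Note that the paper also needs a separate result, \cite[Proposition 6]{probbooltapesfossacs}, to know that $\CatT{\Diag{\ThPB}}_\sim$ is a convex biproduct category.

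Your fallback is the right argument and is essentially the paper's proof. It uses that $\CatT{\Diag{\ThPB}}$ is free in the \emph{syntactic} sense: terms of \eqref{tapesGrammar} modulo the purely equational axioms of Table~\ref{tabellone}. So the inductively defined $G$ with $G(\tapeFunct{[c]_{\ThPB}})=[\tapeFunct{c}]_{\tapeeqPB}$ is well defined into any structure satisfying those equations, which $\CatT{\Diag{\SigPB}}_{\tapeeqPB}$ does. Then $G\circ F=\mathrm{id}$ follows by induction on terms, with no adjunction or convex-product universal property needed. Your ``liftings'' argument is precisely this well-definedness check, which the paper leaves implicit. The paper also verifies $F\circ G=\mathrm{id}$, obtaining an isomorphism that it reuses in the proof of \Cref{lemma:diagramma-commutativo}. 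For the stated equivalence, your one-sided inverse suffices.
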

\begin{proof}
  Let $\CatT{\Diag{\SigPB}}_{\tapeeqPB}$ be the quotient of $\CatT{\Diag{\SigPB}}$ by $\tapeeqPB$: objects are those of  $\CatT{\Diag{\SigPB}}$, arrows are $\tapeeqPB$-equivalence classes of arrows of $\CatT{\Diag{\SigPB}}$, hereafter denoted by $[\t]_{\tapeeqPB}$. We prove that $\CatT{\Diag{\SigPB}}_{\tapeeqPB}$ is isomorphic to $\CatT{\Diag{\ThPB}}$.

We first define the functor $F \colon \CatT{\Diag{\SigPB}}_{\tapeeqPB} \to \CatT{\Diag{\ThPB}}$ inductively as 
\[\begin{array}{ccccc}
F(\diagp{U})\defeq \; \diagp{U} &  F(\,\bangp{U})\defeq \bangp{U} & F(\tapeFunct{c})\defeq \tapeFunct{[c]_{\ThPB}} &  F(\cobang{U})\defeq \cobang{U} &
F(\codiag{U})\defeq\codiag{U}  \\
 F(\id{U})\defeq \id{U} & F(\id{\zero})\defeq\id{\zero} & F(\sigma_{U,V}^{\piu})\defeq \sigma_{U,V}^{\piu}& F(\t_1 ; \t_2)\defeq F(\t_1);F(\t_2) & F(\t_1 \piu \t_2) \defeq F(\t_1)\piu F(\t_2)
\end{array}\]
Since arrows in $\CatT{\Diag{\SigPB}}_{\tapeeqPB}$ are $\tapeeqPB$-equivalence classes, we need to prove that the functor is well-defined, namely that if $\s \tapeeqPB  \t$ then $F(\s)=F(\t)$. This is done by induction on the rules in \eqref{eq:congr2}. For the case of the rule ($\ThPB$), observe that $\s=\tape{c}$, $\t=\tape{d}$ and $c=_{\ThPB}d$. By the definition of $F$, $F(\s)=F(\t)$. All the other cases are trivial, with the only exception of the rule $(\otimes)$ where one observes that, since  $=_{\ThPB}$ is a congruence w.r.t. $\otimes$, then $F$ is a morphism of rig categories.

The functor  $G \colon  \CatT{\Diag{\ThPB}} \to \CatT{\Diag{\SigPB}}_{\tapeeqPB} $ is defined on $\tapeFunct{[c]_{\ThPB}}$ as $G(\tapeFunct{[c]_{\ThPB}})\defeq [\tapeFunct{c}]_{\tapeeqPB}$ and then inductively in the same style of the functor $F$ above. Observe that $F(G(\tapeFunct{[c]_{\ThPB}}))=\tapeFunct{[c]_{\ThPB}}$ and $G(F([\tapeFunct{c}]_{\tapeeqPB})) = [\tapeFunct{c}]_{\tapeeqPB}$. A simple inductive argument confirms that $F$ and $G$ are inverses of each other.
\end{proof}

\begin{proof}[Proof of Proposition \ref{prop:twoequivalence}]
 
  We first prove by induction on the rules in \eqref{eq:congr3} that if $\s \dot{\sim} \t$ then $\CatT{Q_{\ThPB}}(\s) \eqsyn \CatT{Q_{\ThPB}}(\t)$.
Consider the case of the rule $(\ThPB)$: in this case $\s=\tapeFunct{c}$, $\t=\tapeFunct{d}$ and $c\eqPB d$. Thus $\CatT{Q_{\ThPB}}(\s)= \CatT{Q_{\ThPB}}(\t)$.
All the other cases trivially follow from the fact that $\CatT{Q_{\ThPB}}$ is a morphism of rig categories: see \cite{bonchi2025tapediagramsmonoidalmonads}.

Again, by induction on the rules, we prove that if $\CatT{Q_{\ThPB}}(\s) \eqsyn \CatT{Q_{\ThPB}}(\t)$ then $\s \dot{\sim} \t$.
Here the only non-trivial cases are $(R)$, $\eqref{ax:AXPBP1}$ and $\eqref{ax:AXPBP2}$. For $(R)$, if $\CatT{Q_{\ThPB}}(\s) = \CatT{Q_{\ThPB}}(\t)$, then by Proposition \ref{prop:quotienttheory}, $\s \tapeeqPB \t$. Hence, $\s \dot{\sim} \t$. For $\eqref{ax:AXPBP1}$, if $\CatT{Q_{\ThPB}}(\s) \overset{\eqref{ax:AXPBP1}}{=} \CatT{Q_{\ThPB}}(\t)$, then $\s$ is $\tapeeqPB$-equivalent to the left hand side of \eqref{ax:AXPBP1} and $\t$ is $\tapeeqPB$-equivalent to the right hand side of \eqref{ax:AXPBP1}. Hence, they are $\dot{\sim}$-equivalent, and applying transitivity of $\dot{\sim}$, we conclude that $\s \dot{\sim} \t$. The case of $\eqref{ax:AXPBP2}$ is analogous.
\end{proof}

\begin{proof}[Proof of Lemma \ref{lemma:bottomstar2}]
 \begin{align}
  \Flipn{\bot}[t]&=\Flip{\bot}[t] ;\ncopierbis[t]  \tag{Def.\ of $\Flipn{\bot}$}\\
  &\sim \Tcounitempty{};\Tunit{}; \ncopierbis[t]  \tag{Axiom \ref{ax:AXPBP2}}\\
  &=  \Tcounitempty{};\Tunitm{n} \tag{\eqref{ax:tapes:cobangnat}}\\
  &= \star_{A^0,A^n}. \tag{Def. of $\star$}
 \end{align}
\end{proof}

\begin{proof}[proof of Lemma~\ref{lemma:axiomsninput}]
By induction  on $n$. The base case is trivial. For the inductive step, we have
 \begin{align}
  \frac{1}{2^{n+1}}\cdot \id{A^{n+1}}&= \frac{1}{2^n}\frac{1}{2}\cdot( \id{A}\otimes \id{A^n}) \notag\\
  &= \frac{1}{2^n}\cdot\left(\frac{1}{2}\cdot \id{A}\otimes \id{A^n} \right) \tag{\eqref{eq:monoidalenrichment}}\\
  &= (\frac{1}{2}\cdot \id{A})\otimes (\frac{1}{2^n}\cdot \id{A^n}) \tag{\eqref{eq:monoidalenrichment}}\\
  &\sim (\frac{1}{2}\cdot \id{A})\otimes \left(\sum_{\overrightarrow{b}\in \Cat{B}[1,A^n]}^{} \frac{1}{2^n}\cdot\overleftarrow{b};\overrightarrow{b}\right) \tag{Inductive hypothesis}\\
  &\sim (\coflip{1}[t];\Flip{1}[t] +_{\frac{1}{2}} \coflip{0}[t];\Flip{0}[t]) \otimes \left(\sum_{\overrightarrow{b}\in \Cat{B}[1,A^n]}^{} \frac{1}{2^n}\cdot\overleftarrow{b};\overrightarrow{b}\right) \tag{Axiom \ref{ax:AXPBP1}}\\
  &= \sum_{\overrightarrow{b}\in \Cat{B}[1,A^n]}^{} \frac{1}{2^n}\cdot\left((\coflip{1}[t];\Flip{1}[t] +_{\frac{1}{2}} \coflip{0}[t];\Flip{0}[t]) \otimes (\overleftarrow{b};\overrightarrow{b})\right) \tag{\eqref{eq:monoidalenrichment}}\\
  &=\sum_{\overrightarrow{b}\in \Cat{B}[1,A^n]}^{} \frac{1}{2^n}\cdot\left(((\coflip{1}[t];\Flip{1}[t]) \otimes (\overleftarrow{b};\overrightarrow{b})) +_{\frac{1}{2}} ((\coflip{0}[t];\Flip{0}[t]) \otimes (\overleftarrow{b};\overrightarrow{b}))\right) \tag{SMC}\\
  &= \sum_{\overrightarrow{b}\in \Cat{B}[1,A^n]}^{} \frac{1}{2^n}\cdot\left((\coflip{1}[t]\otimes \overleftarrow{b});(\Flip{1}[t]\otimes \overrightarrow{b}) +_{\frac{1}{2}} (\coflip{0}[t]\otimes \overleftarrow{b});(\Flip{0}[t]\otimes \overrightarrow{b})\right) \tag{\eqref{eq:monoidalenrichment}}
 \end{align}
 Now, since the set $\{\Flip{1}[t]\otimes \overrightarrow{b}, \Flip{0}[t]\otimes \overrightarrow{b}\}$ as $\overrightarrow{b}\in \Cat{B}[1,A^n]$ is exactly $\Cat{B}[1,A^{n+1}]$, then, rearranging the above sum we obtain $\frac{1}{2^{n+1}}\cdot \id{A^{n+1}}= \sum_{\overrightarrow{b}\in \Cat{B}[1,A^{n+1}]}^{} \frac{1}{2^{n+1}}\cdot\overleftarrow{b};\overrightarrow{b}$, as desired.
\end{proof}

\begin{proof}[Proof of Lemma \ref{lemma:PBP2}]
Consider the following derivation: 
\begin{align}
  \frac{1}{2^n} \cdot \t &= \frac{1}{2^n}\cdot (\id{A^n};\t) \notag\\
  &= (\frac{1}{2^n}\cdot \id{A^n});\t \tag{$\Cat{PCA}$-enrichment} \\ 
  &\sim (\sum_{\overrightarrow{b}\in \Cat{B}[1,A^n]}^{} \frac{1}{2^n}\cdot\overleftarrow{b};\overrightarrow{b});\t \tag{Lemma \ref{lemma:axiomsninput}}\\
  &= \sum_{\overrightarrow{b}\in \Cat{B}[1,A^n]}^{} \frac{1}{2^n}\cdot\overleftarrow{b};\overrightarrow{b};\t \tag{\eqref{eq:enr}}\\
  &\sim \sum_{\overrightarrow{b}\in \Cat{B}[1,A^n]}^{} \frac{1}{2^n}\cdot\overleftarrow{b};\overrightarrow{b};\s \tag{Hp.}\\
  &=(\sum_{\overrightarrow{b}\in \Cat{B}[1,A^n]}^{} \frac{1}{2^n}\cdot\overleftarrow{b};\overrightarrow{b});\s \tag{\eqref{eq:enr}}\\
  &\sim (\frac{1}{2^n}\cdot \id{A^n});\s \tag{Lemma \ref{lemma:axiomsninput}}\\
  &= \frac{1}{2^n} \cdot \s. \tag{$\Cat{PCA}$-enrichment} 
\end{align}
Hence, applying axiom \eqref{ax:canc}, we conclude that $\t \eqsyn \s$.
\end{proof}

Denote with $G^\flat: \CatT{\Cat{C}}\to \Cat{D}$ the arrow obtained by a functor $G\colon \Cat{C}\to U(\Cat{D})$ through the adjunction $\CatT{-}\dashv U$ in Theorem~\ref{thm:syntacticadjunction}. As proven in \cite[Theorem 4]{probbooltapesfossacs} it is given by the following inductive definition on the structure of arrows in $\CatT{\Cat{C}}$:
\begin{equation}\label{eq:bemolle}
\begin{aligned}
G^\flat(\id{U}) &\defeq \id{G(U)} 
&\qquad G^\flat(\tapeFunct{c}) &\defeq G(c) \\
G^\flat(\diagp{P}) &\defeq\, \diagp{G(P)}  
&\qquad G^\flat(\bang{P}) &\defeq \bang{G(P)} \\
G^\flat(\codiag{P}) &\defeq \codiag{G(P)} 
&\qquad G^\flat(\cobang{P}) &\defeq \cobang{G(P)} \\
G^\flat(\t;\s) &\defeq\, G^\flat(\t);G^\flat(\s)
&\qquad G^\flat(\t \oplus \s) &\defeq\, G^\flat(\t)\oplus G^\flat(\s) \\
G^\flat(\symmp{P}{Q}) &\defeq\, \symmp{G(P)}{G(Q)}
&\qquad G^\flat(\id{\zero}) &\defeq \id{G(\zero)}
\end{aligned}
\end{equation}

Denote with $\jcompiso$ the composition of the isomorphism in Proposition~\ref{prop:isopartialboolean} and the functor $\ParJ_2$:
\[\xymatrix{  \Diag{\mathbb{\SigPB }} \ar[r]^{\cong}& \Cat{Par}_2 \ar@{->}[r]^{\ParJ_2}& \KlD }\]

\begin{lemma}\label{lemma:J_2}
$\comp=\jcompiso^\flat$.
\end{lemma}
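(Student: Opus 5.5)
Both $\comp$ and $\jcompiso^\flat$ are functors $\CatT{\Diag{\ThPB}} \to \KlD$. The plan is to show that both are morphisms of convex biproduct categories extending the same functor $\jcompiso\colon \Diag{\ThPB}\to U(\KlD)$ along the unit of the adjunction $\CatT{-}\dashv U$ of Theorem~\ref{thm:syntacticadjunction}. By the uniqueness part of the universal property, they must then coincide. Concretely, I would check that $\comp(\tapeFunct{c}) = \jcompiso(c)$ for every $c\in\Diag{\ThPB}$, and that $\comp$ is a $\Cat{PCA}$-enriched functor preserving finite coproducts.

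\textbf{Step 1: $\comp$ is a morphism in $\Cat{CBCat}$.} By definition, $\comp$ is the composite of the isomorphism $\CatT{\Diag{\ThPB}}\cong\stmat{\Diag{\ThPB}^+}$ of Corollary~\ref{cor:isotapematrices}, the isomorphism $\stmat{\Diag{\ThPB}^+}\cong\stmat{\Cat{Par}_2^+}$ induced by Proposition~\ref{prop:isopartialboolean}, and $\ParJ_2'$.
\begin{itemize}
\item The first is an isomorphism in $\Cat{CBCat}$ by Corollary~\ref{cor:isotapematrices}.
\item The second is $\stmat{(-)^+}$ applied to an isomorphism of categories, hence a $\Cat{CBCat}$-morphism by functoriality of the two left adjoints in \eqref{eq:adjunctions}.
\item $\ParJ_2'$ is the transpose of $\ParJ_2$ through the composite adjunction, hence a $\Cat{CBCat}$-morphism by construction.
\end{itemize}
Composites of $\Cat{CBCat}$-morphisms are $\Cat{CBCat}$-morphisms, so $\comp$ is one.

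\textbf{Step 2: $\comp$ and $\jcompiso$ agree on generators.} Trace a generator $\tapeFunct{c}$ along the chain. By the dictionary of Figure~\ref{fig:dictionary}, the first isomorphism sends it to the $1\times1$ matrix with entry $1\cdot\delta_{c}$. The second isomorphism sends this to $1\cdot\delta_{\osemPB{c}}$. Finally, $\ParJ_2'$ acts on the entries of a matrix through the transpose of $\ParJ_2$ along $(-)^+\dashv U$, which sends a Dirac $\delta_f$ to $\ParJ_2(f)$. The result is $\ParJ_2(\osemPB{c})=\jcompiso(c)$.

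\textbf{Step 3: conclude.} $\jcompiso^\flat$ is by definition the unique $\Cat{CBCat}$-morphism $\CatT{\Diag{\ThPB}}\to\KlD$ whose precomposition with the unit $c\mapsto\tapeFunct{c}$ equals $\jcompiso$. By Steps 1 and 2, $\comp$ has the same property, so $\comp=\jcompiso^\flat$.

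If one prefers to avoid abstract uniqueness, there is an alternative. Compare the two functors by induction on the syntax of tapes, using the explicit clauses of \eqref{eq:bemolle}: $\comp$ preserves $;$ and $\oplus$, and it sends the structural maps $\diagp{}$, $\bang{}$, $\codiag{}$, $\cobang{}$, $\sigma^{\oplus}$ and identities to the corresponding structure of $\KlD$, since any coproduct- and pca-preserving functor does.

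\textbf{Main obstacle.} The delicate point is bookkeeping rather than any deep idea. One has to confirm that the two isomorphisms in \eqref{eq:i2so} really are morphisms of convex biproduct categories, and in particular strictly compatible with the chosen coproducts and with the $\diagp{}$ and $\bang{}$ structure, not just isomorphisms of categories. One also has to confirm that the transpose of $\ParJ_2$ acts entrywise as described in Step 2. I would handle this by recalling that each construction in \eqref{eq:adjunctions} is functorial and that its counits are given by the evident matrix and sum formulas.
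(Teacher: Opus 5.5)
Your proposal is correct and takes essentially the same route as the paper, whose proof simply cites the adjunctions in \eqref{eq:adjunctions}, Theorem~\ref{thm:syntacticadjunction} and Proposition~\ref{prop:isopartialboolean}: $\comp$ is a morphism of convex biproduct categories that agrees with $\jcompiso$ on the generators $\tapeFunct{c}$, so uniqueness of transposes forces $\comp=\jcompiso^\flat$. Your Steps 1--3, together with the strictness bookkeeping you flag, just make explicit what the paper leaves implicit.
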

\begin{proof}
  It follows from \eqref{eq:adjunctions}, Theorem~\ref{thm:syntacticadjunction}  and Proposition~\ref{prop:isopartialboolean}.
\end{proof}

\begin{proof}[Proof of Lemma \ref{lemma:diagramma-commutativo}]
    In order to prove the statement we exploit Proposition~\ref{prop:quotienttheory} which provides an isomorphism $F\colon\CatT{\Diag{\SigPB}}_{\tapeeqPB} \to \CatT{\Diag{\ThPB}}$, such that $Q_{\tapeeqPB};F= {\CatT{Q_\ThPB}}$, where $Q_{\tapeeqPB}\colon \CatT{\Diag{\SigPB}} \to \CatT{\Diag{\SigPB}}_{\tapeeqPB}$ is the quotient functor.
     Then, since $\dsem{-}$ is clearly sound with respect to ${\tapeeqPB}$, it is enough to prove that $\comp\circ F\circ Q_{\tapeeqPB} = \dsem{-}$. Since $\dsem{-}$ is defined inductively, it is enough to check that $\comp\circ F\circ Q_{\tapeeqPB}$ and $\dsem{-}$ coincide on generators.
    For instance consider the generator $\diagp{A^n}$:
    \begin{align*}
      \comp\circ F\circ Q_{\tapeeqPB}(\diagp{A^n})&= \comp\circ F(\diagp{A^n}) \tag{Def. $Q_{\tapeeqPB}$}\\
      &= \comp(\diagp{A^n}) \tag{Def. $F$}\\
      &=\jcompiso^\flat(\diagp{A^n}) \tag{Lemma \ref{lemma:J_2}}\\
      &=\, \diagp{2^n} \tag{\ref{eq:bemolle}}\\
      &=\dsem{\diagp{A^n}}. \tag{Def. $\dsem{-}$}
    \end{align*}
    The cases for the generators \,$\bangp{A^n}$, $\cobang{A^n}$ and $\codiag{A^n}$ are analogous. For the generator $\tapeFunct{c}$, we have
    \begin{align}
      \comp\circ F\circ Q_{\tapeeqPB}(\tapeFunct{c})&= \comp\circ F([\tapeFunct{c}]_{\tapeeqPB}) \tag{Def. $Q_{\tapeeqPB}$}\\
      &= \comp(\tapeFunct{[c]_{\ThPB}}) \tag{Def. $F$}\\
      &=\jcompiso^\flat(\tapeFunct{[c]_{\ThPB}}) \tag{Lemma \ref{lemma:J_2}}\\
      &= \jcompiso([c]_{\ThPB}) \tag{\ref{eq:bemolle}}\\
      &=\ParJ_2(\osemPB{c}) \tag{Proposition \ref{prop:isopartialboolean}}\\
      &= \dsem{\tapeFunct{c}}. \tag{Def. $\dsem{-}$}
\end{align}
The remaining cases are trivial.
Hence, by the universal property of
$\CatT{\Diag{\SigPB}}_{\tapeeqPB}$ we obtain that $\comp\circ F= I\circ \eqsynq{Q}\circ F$, and since $F$ is an isomorphism, we obtain the statement.

    \end{proof}

\begin{proof}[Proof of Lemma \ref{lemma:Ffaith}]

  Denote with $\ParJ_2^\sharp\colon \Cat{Par}_2^+\to \KlD$ the arrow obtained by $\ParJ_2$ through the adjunction $(-)^+\dashv U$ in \eqref{eq:adjunctions}. By construction, it is the identity-on-objects functor sending a subdistribution $d\in \Cat{Par}_2^+[1,2^m]$ into the subdistribution $\ParJ_2^\sharp(d)= \sum_{f\in \Cat{Par}_2[1,2^m]} d(f)\cdot{\ParJ_2(f)}$. Hence, if $d\in \Sets_2^+[1,2^m]$, then $\ParJ_2^\sharp(d)= \sum_{f\in \Sets_2[1,2^m]} d(f)\cdot{\ParJ_2(f)}=\sum_{f\in \Sets_2[1,2^m]} d(f)\cdot{\delta_{f(\bullet)}}$, where $f(\bullet)\in 2^m$. Then, $\ParJ_2^\sharp$ restricted to $\Sets_2^+[1,2^m]$ provides the obvious bijection $$\Dis(\Sets_2[1,2^m])\cong \Dis(2^m) \cong   \KlD_2[1,2^m]\text{.}$$ 

The statement now follows from the fact that by construction $\eta_{\Cat{Par}_2^+};\ParJ_2'=\ParJ_2^\sharp$, where $\eta$ is the unit of the adjunction $\stmat{-}\dashv U$ in \eqref{eq:adjunctions}, and that fact that the isomorphisms 
\[\CatT{\Diag{\ThPB}} \cong \stmat{\Diag{\ThPB}^+}\cong \stmat{\Cat{Par}_2^+}\] 
restricted to $[A^0,A^m]$ factor through the composition
\[\CatT{\Diag{\ThPB}}[A^0,A^m] \cong \Diag{\ThPB}^+[A^0,A^m]\cong \Cat{Par}_2^+[2^0,2^m]\overset{\eta_{\Cat{Par}_2^+}}{\to }\stmat{\Cat{Par}_2^+}[2^0,2^m],\]
where the first isomorphism is given by Corollary~\ref{cor:isotapematrices} and the obvious isomorphism $\stmat{\Diag{\ThPB}^+}[A^0,A^m]\cong \Diag{\ThPB}^+[A^0,A^m]$, and the second isomorphism is induced by Proposition~\ref{prop:isopartialboolean}.
\end{proof}

\begin{proof}[Proof of Lemma~\ref{lemma:PBP0}]
Since $\CatT{\Diag{\ThPB}}\cong \stmat{\Diag{\ThPB}^+}$,  $\t$ corresponds to  a subdistribution on $\Diag{\ThPB}[A^0,A^m]$. By Lemma~\ref{lemma:caratterizzazionecircuiti0n}, each partial Boolean circuit in $\Diag{\ThPB}[A^0,A^m]$ is either a Boolean circuit in $\Diag{\SigB}[1,A^m]$ or it is of the form $\Flipm{\bot}$. Hence, we can rearrange the subdistribution corresponding to $\t$ into a tape of the desired form.
\end{proof}

\begin{proof}[Proof of Lemma \ref{lemma:PBP1}]
  By Lemma~\ref{lemma:PBP0}, we can write 
  \begin{equation}\label{eq:BLABLA}
  \t=(\sum_{i=1}^{n}p_i\cdot \overrightarrow{b}_i)+_p(\star_{A^0,A^m} +_q \Flipm{\bot}[t]) \text{ and }\t'=(\sum_{j=1}^{n'}p'_j\cdot \overrightarrow{b}'_j)+_{p'}(\star_{A^0,A^m} +_{q'} \Flipm{\bot}[t])\end{equation} 
  for some $p_i,p'_j\in (0,1)$, for $i=1,\dots,n$ and $j=1,\dots,n'$, $p,p',q,q'\in [0,1]$  such that $\sum_{i=1}^{n}p_i= 1$ and $\sum_{j=1}^{n'}p'_j= 1$, where $\overrightarrow{b}_i\colon A^0 \to A^m\in \Cat{B}[1,A^m]$ and $\overrightarrow{b}'_j\colon A^0 \to A^m\in \Cat{B}[1,A^m]$. Then,
  \begin{align}
    \comp(\t)&= \comp((\sum_{i=1}^{n}p_i\cdot \overrightarrow{b}_i) +_p(\star_{A^0,A^m} +_q \Flipm{\bot}[t])) \tag{\ref{eq:BLABLA}}\\
    &= \comp((\sum_{i=1}^{n}p_i\cdot \overrightarrow{b}_i)) +_p(\comp(\star_{A^0,A^m}) +_q \comp(\Flipm{\bot}[t])) \tag{$\comp$ pca-enriched}\\
    &= \comp((\sum_{i=1}^{n}p_i\cdot \overrightarrow{b}_i)) +_p (\star_{1,2^m} +_q \star_{1,2^m}) \tag{$\comp(\star_{A^0,A^m})=\star_{1,2^m}=\comp(\Flipm{\bot}[t])$}\\
    &= p\cdot \comp(\sum_{i=1}^{n}p_i\cdot \overrightarrow{b}_i) \tag{Idemp.\ in (\ref{eq:pca})}\\
    &=\comp(p\cdot \sum_{i=1}^{n}p_i\cdot \overrightarrow{b}_i)\tag{$\comp$ pca-enriched}
  \end{align}
  and similarly $\comp(\t')=\comp(p'\cdot \sum_{j=1}^{n'}p'_j\cdot \overrightarrow{b}'_j)$. Hence, $\comp(p\cdot \sum_{i=1}^{n}p_i\cdot \overrightarrow{b}_i)=\comp(p'\cdot \sum_{j=1}^{n'}p'_j\cdot \overrightarrow{b}'_j)$, and Lemma~\ref{lemma:Ffaith} implies that  
  \begin{equation}\label{BLA2}p\cdot \sum_{i=1}^{n}p_i\cdot \overrightarrow{b}_i=p'\cdot \sum_{j=1}^{n'}p'_j\cdot \overrightarrow{b}'_j\text{.}\end{equation} Thus,
  \begin{align}
    \t&=(\sum_{i=1}^{n}p_i\cdot \overrightarrow{b}_i) +_p(\star_{A^0,A^m} +_q \Flipm{\bot}[t]) \tag{\eqref{eq:BLABLA}}\\
    &\sim  \sum_{i=1}^{n}p_i\cdot \overrightarrow{b}_i +_p(\star_{A^0,A^m} +_q \star_{A^0,A^m}) \tag{Lemma \ref{lemma:bottomstar2}}\\
    &= \sum_{i=1}^{n}p_i\cdot \overrightarrow{b}_i +_p \star_{A^0,A^m} \tag{Idemp.\ in (\ref{eq:pca})}\\
    &= p\cdot \sum_{i=1}^{n}p_i\cdot \overrightarrow{b}_i\tag{Def. $p\cdot-$}\\
    &= p'\cdot \sum_{j=1}^{n'}p'_j\cdot \overrightarrow{b}'_j \tag{\eqref{BLA2}}\\
    &= \sum_{j=1}^{n'}p'_j\cdot \overrightarrow{b}'_j +_{p'} \star_{A^0,A^m} \tag{Def. $p\cdot-$}\\
    &\sim \sum_{j=1}^{n'}p'_j\cdot \overrightarrow{b}'_j +_{p'}(\star_{A^0,A^m} +_{q'} \star_{A^0,A^m}) \tag{Idemp.\ in (\ref{eq:pca})}\\
    &= \sum_{j=1}^{n'}p'_j\cdot \overrightarrow{b}'_j +_{p'}(\star_{A^0,A^m} +_{q'} \Flipm{\bot}[t]) \tag{Lemma \ref{lemma:bottomstar2}}\\
    &= \t'. \tag{\eqref{eq:BLABLA}}
  \end{align}
\end{proof}

\begin{proof}[Additional details for the proof of Theorem~\ref{thm:completenessPBPtapes}]
    In the main text we have already proved that $I$ is faithful for arrows of type $A^n \to A^m$.
     This fact and convex products easily entail that $I$ is faithful for arrows $\s,\t \colon A^k \to \bigoplus_{i=1}^nA^{m_i}$. 
Indeed, by Theorem~\ref{thm:TCconvexbiproductcategory}, 
\begin{equation}\label{eq:boh}\s=\langle \s_1, \dots, \s_n \rangle_{\vec{q}} \quad \t=\langle \t_1, \dots, \t_n \rangle_{\vec{p}}\end{equation} 
for some $\vec{q}=q_1,\dots, q_n$, $\vec{p}=p_1,\dots, p_n$, $\s_i \colon A^k \to A^{m_i}$ and $\t_i \colon A^k \to A^{m_i}$. Thus
\begin{align*}
I([\s]_\eqsyn) = I([\t]_\eqsyn) & \Rightarrow \comp(\s)=\comp(\t) \tag{Lemma \ref{lemma:diagramma-commutativo}}\\
& \Rightarrow \text{for all }j \text{, } \comp(\s);\comp(\pi_j)=\comp(\t);\comp(\pi_j) \\
& \Rightarrow \text{for all }j\text{, } \comp(\s;\pi_j)=\comp(\t;\pi_j) \tag{Functoriality} \\
& \Rightarrow  \text{for all }j\text{, }  \s;\pi_j\sim\t;\pi_j \tag{Previous implication}\\
& \Rightarrow  \text{for all }j\text{, }  q_j \cdot \s_j\sim p_j \cdot \t_j \tag{\ref{eq:boh}}\\
& \Rightarrow  \text{for all }j\text{, }  [q_j \cdot \s_j]_\sim = [p_j \cdot \t_j]_\sim 
\end{align*}
Now \cite[Proposition 6]{probbooltapesfossacs} states that $\CatT{\Diag{\ThPB}}_\sim$ is a convex biproduct category and that the quotient functor $Q_\eqsyn\colon \CatT{\Diag{\ThPB}} \to \CatT{\Diag{\ThPB}}_\sim$ is a morphism of convex biproduct categories.  Hence, \cite[Corollary 4]{arxivprobbooltapes} implies that $(\bigoplus_{j=1}^nA^{m_j}, [\pi_j]_\sim)$ is the $n$-ary convex product of $A^{m_1}, \dots, A^{m_n}$ in $\CatT{\Diag{\ThPB}}_\sim$ and $ [\s]_\sim$ is the unique arrow such that for all $j=1,\dots,n$,
\begin{align}
 [\s]_\sim;[\pi_j]_\sim &= [\s;\pi_j]_\sim \tag{Functoriality of $Q_\eqsyn$}\\
  &=[q_j \cdot \s_j]_\sim \tag{\ref{eq:boh}}\\
 &=q_j\cdot [\s_j]_\sim; \tag{pca-enrichment of $\CatT{\Diag{\ThPB}}_\sim$}
\end{align}
while $[\t]_\sim$ is the unique arrow such that for all $j=1,\dots,n$,
\begin{align}
  [\t]_\sim;[\pi_j]_\sim&= [\t;\pi_j]_\sim \tag{Functoriality of $Q_\eqsyn$}\\
  &= [p_j \cdot \t_j]_\sim \tag{\ref{eq:boh}}\\
  &=p_j\cdot [\t_j]_\sim. \tag{pca-enrichment of $\CatT{\Diag{\ThPB}}_\sim$}
\end{align}
Hence, since for all $j=1,\dots,n$, $[q_j \cdot \s_j]_\sim = [p_j \cdot \t_j]_\sim$, we have $[\s]_\sim = [\t]_\sim$.


For arrows of arbitrary type $\bigoplus_{j=1}^o A^{k_j} \to \bigoplus_{i=1}^nA^{m_i}$, one can easily rely on the universal property of coproducts and the case that we just proved.
\end{proof}

\begin{proof}[Proof of Corollary~\ref{cor:completenessPBPtapes}]
By Theorem~\ref{thm:completenessPBPtapes}, $I$ is faithful. The following derivation concludes the proof.
  \begin{align}
    \dsem{\s}=\dsem{\t} &\Longleftrightarrow (\CatT{Q_{\ThPB}}; Q_\eqsyn ; I) (\s)= (\CatT{Q_{\ThPB}}; Q_\eqsyn ; I) (\t) \tag{$\dsem{-}= \CatT{Q_{\ThPB}}; Q_\eqsyn ; I$}\\
    &\implies   (\CatT{Q_{\ThPB}}; Q_\eqsyn ) (\s)= (\CatT{Q_{\ThPB}}; Q_\eqsyn )(\t) \tag{$I$ is faithful} \\
    &\Longleftrightarrow \CatT{Q_{\ThPB}}  (\s) \eqsyn \CatT{Q_{\ThPB}}(\t) \tag{Def. $Q_\eqsyn$}\\
    &\Longleftrightarrow \s \dot{\sim} \t \tag{Proposition \ref{prop:twoequivalence}}
  \end{align}
\end{proof}

\begin{proof}[Proof of Corollary \ref{cor:finale}]
  \begin{align}
    \osem{c}=\osem{d} &\Longleftrightarrow  \dsem{\encoding{c}}=\dsem{\encoding{d}} \tag{Proposition \ref{prop:encoding}}\\
    &\Longleftrightarrow \encoding{c} \eqsynbis \encoding{d}. \tag{Proposition \ref{prop:soundnesstapes}, Corollary \ref{cor:completenessPBPtapes}}
  \end{align}
\end{proof}

    \end{document}